\definecolor{Blue}{rgb}{0.1,0.1,0.8}
\newtheorem{theorem}{Theorem}
\newtheorem{lemma}{Lemma}
\theoremstyle{definition}
\newtheorem{definition}{Definition}
\newtheorem{claim}{Claim}
\newtheorem{fact}{Fact}
\newtheorem{example}{Example}
\newcommand{\argmax}{\mathop{\mathrm{arg\,max}}}
\newcommand{\argmin}{\mathop{\mathrm{arg\,min}}}
\newcommand{\X}{\mathbf{X}}
\newcommand{\Pm}{\mathbf{P}}
\newcommand{\bSigma}{\mathbf{\Sigma}}
\newcommand{\Lam}{\mathbf{\Lambda}}
\newcommand{\x}{X}
\newcommand{\W}{\mathbf{W}}
\newcommand{\tr}{\mathrm{tr}}
\title{Differentially Private Covariance Revisited}
\author{
  Wei Dong, Yuting Liang, Ke Yi\\
  \texttt{\{wdongac,yliangbs,yike\}@cse.ust.hk}\\
  Department of Computer Science and Engineering\\
  Hong Kong University of Science and Technology
}
\begin{document}
\maketitle 

\begin{abstract}
In this paper, we present two new algorithms for covariance estimation under concentrated differential privacy (zCDP).  The first algorithm achieves a Frobenius error of $\tilde{O}(d^{1/4}\sqrt{\tr}/\sqrt{n} + \sqrt{d}/n)$, where $\tr$ is the trace of the covariance matrix.  By taking $\tr=1$, this also implies a worst-case error bound of $\tilde{O}(d^{1/4}/\sqrt{n})$, which improves the standard Gaussian mechanism's $\tilde{O}(d/n)$ for the regime $d>\widetilde{\Omega}(n^{2/3})$.  Our second algorithm offers a tail-sensitive bound that could be much better on skewed data.  The corresponding algorithms are also simple and efficient. Experimental results show that they offer significant improvements over prior work.
\end{abstract}

\section{Introduction}
Consider a dataset represented by a matrix $\X\in\mathbb{R}^{d\times n}$, where each column $X_i, i=1,\dots,n$ corresponds to an individual's information.  As standard in the literature, we assume that all the $\x_i$'s live in $\mathcal{B}_d$, the $d$-dimensional $\ell_2$-unit ball  centered at the origin. In this paper, we revisit the  problem of estimating the (empirical) covariance matrix $\mathbf{\Sigma}(\X):=\frac{1}{n}\sum_i \x_i\x_i^T = \frac{1}{n} \X \X^T$ under differential privacy (DP), a fundamental problem in high-dimensional data analytics and machine learning that requires little motivation. We often write $\mathbf{\Sigma}(\X)$ as $\mathbf{\Sigma}$ when the context is clear. As with most prior work, we use the Frobenius norm $\|\widetilde{\mathbf{\Sigma}} - \mathbf{\Sigma}\|_F$ to measure the error of the estimated covariance $\widetilde{\mathbf{\Sigma}}$.  
To better focus, in the introduction we state all results under \textit{concentrated different privacy (zCDP)} \cite{bun2016concentrated}; extensions of our results to pure-DP are given in 
Section \ref{sec:pure_dp}.  

\subsection{A Trace-sensitive Algorithm}
\label{sec:traceintro}
For any symmetric matrix $\mathbf{A}$, we use $\mathbf{P}[\mathbf{A}]$ and $\mathbf{\Lambda}[\mathbf{A}]$ to denote its matrices of eigenvectors and eigenvalues, respectively, such that $\mathbf{A}=\mathbf{P}[\mathbf{A}]\mathbf{\Lambda}[\mathbf{A}]\mathbf{P}[\mathbf{A}]^T$; we use $\lambda_i[\mathbf{A}]$ to denote its $i$th largest eigenvalue.  When $\mathbf{A}=\mathbf{\Sigma}=\mathbf{\Sigma}(\mathbf{X})$, we simply write $\Pm =\Pm[\mathbf{\Sigma}], \mathbf{\Lambda}=\mathbf{\Lambda}[\mathbf{\Sigma}], \lambda_i=\lambda_i[\mathbf{\Sigma}]$, so that $\mathbf{\Lambda} = \mathrm{diag}(\lambda_1,\dotsb,\lambda_d)$ and $\mathbf{\Sigma} = \mathbf{P}\mathbf{\Lambda}\mathbf{P}^T$.  Let $\Pm = [P_1\; P_2\; \dotsb \; P_d]$, where $P_i$ is the orthonormal basis vector corresponding to $\lambda_i$. Rudimentary linear algebra yields $\lambda_k = \frac{1}{n} \sum_i (P_k^T X_i)^2$ for $1\leq k \leq d$ and $||X_i||_2^2 = \sum_k (P_k^T X_i)^2$ for $1\leq i \leq n$. Thus, it follows that 
\begin{equation*}
\label{eq:tr}
\mathrm{tr}[\mathbf{\Sigma}] = \mathrm{tr}[\mathbf{\Lambda}] = \sum_k \lambda_k = \sum_k \frac{1}{n} \sum_i (P_k^T X_i)^2 = \frac{1}{n}\sum_i\sum_k(P_k^T X_i)^2 = \frac{1}{n}\sum_i||X_i||_2^2.
\end{equation*}
That is, $0\le \mathrm{tr}[\mathbf{\Lambda}]\le 1$ is the average $\ell_2$ norm (squared) of the $\x_i$'s, and we simply write it as $\mathrm{tr}$.  

Recall that it is assumed that all the $X_i$'s live in $\mathcal{B}_d$.  In practice, this is enforced by assuming an upper bound $B$ on the norms and scaling down all $X_i$ by $B$. As one often uses a conservatively large $B$, typical values of $\tr$ can be much smaller than $1$, so a trace-sensitive algorithm would be more desirable.  Indeed, Amin et al.~\cite{amin2019differentially} take this approach, describing an algorithm with error\footnote{ We use the $\tilde{O}$ notation to suppress the dependency on the privacy parameters and all polylogarithmic factors. We use $e$ as the base of $\log$ (unless stated otherwise) and define $\log(x) = 1$ for any $x\leq e$.} $\tilde{O}(d^{3/4}\sqrt{\tr}/\sqrt{n} + \sqrt{d}/n)$ under zCDP\footnote{Their paper states the error bound under pure-DP and for estimating $\X\X^T$ (i.e., without normalization by $1/n$); we show how this bound is derived from their result in Appendix \ref{sec:approximate_DP_nips_2019}.}.  Note that the $\sqrt{d}/n$ term inherits from mean estimation and the first term is the ``extra'' difficulty for covariance estimation. In this paper, we improve this term to $d^{1/4}\sqrt{\tr}/\sqrt{n}$ (we have a similar, albeit lesser, improvement under pure-DP; see Section \ref{sec:pure_dp}).  Our algorithm is very simple: We first estimate $\mathbf{\Lambda}$ using the Gaussian mechanism (this is the same as in \cite{amin2019differentially,kapralov2013differentially}), then we estimate $\mathbf{P}$ by doing an eigendecomposition of $\bSigma$ masked with Gaussian noise.  Intuitively, we obtain a $\sqrt{d}$-factor improvement over the iterative methods of \cite{amin2019differentially,kapralov2013differentially}, because we can obtain all eigenvectors from one noisy $\bSigma$, while the iterative methods must allocate the privacy budget to all $d$ eigenvectors.  Our algorithm is also more efficient, performing just two eigendecompositions and one matrix multiplication, whereas the algorithm in \cite{amin2019differentially,kapralov2013differentially} needs $O(d)$ such operations.

\paragraph{Implication to worst-case bounds.}
Covariance matrix has also been studied in the traditional worst-case setting, i.e., the bound should only depend on $d$ and $n$.  Dwork et al.~\cite{dwork2014analyze} show that the $\ell_2$-sensitivity of $\mathbf{\Sigma}$, i.e., $\max_{\X \sim \X'} \|\bSigma(\X)-\bSigma(\X')\|_F$ where $\X\sim \X'$ denotes two neighboring datasets that differ by one column, is $O(1/n)$.  Thus, the standard \textit{Gaussian mechanism} achieves an error of $\tilde{O}(d/n)$ by adding an independent Gaussian noise of scale $\tilde{O}(1/n)$ to each of the $d^2$ entries of $\bSigma$.  By taking $\tr=1$, our trace-sensitive bound degenerates into $\tilde{O}(d^{1/4}/\sqrt{n})$.  Note that the $\sqrt{d}/n$ term is dominated by $d^{1/4}/\sqrt{n}$ for $d<\tilde{O}(n^2)$, which is the parameter regime that allows non-trivial utility (i.e., the error is less than $1$).   

To better understand the situation, it is instructive to compare covariance estimation with mean estimation (where data are also drawn from the $\ell_2$ unit ball and the error is measured in $\ell_2$ norm), as the hardness of covariance estimation lies between $d$-dimensional mean estimation (only estimating the diagonal entries of $\bSigma$) and $d^2$-dimensional mean estimation (treating $\bSigma$ as a $d^2$-dimensional vector). 
This observation implies a lower bound $\widetilde{\Omega}(\sqrt{d}/n)$ following from the same lower bound for mean estimation \cite{KamathLSU19}\footnote{This paper proves the lower bound under the statistical setting; in Appendix~\ref{sec:lower_bound_statistical_to_empirical}, we show how it implies the claimed lower bound under the empirical setting.}, and an upper bound $\tilde{O}(d/n)$ attained by the Gaussian mechanism. For $d< O(\sqrt{n})$, Kasiviswanathan et al.~\cite{kasiviswanathan2010price} prove a higher lower bound\footnote{Their lower bound is under approximate-DP, which also holds under zCDP.} $\tilde{\Omega}(d/n)$, which means that the complexity of covariance estimation is same as $d^2$-dimensional mean estimation in the low-dimensional regime, so one cannot hope to beat the Gaussian mechanism for small $d$. However, in the high-dimensional regime, our result indicates that the covariance problem is strictly easier, due to the correlations of the $d^2$ entries of $\bSigma$. 
Another interesting consequence is that our error bound has utility for $d$ up to $\tilde{O}(n^2)$ (utility is lost when the error is $\tilde{O}(1)$, as returning a zero matrix can already achieve this error). This is the highest $d$ that allows for any utility, since even mean estimation requires $d <\tilde{O}(n^2)$ to have utility under zCDP \cite{KamathLSU19,bun2016concentrated}.  We pictorially show the currently known (worst-case) upper and lower bounds in Figure~\ref{fig:error_bounds}. It remains an interesting open problem to close the gap for $\widetilde{\Omega}(\sqrt{n})<d<\tilde{O}(n^2)$.

\begin{figure}[tbp]
\includegraphics[width=1\textwidth]{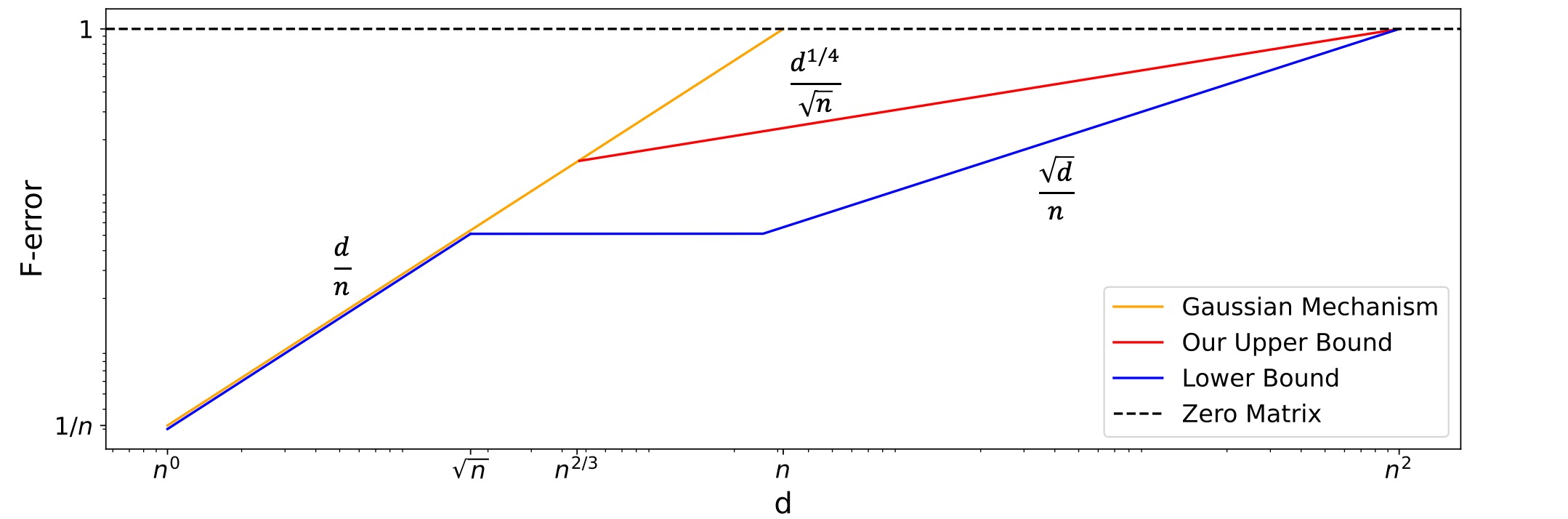}
    \caption{Currently known worst-case error bounds (both axes are in log scale).}
\label{fig:error_bounds}
\end{figure}

%Another type of closely related works is for 2-way marginal queries~\cite{dwork2015efficient,thaler2012faster,chandrasekaran2014faster,cheraghchi2012submodular,gupta2011privately,hardt2012private}, which can be regarded as the covariance estimation with data from $\{-1/\sqrt{d},1/\sqrt{d}\}^d$. 
Through private communication with Aleksandar Nikolov, it is observed that the \textit{projection mechanism}~\cite{nikolov2013geometry,dwork2015efficient} can also be shown to have error $\tilde{O}(d^{1/4}/\sqrt{n})$ when applied to the covariance problem. In Appendix~\ref{sec:projection}, we make this connection more explicit, while also giving an efficient implementation.
%(the generic projection mechanism has a high, though still polynomial, computational cost).  
However, the projection mechanism is not trace-sensitive.  

\subsection{A Tail-sensitive Algorithm}
A trace-sensitive bound only makes use of the average $\ell_2$ norm, which cannot capture the full distribution.  Next, we design an algorithm with an error bound that more closely depends on the distribution of the norms.  We characterize this distribution using the \textit{$\tau$-tail} ($\mathbb{I}(\cdot)$ is the indicator function):
\begin{equation}
\label{eq:tail}
\gamma(\X,\tau) = \frac{1}{n}\sum_{i}\|\x_i\|_2^2 \cdot \mathbb{I}(\|\x_i\|_2 > \tau), \tau \in [0,1].
\end{equation}
Note that $\gamma(\X,\tau)$ decreases as $\tau$ increases. In particular, $\gamma(\X, 0)=\mathrm{tr}$, $\gamma(\X,1)=0$.

A common technique to reduce noise, at the expense of some bias, is to clip all the $X_i$'s so that they have norms at most $\tau$, for some threshold $\tau$.  This yields an error of $\mathrm{Noise}(\X, \tau) + \gamma(\X, \tau)$, where $\mathrm{Noise}(\X, \tau)$ denotes the error bound of the mechanism when all the $X_i$'s have norm bounded by $\tau$, and $\gamma(\X,\tau)$ is the (additional) bias caused by clipping.  Opting for the better of the Gaussian mechanism or our trace-sensitive mechanism, we have 
\begin{equation}
\label{eq:error}
\mathrm{Noise}(\X, \tau) = \tilde{O}\left(\min\left(\frac{\tau^2d}{n}, \frac{\tau d^{1/4}\sqrt{\mathrm{tr}}}{\sqrt{n}}+ \frac{\tau^2 \sqrt{d}}{n}\right)\right). 
\end{equation}

The technical challenge is therefore choosing a good $\tau$ in a differentially private manner.  We design a DP mechanism to choose the optimal $\tau$ up to a polylogarithmic multiplicative factor and an exponentially small additive term.  It also adaptively selects the better of Gaussian mechanism or the trace-sensitive mechanism depending on the relationship between $d, n$, and a privatized $\tr$.  More precisely, our adaptive mechanism achieves an error of 
\begin{equation}
\label{eq:ada}
\tilde{O}\left( \min_{\tau}\left(\mathrm{Noise}(\X, \tau) + \gamma(\X, \tau) \right) +2^{-dn}\right).
\end{equation}
Note that this tail-sensitive bound is always no worse (modulo the $2^{-dn}$ term) than $\mathrm{Noise}(\X, 1)$ (i.e., without clipping), and can be much better for certain norm distributions.  In particular, the tail-sensitive bound would work very well on many real datasets with skewed distributions, e.g., most data vectors have small norms with a few having large norms.  For example, suppose $d=n^{3/4}$, and a constant number of data vectors have $\ell_2$ norm $1$ while the others have norm $n^{-1/4}$. Then $\sqrt{\tr} = \Theta(n^{-1/4})$, so $\mathrm{Noise}(\X, 1)$ takes the trace-sensitive bound, which is $\tilde{O}(n^{-9/16})$.  On the other hand, \eqref{eq:ada} is at most $\tilde{O}(n^{-13/16})$ by taking $\tau = n^{-1/4}$.  

\section{Related Work}
\label{sec:related_work}

Mean estimation and covariance estimation are perhaps the most fundamental problems in statistics and machine learning, and how to obtain the best estimates while respecting individual's privacy has attracted a lot of attention in recent years.  Mean estimation under differential privacy is now relatively well understood, with the optimal worst-case error being $\tilde{\Theta}(\sqrt{d}/n)$ \cite{KamathLSU19}, achieved by the standard Gaussian mechanism \cite{dwork2014analyze}.  
In contrast, the covariance problem is more elusive.  As indicated in Figure \ref{fig:error_bounds}, its complexity is probably a piecewise linear (in the log-log scale) function. 

When most data have norms much smaller than the upper bound given \textit{a priori}, the worst-case bounds above are no longer optimal.  In these cases, it is more desirable to have an error bound that is instance-specific. Clipping is a common technique for mean estimation \cite{amin2019bounding,huang2021instance,andrew2021differentially,pichapati2019adaclip,mcmahan2017learning} and it is known that running the Gaussian mechanism after clipping $\X$ with a certain quantile of the norms of the $X_i$'s achieves instance-optimality in a certain sense \cite{amin2019bounding,huang2021instance}.  However, for covariance estimation, we show in Section \ref{sec:clip_cov} that no quantile can be the optimal clipping threshold achieving the bound in \eqref{eq:ada}.  Nevertheless, the bound in \eqref{eq:ada} is only achieving the optimal clipping threshold; we cannot say that is instance-optimal, since $\mathrm{Noise}(\cdot)$ is not even known to be worst-case optimal.

Closely related to covariance estimation are the PCA problem and low-rank approximation.  Instead of finding all eigenvalues and eigenvectors, they only aim at finding the largest one or a few.  For these problems, iterative methods \cite{amin2019differentially,kapralov2013differentially,upadhyay2018price,dwork2014analyze,chaudhuri2013near,singhal2021privately} should perform better than the Gaussian mechanism or our algorithm, both of which try to recover the full covariance matrix.

Many covariance estimation algorithms have been proposed under the statistical setting, where the $X_i$'s are i.i.d.\ samples drawn from a certain distribution, e.g., a multivariate Gaussian \cite{KamathLSU19,bun2019private,biswas2020coinpress,aden2021sample,kamath2022private,liu2022differential,ashtiani2022private,kothari2022private}.  Instead of the Frobenius error, many of them adopt the Mahalanobis error $\|\widetilde{\bSigma} - \bSigma\|_{\bSigma} := \|\bSigma^{-1/2} \widetilde{\bSigma} \bSigma^{-1/2} - \mathbf{I}\|_F$, which can be considered as a normalized version of the former. It is known that $\lambda_d\|\mathbf{A}-\bSigma\|_{\bSigma} \leq \|\mathbf{A}-\bSigma\|_F \leq {\lambda_1}\|\mathbf{A}-\bSigma\|_{\bSigma}$, so when $\bSigma_{\mathbb{D}}$ is well-conditioned, i.e., $\lambda_1/\lambda_d = O(1)$, any Frobenius error directly translates to a Mahalanobis error.  However, for the Mahalanobis error, the more challenging question is how to deal with an ill-conditioned $\bSigma$, for which  \cite{KamathLSU19,biswas2020coinpress} have provided elegant solutions for the case where $\mathbb{D}$ is a multivariate Gaussian.  It would be interesting to see if their methods can be combined with the tail-sensitive techniques in this paper to solve this problem for other distribution families, in particular, heavy-tailed distributions.  For the lower bound, very recently, Kamath et al.~\cite{kamath2022new} proved a similar lower bound for the low-dimensional regime as in \cite{kasiviswanathan2010price} but under the statistical setting.

\section{Preliminaries}
\label{sec:pre}

\subsection{Differential Privacy}
\label{sec:dp}

We say that $\X,\X'\in \mathbb{R}^{d\times n}$ are neighbors if they differ by one column, denoted  $\X\sim \X'$.

\begin{definition}[Differential Privacy (DP) \cite{dwork2006calibrating}]
\label{df:dp}
For $\varepsilon > 0$ and $\delta \ge 0$, a randomized mechanism $\mathcal{M}:\mathbb{R}^{d\times n} \rightarrow \mathcal{Y}$ satisfies $(\varepsilon,\delta)$-DP if for any $\X \sim \X'$ and any $\mathcal{S}\subseteq \mathcal{Y}$, $\Pr[\mathcal{M}(\X)\in \mathcal{S}]\leq e^\varepsilon\cdot \Pr[\mathcal{M}(\X')\in \mathcal{S}]+\delta$.

In particular, we call it \textit{pure-DP} if $\delta=0$; otherwise \textit{approximate-DP}.
\end{definition}

\begin{definition}
[Concentrated Differential Privacy (zCDP) \cite{bun2016concentrated}]
\label{df:cdp}
For $\rho > 0$, a randomized mechanism $M : \mathbb{R}^{d\times n} \rightarrow \mathcal{Y}$ satisfies $\rho$-zCDP if for any $\X \sim \X'$, $D_{\alpha} \left(\mathcal{M}(\X) || \mathcal{M}(\X')\right) \le \rho \cdot \alpha$ for all $\alpha > 1$, where $D_{\alpha}\left(\mathcal{M}(\X) || \mathcal{M}(\X')\right)$ is the $\alpha$-R\'enyi divergence between $\mathcal{M}(\X)$ and $\mathcal{M}(\X')$. 
\end{definition}

The relationship between these DP definitions is as follows. Pure-DP, also written as $\varepsilon$-DP, implies $\frac{\varepsilon^2}{2}$-zCDP, which further implies $\left(\frac{\varepsilon^2}{2} + \varepsilon \sqrt{2 \log \frac{1}{\delta}}, \delta\right)$-DP for any $\delta>0$. 

To preserve $\varepsilon$-DP for a query $Q$, a standard mechanism is to add independent Laplace noises with scale proportional to the (global) $\ell_1$-sensitivity of $Q$ to each dimension.

\begin{lemma}[Laplace Mechanism \cite{dwork2006calibrating}]
\label{lm:lap}
Given $Q: \mathbb{R}^{d\times n} \rightarrow \mathbb{R}^{k}$, let $\mathrm{GS}_Q:=\max_{\X \sim \X'} \| Q(\X) - Q(\X') \|_1$. The mechanism $\mathcal{M}(\X) = Q(\X) + \frac{\mathrm{GS}_Q}{\varepsilon}\cdot \mathbf{Y}$ where $\mathbf{Y}\sim\mathrm{Lap}(1)^k$, preserves $\varepsilon$-DP.
\end{lemma}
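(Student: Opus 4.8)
The plan is to carry out the standard pointwise density comparison and then integrate. Write $b := \mathrm{GS}_Q/\varepsilon$ for the noise scale. Since the coordinates of $\mathbf{Y}$ are i.i.d.\ $\mathrm{Lap}(1)$, for any fixed dataset the output $\mathcal{M}(\X)$ is a continuous random vector on $\mathbb{R}^k$ with density
\[
p_{\X}(z) \;=\; \prod_{j=1}^{k} \frac{1}{2b}\exp\!\left(-\frac{|z_j - Q(\X)_j|}{b}\right) \;=\; (2b)^{-k}\exp\!\left(-\frac{\|z-Q(\X)\|_1}{b}\right),
\]
and likewise $p_{\X'}(z) = (2b)^{-k}\exp(-\|z-Q(\X')\|_1/b)$ for a neighbor $\X' \sim \X$. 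The first step is therefore to bound the ratio $p_{\X}(z)/p_{\X'}(z)$ uniformly in $z$.

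Taking the quotient, the normalizing constants cancel and $p_{\X}(z)/p_{\X'}(z) = \exp\big((\|z-Q(\X')\|_1 - \|z-Q(\X)\|_1)/b\big)$. By the triangle inequality in the $\ell_1$ norm, $\|z-Q(\X')\|_1 \le \|z-Q(\X)\|_1 + \|Q(\X)-Q(\X')\|_1$, so the exponent is at most $\|Q(\X)-Q(\X')\|_1/b$. Since $\X \sim \X'$, the definition of $\mathrm{GS}_Q$ gives $\|Q(\X)-Q(\X')\|_1 \le \mathrm{GS}_Q$, hence the exponent is at most $\mathrm{GS}_Q/b = \varepsilon$. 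This yields $p_{\X}(z) \le e^{\varepsilon}\, p_{\X'}(z)$ for every $z \in \mathbb{R}^k$. Integrating this inequality over an arbitrary measurable $\mathcal{S} \subseteq \mathcal{Y}$ gives $\Pr[\mathcal{M}(\X) \in \mathcal{S}] = \int_{\mathcal{S}} p_{\X}(z)\,dz \le e^{\varepsilon}\int_{\mathcal{S}} p_{\X'}(z)\,dz = e^{\varepsilon}\Pr[\mathcal{M}(\X') \in \mathcal{S}]$, which is exactly $(\varepsilon,0)$-DP, i.e.\ pure-DP.

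There is no real obstacle here; the only points that require a little care are (i) that the bound on the exponent must use the worst-case value $\mathrm{GS}_Q$, which is why the max over all neighboring pairs is built into the definition, and (ii) that working with a pointwise density bound immediately covers all measurable output events, so no separate argument for general $\mathcal{S}$ is needed. If one prefers to state things for a restricted output space $\mathcal{Y}\subsetneq \mathbb{R}^k$, the same densities (restricted to $\mathcal{Y}$) and the same inequality go through verbatim.
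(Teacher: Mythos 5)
Your proof is correct and is the standard textbook argument for the Laplace mechanism: compare the output densities pointwise, bound the log-ratio by the $\ell_1$-triangle inequality and the global sensitivity, then integrate over arbitrary measurable events. The paper does not supply a proof of this lemma (it is cited directly from Dwork, McSherry, Nissim, and Smith), and the argument you give matches the one in that reference, so there is nothing further to compare.
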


The following composition property of $\varepsilon$-DP allows us to design algorithms in a modular fashion.

\begin{lemma}
[Basic Composition]
\label{lm:composition}
If $\mathcal{M}$ is an adaptive composition of mechanisms $\mathcal{M}_1, \mathcal{M}_2, \ldots, \mathcal{M}_t$, where each $\mathcal{M}_i$ satisfies $\varepsilon_i$-DP, then $\mathcal{M}$ satisfies $(\sum_i \varepsilon_i)$-DP.
\end{lemma}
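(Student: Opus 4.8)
The plan is to reduce the set-based definition of pure-DP to a pointwise likelihood-ratio condition, and then to show that the joint output density of an adaptive composition factorizes across the component mechanisms so that the per-mechanism ratio bounds simply multiply. First I would record the standard equivalence: a mechanism $\mathcal{N}$ satisfies $\varepsilon$-DP if and only if for every pair of neighbors $\X \sim \X'$ and (almost) every output $y$ in the range, the likelihood ratio $\Pr[\mathcal{N}(\X)=y]/\Pr[\mathcal{N}(\X')=y] \le e^\varepsilon$, where in the continuous setting ``$\Pr[\cdot=y]$'' is read as a density against a common dominating measure. The ``only if'' direction follows by taking $\mathcal{S}$ to be a shrinking neighborhood of $y$, and the ``if'' direction follows by integrating the ratio bound over an arbitrary measurable $\mathcal{S}$; it is exactly the absence of the $\delta$ term that makes this two-way reduction clean.

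Next I would set up adaptive composition precisely: $\mathcal{M}$ outputs $(Y_1,\dots,Y_t)$, where $Y_i$ is produced by $\mathcal{M}_i$ run on $\X$ with access to the earlier outputs $y_{<i} := (Y_1,\dots,Y_{i-1})$, each $\mathcal{M}_i$ using fresh independent internal randomness. Then for any fixed prefix $y_{<i}$, the map $\X \mapsto \mathcal{M}_i(\X; y_{<i})$ is itself an $\varepsilon_i$-DP mechanism, so by the pointwise characterization its conditional density satisfies $p_i(y_i \mid y_{<i}; \X) \le e^{\varepsilon_i}\, p_i(y_i \mid y_{<i}; \X')$ for all $y_i$. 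By the chain rule the joint density of $\mathcal{M}(\X)$ factorizes as $p(\mathbf{y};\X) = \prod_{i=1}^t p_i(y_i \mid y_{<i};\X)$, and dividing the $\X$ and $\X'$ versions term by term gives $p(\mathbf{y};\X)/p(\mathbf{y};\X') \le \prod_{i=1}^t e^{\varepsilon_i} = e^{\sum_i \varepsilon_i}$. Feeding this bound back into the ``if'' direction of the pointwise characterization yields that $\mathcal{M}$ is $(\sum_i \varepsilon_i)$-DP. One can equivalently phrase this as an induction on $t$, composing $\mathcal{M}_{1:t-1}$ (which is $(\sum_{i<t}\varepsilon_i)$-DP by hypothesis) with $\mathcal{M}_t$, but the direct factorization argument is shorter.

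The main obstacle is bookkeeping rather than any hard estimate: one must justify the chain-rule factorization of the joint density and the simultaneous existence of the conditional densities $p_i(\cdot \mid y_{<i};\X)$ against a common dominating measure for all $\X$, together with the fact that each $\mathcal{M}_i$'s coins are independent of those used earlier. The painless route, which is all that is needed here, is to assume the output ranges are countable, so that every ``density'' is an honest probability mass and the chain rule is elementary; in the general continuous case one invokes a standard disintegration / regular-conditional-probability argument. I would state the modeling assumption explicitly rather than dwell on the measure-theoretic technicalities.
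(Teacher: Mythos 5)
The paper states Lemma~\ref{lm:composition} as a known background fact and does not include a proof, so there is no in-paper argument to compare yours against. Your proof is correct and follows the standard textbook route for pure-DP composition: reduce the set-based guarantee to a pointwise likelihood-ratio bound (valid precisely because $\delta = 0$), observe that for every fixed prefix $y_{<i}$ the map $\X \mapsto \mathcal{M}_i(\X; y_{<i})$ is $\varepsilon_i$-DP, factor the joint density by the chain rule, and multiply the per-step ratio bounds. Your explicit flagging of the measure-theoretic hypotheses (common dominating measure, existence of regular conditional densities, or simply assuming countable output ranges) is exactly the right amount of care for this statement; nothing is missing and no step would fail.
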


For $\rho$-zCDP, the standard method is the \textit{Gaussian mechanism}:

\begin{lemma}[Gaussian Mechanism \cite{bun2016concentrated}]
\label{lm:gauss}
Given $Q: \mathbb{R}^{d\times n} \rightarrow \mathbb{R}^{k}$, let $\mathrm{GS}_Q:=\max_{\X \sim \X'} \| Q(\X) - Q(\X') \|_2$. The mechanism $\mathcal{M}(\X) = Q(\X) + \frac{\mathrm{GS}_Q}{\sqrt{2\rho}}\cdot\mathbf{Y}$ where $\mathbf{Y}\sim\mathcal{N}\left(0,\mathbf{I}_{k\times k}\right)$, preserves $\rho$-zCDP.
\end{lemma}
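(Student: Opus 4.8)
The plan is to bound the $\alpha$-R\'enyi divergence $D_\alpha(\mathcal{M}(\X)\,\|\,\mathcal{M}(\X'))$ directly by evaluating it in closed form. Fix neighboring $\X \sim \X'$ and write $\mu = Q(\X)$, $\mu' = Q(\X')$, and $\sigma = \mathrm{GS}_Q/\sqrt{2\rho}$. Then $\mathcal{M}(\X)$ and $\mathcal{M}(\X')$ are the isotropic Gaussians $\mathcal{N}(\mu,\sigma^2\I_{k\times k})$ and $\mathcal{N}(\mu',\sigma^2\I_{k\times k})$, which share the same covariance and differ only in their means. (If $\mathrm{GS}_Q = 0$ the two output distributions coincide and the divergence is $0$, so we may assume $\mathrm{GS}_Q > 0$.) Thus it suffices to compute the R\'enyi divergence between two $k$-dimensional isotropic Gaussians with common covariance $\sigma^2\I_{k\times k}$.

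First I would reduce to one dimension. Since $\mathcal{N}(0,\sigma^2\I_{k\times k})$ is rotation-invariant and R\'enyi divergence is invariant under a common invertible change of variables (the Jacobians cancel in the defining integral), we may rotate coordinates so that $\mu - \mu'$ is parallel to the first axis. The two densities then factor as products over the $k$ coordinates; all but the first coordinate contribute identical one-dimensional factors, whose divergence is $0$, so $D_\alpha(\mathcal{N}(\mu,\sigma^2\I_{k\times k})\,\|\,\mathcal{N}(\mu',\sigma^2\I_{k\times k})) = D_\alpha(\mathcal{N}(v,\sigma^2)\,\|\,\mathcal{N}(0,\sigma^2))$, where $v := \|\mu - \mu'\|_2$.

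Next I would carry out the one-dimensional computation. Using $D_\alpha(P\,\|\,Q) = \frac{1}{\alpha-1}\log\int p(x)^\alpha q(x)^{1-\alpha}\,dx$ with $p,q$ the two Gaussian densities of variance $\sigma^2$, the exponent $\alpha(x-v)^2 + (1-\alpha)x^2$ appearing in the integrand simplifies, upon completing the square, to $(x-\alpha v)^2 + \alpha(1-\alpha)v^2$. The residual integral is a standard Gaussian integral equal to $1$ after normalization, so $\int p^\alpha q^{1-\alpha}\,dx = \exp\!\big(-\alpha(1-\alpha)v^2/(2\sigma^2)\big)$ and hence $D_\alpha = \frac{\alpha v^2}{2\sigma^2}$.

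Finally I would substitute the sensitivity bound and the chosen noise scale: since $v = \|Q(\X) - Q(\X')\|_2 \le \mathrm{GS}_Q$ and $\sigma^2 = \mathrm{GS}_Q^2/(2\rho)$, we obtain $D_\alpha(\mathcal{M}(\X)\,\|\,\mathcal{M}(\X')) \le \frac{\alpha\,\mathrm{GS}_Q^2}{2\sigma^2} = \alpha\rho$, valid for every $\alpha > 1$ and every $\X \sim \X'$, which is exactly $\rho$-zCDP. The only mildly delicate points are making the dimension reduction rigorous (rotation-invariance of the isotropic Gaussian together with change-of-variables invariance of R\'enyi divergence) and disposing of the degenerate $\mathrm{GS}_Q = 0$ case; everything else is a routine Gaussian integral.
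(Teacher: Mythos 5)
The paper does not prove this lemma; it cites it directly from Bun and Steinke's zCDP paper \cite{bun2016concentrated}. Your argument is correct and is in fact the canonical proof given there: reduce to one dimension via rotation invariance, compute the closed-form Rényi divergence $D_\alpha(\mathcal{N}(v,\sigma^2)\,\|\,\mathcal{N}(0,\sigma^2)) = \alpha v^2/(2\sigma^2)$ between shifted isotropic Gaussians, then substitute $v \le \mathrm{GS}_Q$ and $\sigma^2 = \mathrm{GS}_Q^2/(2\rho)$ to get the bound $\alpha\rho$. All the algebra checks out, including the completion of the square yielding the cross term $\alpha(1-\alpha)v^2$.
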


It has been shown that the covariance matrix has an $\ell_2$-sensitivity of $\frac{\sqrt{2}}{n}$ \cite{biswas2020coinpress}. Thus, the Gaussian mechanism for covariance, denoted $\mathrm{GaussCov}$, simply adds an independent Gaussian noise with scale ${1 \over \sqrt{\rho} n}$ to each entry of $\bSigma$.  Considering that $\bSigma$ is symmetric, symmetric noises also suffice, which preserve the symmetry of the privatized $\bSigma$.  More precisely, we draw a random noise matrix $\W$ where $w_{j,k}\sim\mathcal{N}(0,1)$ i.i.d. for $1\leq j \leq k \leq d$ and $w_{k,j} = w_{j,k}$, denoted as $\W \sim \mathrm{SGW}(d)$.  Then $\mathrm{GaussCov}$ outputs $\widetilde{\bSigma}_{\mathrm{Gau}} = \bSigma + {1\over \sqrt{\rho} n} \cdot \W$.

A similar composition property exists for $\rho$-zCDP.

\begin{lemma}
[Composition Theorem \cite{bun2016concentrated}]
\label{lm:gau_composition}
If $\mathcal{M}$ is an adaptive composition of algorithms $\mathcal{M}_1, \mathcal{M}_2, \ldots, \mathcal{M}_t$, where each $\mathcal{M}_i$ satisfies $\rho_i$-zCDP, then $\mathcal{M}$ satisfies $(\sum_i \rho_i)$-zCDP.
\end{lemma}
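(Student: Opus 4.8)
The plan is to derive the whole statement from a single structural property of R\'enyi divergence — that it is \emph{additive under adaptive composition} — and then induct on $t$. Recall that for distributions $P\ll Q$ one has $\exp\big((\alpha-1)D_\alpha(P\|Q)\big)=\E_{\vec{y}\sim Q}\big[(dP/dQ)^\alpha\big]$ for $\alpha>1$. Recall also what ``adaptive composition'' means here: $\mathcal{M}=(\mathcal{M}_1,\dots,\mathcal{M}_t)$ outputs a tuple $(y_1,\dots,y_t)$ such that the conditional law of $y_i$ given the prefix $\vec{y}_{<i}=(y_1,\dots,y_{i-1})$ is the law of $\mathcal{M}_i(\X;\vec{y}_{<i})$, and the hypothesis ``$\mathcal{M}_i$ satisfies $\rho_i$-zCDP'' is understood uniformly, i.e.\ for \emph{every} fixed prefix $\vec{y}_{<i}$ the map $\X\mapsto\mathcal{M}_i(\X;\vec{y}_{<i})$ is $\rho_i$-zCDP.

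First I would fix a pair of neighbors $\X\sim\X'$ and an order $\alpha>1$, and write $P,Q$ for the laws of $\mathcal{M}(\X),\mathcal{M}(\X')$ and $P^{<t},Q^{<t}$ for the laws of the first $t-1$ outputs. Factoring each joint density into (prefix density) $\times$ (conditional density of the last output) and integrating the $\alpha$-moment over the last coordinate first gives the key identity
\[
\begin{aligned}
\exp\big((\alpha-1)D_\alpha(P\|Q)\big) &= \int \Big(\tfrac{dP^{<t}}{dQ^{<t}}(\vec{y})\Big)^{\!\alpha}\, \exp\!\big((\alpha-1)\,D_\alpha(\mathcal{M}_t(\X;\vec{y})\,\|\,\mathcal{M}_t(\X';\vec{y}))\big)\, dQ^{<t}(\vec{y})\\
&\le \exp\big((\alpha-1)\rho_t\alpha\big)\cdot \exp\big((\alpha-1)D_\alpha(P^{<t}\|Q^{<t})\big),
\end{aligned}
\]
where the inequality uses $D_\alpha(\mathcal{M}_t(\X;\vec{y})\|\mathcal{M}_t(\X';\vec{y}))\le\rho_t\alpha$ for every $\vec{y}$ (the $\rho_t$-zCDP of the last mechanism, applied with auxiliary input $\vec{y}$) together with $\alpha-1>0$ to pull the constant out of the integral. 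Taking logarithms and dividing by $\alpha-1>0$ yields $D_\alpha(P\|Q)\le\rho_t\alpha+D_\alpha(P^{<t}\|Q^{<t})$.

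Now I would close the induction: the base case $t=1$ is exactly the definition of $\rho_1$-zCDP; for the inductive step, $(\mathcal{M}_1,\dots,\mathcal{M}_{t-1})$ is itself an adaptive composition of $\rho_1,\dots,\rho_{t-1}$-zCDP mechanisms, so by the induction hypothesis $D_\alpha(P^{<t}\|Q^{<t})\le\big(\sum_{i<t}\rho_i\big)\alpha$, and the displayed bound gives $D_\alpha(P\|Q)\le\big(\sum_{i\le t}\rho_i\big)\alpha$. Since $\X\sim\X'$ and $\alpha>1$ were arbitrary, $\mathcal{M}$ is $\big(\sum_i\rho_i\big)$-zCDP.

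The hard part is not the algebra but making the density-factorization step rigorous when the $\mathcal{M}_i$ output in continuous or mixed spaces: one needs regular conditional distributions to even define $\mathcal{M}_i(\X;\vec{y}_{<i})$, mutual absolute continuity $P^{<t}\ll Q^{<t}$ and $\mathcal{M}_t(\X;\vec{y})\ll\mathcal{M}_t(\X';\vec{y})$ (which follows because finite-$\rho$ zCDP forces $D_\alpha<\infty$ as $\alpha\downarrow1$), and Tonelli to justify integrating the last coordinate first. The other subtlety worth stating explicitly is the uniformity built into the definition of adaptive composition — the bound $D_\alpha(\mathcal{M}_t(\X;\vec{y})\|\mathcal{M}_t(\X';\vec{y}))\le\rho_t\alpha$ must hold for \emph{all} $\vec{y}$, not merely on average — since that is precisely what licenses pulling $\exp((\alpha-1)\rho_t\alpha)$ outside the integral.
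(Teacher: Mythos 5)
The paper does not prove this lemma; it is cited verbatim from Bun and Steinke \cite{bun2016concentrated}. Your proof is correct and reconstructs essentially the same argument that appears in that reference (their adaptive-composition lemma for R\'enyi divergence): factor the joint density through the last coordinate, integrate that coordinate first via Tonelli, bound the inner integral uniformly over prefixes by $\exp((\alpha-1)\rho_t\alpha)$ using the per-prefix zCDP guarantee of $\mathcal{M}_t$, and induct. You also flag the two genuine technical points correctly — the need for regular conditionals and mutual absolute continuity to make the Radon--Nikodym factorization meaningful, and the fact that the zCDP bound on $\mathcal{M}_t$ must hold uniformly in the prefix, not merely on average, so that the constant can be pulled out of the integral. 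No gaps.
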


\begin{algorithm}[t]
\caption{$\mathrm{SVT}$}
\label{alg:svt}
\begin{algorithmic}[1]
    \REQUIRE query sequence $(f_1(\X),\dots,f_t(\X))$; threshold $T$; privacy parameter $\varepsilon>0$.
    
    \ENSURE index $k$ 
    
    \STATE $\widetilde{T} \gets T+\mathrm{Lap}(2/\varepsilon)$
    
    \STATE \algorithmicfor \ $k\gets 1,2,\dots,t$
    
    \STATE \hspace{\algorithmicindent} \algorithmicif \ $f_{k}(\X)+\mathrm{Lap}(4/\varepsilon)\geq\widetilde{T}$
    
    \STATE \hspace{\algorithmicindent} \hspace{\algorithmicindent} \textbf{return} $k$\;
    
    \RETURN{$t+1$}
\end{algorithmic}
\end{algorithm}

\subsection{The Sparse Vector Technique}
\label{sec:svt}

The \textit{Sparse Vector Technique} ($\mathrm{SVT}$)~\cite{dwork2009complexity} has as input a sequence of scalar queries, $f_1(\X)\,f_2(\X),\\\dots, f_t(\X)$, where each has sensitivity $1$, and a threshold $T$. It aims to find the first query (if there is) whose answer is approximately above $T$. See Algorithm~\ref{alg:svt} for the details. The $\mathrm{SVT}$ has been shown to satisfy $\varepsilon$-DP with following utility guarantee.

\begin{lemma}[Extension of Theorem 3.24 in \cite{dwork2014algorithmic}]
\label{lm:err_SVT}
With probability at least $1-\beta$, $\mathrm{SVT}$ returns a $k$ such that, for any $i<k$, $f_i(\X) \leq T+\frac{6}{\varepsilon}\log(2t/\beta)$, and if $k\neq t+1$, then $f_{k}(\X)\geq T-\frac{6}{\varepsilon}\log(2t/\beta)$.
\end{lemma}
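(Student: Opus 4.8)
The plan is to condition on a single high-probability event that simultaneously controls all the Laplace noises drawn by $\mathrm{SVT}$, and then verify both claimed inequalities deterministically on that event. Write $\nu := \widetilde{T} - T \sim \mathrm{Lap}(2/\varepsilon)$ for the noise added to the threshold in Line~1, and for each $k=1,\dots,t$ write $\eta_k \sim \mathrm{Lap}(4/\varepsilon)$ for the noise added to $f_k(\X)$ in Line~3; these $t+1$ variables are mutually independent. The key observation is that the branch decisions are exactly expressible in terms of these noises: ``$\mathrm{SVT}$ does not return at step $i$'' is the event $f_i(\X)+\eta_i < \widetilde{T}$, i.e.\ $f_i(\X) < T + \nu - \eta_i$, while ``$\mathrm{SVT}$ returns at step $k$'' (for $k \le t$) is $f_k(\X)+\eta_k \ge \widetilde{T}$, i.e.\ $f_k(\X) \ge T + \nu - \eta_k$.

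Next I would bound the noises using the Laplace tail bound $\Pr[|\mathrm{Lap}(b)| \ge b\ln(1/\beta')] = \beta'$. Taking $\beta' = \beta/2$ for $\nu$ gives $|\nu| \le \frac{2}{\varepsilon}\ln(2/\beta) \le \frac{2}{\varepsilon}\log(2t/\beta)$ with probability at least $1-\beta/2$. Taking $\beta' = \beta/(2t)$ for each $\eta_k$ gives $\Pr[|\eta_k| > \frac{4}{\varepsilon}\log(2t/\beta)] \le \beta/(2t)$, so a union bound over the $t$ queries shows that with probability at least $1-\beta/2$ all of $|\eta_1|,\dots,|\eta_t|$ are at most $\frac{4}{\varepsilon}\log(2t/\beta)$. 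Intersecting the two events, with probability at least $1-\beta$ we have $|\nu| + |\eta_k| \le \frac{6}{\varepsilon}\log(2t/\beta) =: \alpha$ for every $k$.

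On this good event the conclusion is immediate. For any $i < k$, the algorithm did not stop at $i$, so $f_i(\X) < T + \nu - \eta_i \le T + |\nu| + |\eta_i| \le T + \alpha$. If $k \ne t+1$, the algorithm stopped at $k$, so $f_k(\X) \ge T + \nu - \eta_k \ge T - |\nu| - |\eta_k| \ge T - \alpha$; and if $k = t+1$ there is no lower-bound requirement and only the first inequality (applied to all $i \le t$) is needed.

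There is no genuine obstacle here — the statement is a routine application of Laplace concentration plus a union bound — but two points deserve care: (i) the union bound must be over all $t$ potential queries rather than only those preceding $k$, since $k$ is itself random; this is why the per-query failure probability is set to $\beta/(2t)$ and the final bound carries the $\log(2t/\beta)$ factor; and (ii) the threshold noise $\nu$ and the query noise $\eta_k$ must be kept as separate contributions, since in the worst case they add with the same sign, which produces the constant $2+4=6$ in $\alpha$. The only way this extends the textbook statement (Theorem~3.24 in \cite{dwork2014algorithmic}) is that we additionally record the lower bound $f_k(\X) \ge T-\alpha$ at the returned index, which falls out of the very same good event.
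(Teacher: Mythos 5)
Your proof is correct and is the standard argument for SVT accuracy: condition on the Laplace tail events for the threshold noise $\nu$ and all $t$ query noises $\eta_k$, union-bound, and then read the two inequalities directly off the branch conditions $f_i(\X) + \eta_i < T + \nu$ and $f_k(\X) + \eta_k \ge T + \nu$. The paper does not spell out a proof of Lemma~\ref{lm:err_SVT} in the main text, presenting it instead as an extension of Theorem~3.24 of Dwork--Roth; the extension is exactly what you identify, namely adding the lower bound $f_k(\X) \ge T - \frac{6}{\varepsilon}\log(2t/\beta)$ on the returned index, which comes for free from the same good event. One small remark: your bound $|\nu| \le \frac{2}{\varepsilon}\log(2/\beta) \le \frac{2}{\varepsilon}\log(2t/\beta)$ silently uses $t \ge 1$; that holds here since the query sequence in Algorithm~\ref{alg:svt} is nonempty, so the step is fine, but it is worth stating.
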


\subsection{Concentration Inequalities}
\label{sec:concentration_inequality}
\begin{lemma}[\cite{laurent2000adaptive}]
\label{lm:upper_bound_noise}
Given $\mathbf{Y}\sim\mathcal{N}\left(0,\mathbf{I}_{d\times d}\right)$, with probability at least $1-\beta$, 
\[\|\mathbf{Y}\|_2 \leq \eta(d,\beta) := \sqrt{d+2\sqrt{d\log(1/\beta)}+2\log(1/\beta)}.\]
\end{lemma}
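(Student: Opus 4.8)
The plan is to recognize this as the classical chi-squared tail bound of Laurent and Massart~\cite{laurent2000adaptive} and re-derive it by the exponential-moment method. Since $\|\mathbf{Y}\|_2^2=\sum_{i=1}^d Y_i^2$ is a chi-squared random variable with $d$ degrees of freedom and $t\mapsto\sqrt{t}$ is increasing, the event $\{\|\mathbf{Y}\|_2>\eta(d,\beta)\}$ is the same as $\{\sum_i Y_i^2>\eta(d,\beta)^2\}$, where $\eta(d,\beta)^2=d+2\sqrt{d\log(1/\beta)}+2\log(1/\beta)$. Writing $x=\log(1/\beta)$, it therefore suffices to establish the one-parameter statement
\[
\Pr\!\left[\textstyle\sum_{i=1}^d Y_i^2\ \ge\ d+2\sqrt{dx}+2x\right]\ \le\ e^{-x},\qquad x>0,
\]
and then substitute back.

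The first step is a Chernoff bound. For $s\in[0,1/2)$ the moment generating function of a single coordinate is $\E\!\left[e^{sY_i^2}\right]=(1-2s)^{-1/2}$, so by independence $\E\!\left[e^{s\sum_i Y_i^2}\right]=(1-2s)^{-d/2}$, and Markov's inequality gives, for any target $t>d$,
\[
\Pr\!\left[\textstyle\sum_i Y_i^2\ge t\right]\ \le\ \exp\!\left(-st-\tfrac{d}{2}\log(1-2s)\right).
\]
I would then choose $s$ to minimize the exponent; since it is convex in $s$, the optimum is $s=\tfrac{1}{2}(1-d/t)\in(0,1/2)$, at which point the right-hand side becomes $\exp\!\bigl(-\tfrac{t-d}{2}+\tfrac{d}{2}\log(t/d)\bigr)$.

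It then remains to plug in $t=d+2\sqrt{dx}+2x$ and verify $\tfrac{t-d}{2}-\tfrac{d}{2}\log(t/d)\ge x$. Setting $\theta=\sqrt{x/d}$, so that $t/d=1+2\theta+2\theta^2$ and $\tfrac{t-d}{2}=d\theta+d\theta^2$, this reduces to $\log(1+2\theta+2\theta^2)\le 2\theta$, i.e.\ to $1+2\theta+2\theta^2\le e^{2\theta}$, which is immediate for $\theta\ge 0$ by comparison with the Taylor series $e^{2\theta}=1+2\theta+2\theta^2+\tfrac{4}{3}\theta^3+\cdots$. Undoing the substitution $x=\log(1/\beta)$ and taking square roots then yields the stated bound with failure probability $\beta$.

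I expect no genuine obstacle here: the only choices requiring care are the Chernoff parameter $s$ and the reduction of the final estimate to the one-line inequality $e^{2\theta}\ge 1+2\theta+2\theta^2$; everything else is routine. Alternatively, one may simply invoke Lemma~1 of~\cite{laurent2000adaptive} verbatim with all weights equal to one, and then take square roots.
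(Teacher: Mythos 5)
Your derivation is correct, and it is essentially the standard Laurent--Massart Chernoff argument: the paper itself does not prove this lemma but simply cites \cite{laurent2000adaptive}, so there is no in-paper proof to compare against. Your Chernoff step, the optimizing choice $s=\tfrac12(1-d/t)$, the resulting exponent $-\tfrac{t-d}{2}+\tfrac{d}{2}\log(t/d)$, the substitution $\theta=\sqrt{x/d}$, and the reduction to $e^{2\theta}\ge 1+2\theta+2\theta^2$ (true term-by-term from the Taylor series for $\theta\ge 0$) are all sound, and the special case $t\le d$ (i.e.\ $\beta\ge 1$) is vacuous. Your alternative --- invoking Lemma~1 of \cite{laurent2000adaptive} with unit weights, which gives $\Pr[\sum_i Y_i^2 - d \ge 2\sqrt{dx}+2x]\le e^{-x}$ directly, then taking square roots --- is exactly how the lemma is meant to be read off the cited source, so both routes you offer are valid.
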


\begin{lemma}[\cite{biswas2020coinpress,laurent2000adaptive}]
\label{lm:upper_bound_noise_SGW}
Given $\W\sim \mathrm{SGW}(d)$, with probability at least $1-\beta$, 
\[\|\W\|_2 \leq \upsilon(d,\beta):=2\sqrt{d}+2d^{1/6}\log^{1/3}d + \frac{6(1+(\log d/d)^{1/3})\sqrt{\log d}}{\sqrt{\log(1+(\log d/d)^{1/3}))}}+2\sqrt{2\log(1/\beta)}.\]
Also, with probability at least $1-\beta$, 
\[\|\W\|_F \leq \omega(d,\beta):=\sqrt{d^2+2\sqrt{d\log(2/\beta)}(1+\sqrt{2(d-1)})+6\log(2/\beta)}.\]
\end{lemma}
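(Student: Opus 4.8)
The two inequalities are unrelated and I would prove them separately. The Frobenius bound is an elementary $\chi^2$-tail computation, so I would do it directly: by the definition of $\mathrm{SGW}(d)$, $\|\W\|_F^2=\sum_{j=1}^d w_{jj}^2+2\sum_{1\le j<k\le d}w_{jk}^2=:S_1+S_2$, the independent sum of $S_1\sim\chi^2_d$ (the diagonal) and $S_2$, which equals $2$ times a $\chi^2_{\binom{d}{2}}$ variable (the strict upper triangle). I would then apply, separately to $S_1$ and $S_2$, the one-sided Laurent--Massart inequality for weighted sums of squared standard Gaussians underlying Lemma~\ref{lm:upper_bound_noise} (from \cite{laurent2000adaptive}: if $Z=\sum_i a_i(U_i^2-1)$ with $U_i\sim\mathcal N(0,1)$ i.i.d., then $\Pr[Z\ge 2\|a\|_2\sqrt x+2\|a\|_\infty x]\le e^{-x}$), taking $x=\log(2/\beta)$ in both. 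For $S_1$ all weights are $1$, giving $S_1\le d+2\sqrt{d\log(2/\beta)}+2\log(2/\beta)$; for $S_2$ all weights are $2$ over $\binom{d}{2}$ terms, so $\|a\|_2=\sqrt{2d(d-1)}$ and $\|a\|_\infty=2$, giving $S_2\le d(d-1)+2\sqrt{2d(d-1)\log(2/\beta)}+4\log(2/\beta)$. A union bound over the two failure events (each of probability $\le\beta/2$) and adding the two estimates yields $\|\W\|_F^2\le d^2+2\sqrt{d\log(2/\beta)}\,(1+\sqrt{2(d-1)})+6\log(2/\beta)=\omega(d,\beta)^2$, which is the claim after a square root.

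\textbf{Spectral bound.} Here $\|\W\|_2$ is the operator norm of a Wigner-type Gaussian matrix, and I would split $\|\W\|_2=\E\|\W\|_2+(\|\W\|_2-\E\|\W\|_2)$ and bound the two parts. For the mean, I would invoke the sharp nonasymptotic estimate for norms of symmetric Gaussian matrices used in \cite{biswas2020coinpress}: for every $\epsilon>0$, $\E\|\W\|_2\le(1+\epsilon)\cdot 2\sqrt d+\frac{6(1+\epsilon)\sqrt{\log d}}{\sqrt{\log(1+\epsilon)}}$, where the leading term is twice the $\ell_2$-norm of the per-row standard deviations (which is $\sqrt d$, so $2\sqrt d$ is the semicircle edge) and the second term is the price of controlling the $\log$-size of the largest entry. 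Taking $\epsilon=(\log d/d)^{1/3}$ turns $(1+\epsilon)\cdot 2\sqrt d$ into exactly $2\sqrt d+2d^{1/6}\log^{1/3}d$ and the correction term into exactly the third summand of $\upsilon(d,\beta)$. For the deviation, I would use that $\W\mapsto\|\W\|_2$ is Lipschitz as a function of its independent entries $(w_{jk})_{j\le k}$ --- perturbing that vector by $v$ changes $\W$ by a symmetric matrix of Frobenius norm at most $\sqrt2\,\|v\|_2$, hence $\|\W\|_2$ by at most the same --- so a standard Gaussian concentration bound gives $\|\W\|_2\le\E\|\W\|_2+2\sqrt{2\log(1/\beta)}$ with probability at least $1-\beta$. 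Adding the two contributions gives $\upsilon(d,\beta)$.

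\textbf{Main obstacle.} The Frobenius computation and the Gaussian--Lipschitz step are routine. The real content is the expectation bound with the \emph{sharp} leading constant $2\sqrt d$: a naive $\epsilon$-net/union-bound over the unit sphere only yields $\E\|\W\|_2=O(\sqrt{d\log d})$, and recovering the constant $2$ requires either a Gaussian comparison (Slepian/Gordon) inequality or a multi-scale chaining argument that trades the spurious $\sqrt{\log d}$ factor for the lower-order $d^{1/6}\log^{1/3}d$-type corrections appearing in $\upsilon$. Since this estimate is already established in \cite{biswas2020coinpress} (resting on \cite{laurent2000adaptive}-type tail bounds), the plan is to quote it rather than reprove it.
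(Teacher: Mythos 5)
The paper does not prove this lemma; it is quoted verbatim from \cite{biswas2020coinpress,laurent2000adaptive}, so there is no in-paper proof to compare against. Your reconstruction is correct and is exactly what those references do: the Frobenius bound is a Laurent--Massart $\chi^2$-tail computation applied separately to the diagonal (weights $1$, $d$ terms) and the doubled upper triangle (weights $2$, $\binom{d}{2}$ terms) with a union bound, and the spectral bound is the Bandeira--van Handel expectation estimate with the choice $\epsilon=(\log d/d)^{1/3}$, followed by Gaussian Lipschitz concentration (your $\sqrt{2}$-Lipschitz argument actually gives the slightly sharper deviation $2\sqrt{\log(1/\beta)}$, which of course implies the stated $2\sqrt{2\log(1/\beta)}$).
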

Ignoring polylogarithmic factors, $\eta(d,\beta)$ and $\upsilon(d,\beta)$ are both in $\tilde{O}(\sqrt{d})$, while $\omega(d,\beta)$ is in $\tilde{O}(d)$.  These concentration inequalities are very useful for error analysis.  For example, the bound on $\|\W\|_F$ immediately implies that $\mathrm{GaussCov}$ has error ${1\over \sqrt{\rho} n} \cdot \omega(d,\beta) = \tilde{O}(d/n)$.

\section{Trace-sensitive Algorithm}
\label{sec:sep}

The state-of-the-art trace-sensitive algorithm~\cite{amin2019differentially} first obtains an estimate of the eigenvalues, and then iteratively finds the eigenvectors by the exponential mechanism (EM), so we denote this algorithm as $\mathrm{EMCov}$.  Under zCDP, it has an error of  $\tilde{O}(d^{3/4}\sqrt{\mathrm{tr}}/\sqrt{n}+\sqrt{d}/n)$.  Below, we present an algorithm that is simpler, faster, and more accurate, improving the trace-dependent term by a $\sqrt{d}$-factor.

The first step of our algorithm $\mathrm{SeparateCov}$ (shown in Algorithm \ref{alg:sep_cov}) is basically the same as $\mathrm{EMCov}$, where we obtain an estimate of the eigenvalues with half of the privacy budget.  \cite{amin2019differentially} uses the Laplace mechanism for pure-DP; for zCDP, we use the Gaussian mechanism, which relies on the $\ell_2$-sensitivity of $\mathbf{\Lambda}$:

\begin{lemma}
\label{lm:eigensens}
For any $\X,\X'\in\mathcal{B}_d^n$, $\X\sim\X'$, 
\begin{equation}
\nonumber
\|\mathbf{\Lambda}-\mathbf{\Lambda}'\|_{F}\leq \frac{\sqrt{2}}{n}.
\end{equation}
\end{lemma}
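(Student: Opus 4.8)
The plan is to bound $\|\mathbf{\Lambda}-\mathbf{\Lambda}'\|_F$ by relating it to $\|\bSigma-\bSigma'\|_F$, whose value is already known to be at most $\frac{\sqrt{2}}{n}$ (the $\ell_2$-sensitivity of $\bSigma$, stated just before Lemma~\ref{lm:gau_composition}). The key tool is the Hoffman--Wielandt inequality, which states that for symmetric matrices $\mathbf{A},\mathbf{B}$ with eigenvalues $\lambda_1[\mathbf{A}]\ge\dots\ge\lambda_d[\mathbf{A}]$ and $\lambda_1[\mathbf{B}]\ge\dots\ge\lambda_d[\mathbf{B}]$, one has $\sum_i (\lambda_i[\mathbf{A}]-\lambda_i[\mathbf{B}])^2 \le \|\mathbf{A}-\mathbf{B}\|_F^2$. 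Applying this with $\mathbf{A}=\bSigma=\bSigma(\X)$ and $\mathbf{B}=\bSigma'=\bSigma(\X')$ gives exactly $\|\mathbf{\Lambda}-\mathbf{\Lambda}'\|_F = \sqrt{\sum_i(\lambda_i-\lambda_i')^2} \le \|\bSigma-\bSigma'\|_F \le \frac{\sqrt{2}}{n}$, since $\mathbf{\Lambda}$ and $\mathbf{\Lambda}'$ are the diagonal matrices of sorted eigenvalues.

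So the steps are: (i) recall that neighboring $\X\sim\X'$ differ in one column, say $\X$ has column $X_j$ where $\X'$ has $X_j'$, both in $\mathcal{B}_d$, so $\bSigma-\bSigma' = \frac{1}{n}(X_jX_j^T - X_j'X_j'^T)$; (ii) bound $\|\bSigma-\bSigma'\|_F$ — either cite the known $\ell_2$-sensitivity $\frac{\sqrt 2}{n}$ of $\bSigma$, or reprove it directly by noting $\|X_jX_j^T\|_F = \|X_j\|_2^2 \le 1$ and applying the triangle inequality to get $\|\bSigma-\bSigma'\|_F \le \frac{2}{n}$, then sharpening to $\frac{\sqrt 2}{n}$ via the observation that $X_jX_j^T - X_j'X_j'^T$ has rank at most $2$ with the two relevant eigenvalues summing appropriately (a short computation shows the Frobenius norm is at most $\sqrt 2 \max(\|X_j\|_2^2,\|X_j'\|_2^2)\le\sqrt2$); (iii) invoke Hoffman--Wielandt to pass from the matrix difference to the eigenvalue difference.

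I expect the only mild obstacle is deciding how much of the $\frac{\sqrt 2}{n}$ sensitivity bound for $\bSigma$ to take as given versus reprove; since the paper already states it (citing \cite{biswas2020coinpress}), it is cleanest to simply cite it and then cite Hoffman--Wielandt, making the proof essentially two lines. If a self-contained argument is preferred, the rank-two refinement in step (ii) is the one calculation worth spelling out: writing $u=X_j$, $v=X_j'$, the matrix $uu^T-vv^T$ has nonzero eigenvalues that are the roots of a quadratic whose squared sum works out to $\|u\|_2^4 + \|v\|_2^4 - 2(u^Tv)^2 \le \|u\|_2^4+\|v\|_2^4 \le 2$, giving $\|\bSigma-\bSigma'\|_F\le\frac{\sqrt2}{n}$. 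Either route yields the claim.
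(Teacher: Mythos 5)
Your proof is correct, but it takes a genuinely different route from the paper's. You reduce the claim to two black boxes: the known $\ell_2$-sensitivity bound $\|\bSigma-\bSigma'\|_F \le \sqrt{2}/n$ and the Hoffman--Wielandt inequality for symmetric matrices, which transfers a Frobenius bound on the matrix difference directly to the $\ell_2$ distance between sorted eigenvalue vectors. That is a two-line argument once the citations are in place, and your rank-two computation $\|uu^T - vv^T\|_F^2 = \|u\|_2^4 + \|v\|_2^4 - 2(u^Tv)^2 \le 2$ correctly recovers the sensitivity bound if one wants to be self-contained. The paper, by contrast, never invokes Hoffman--Wielandt. It removes the differing column to form $\X_{(-i)}$, observes via Weyl's inequality that each eigenvalue of $\frac{1}{n}\X_{(-i)}\X_{(-i)}^T$ is dominated by the corresponding eigenvalue of both $\frac{1}{n}\X\X^T$ and $\frac{1}{n}\X'\X'^T$ (since each differs by a rank-one PSD perturbation), then uses the trace identity to bound the $\ell_1$ deviation $\sum_k(f_k(\X)-f_k(\X_{(-i)})) \le 1/n$, and finally converts the two $\ell_1$ bounds into an $\ell_2$ bound via the elementary inequality stated as Fact~1. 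Both routes land on the same constant $\sqrt{2}/n$. Your approach is shorter and leans on a standard, powerful spectral perturbation theorem; the paper's approach is longer but entirely elementary (Weyl plus algebra), which may be why the authors chose it. Neither contains a gap.
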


\begin{proof}
The proof of this lemma requires the following technical result.
\begin{fact}
    \label{claim:vecineq}
    Let $a_k,b_k\geq c_k \geq 0$ for $1\leq k\leq d$. Then
    \begin{equation}
        \nonumber
        \sum_{k=1}^{d}\left(a_k-b_k\right)^2 \leq \left(\sum_{k=1}^d(a_k-c_k)\right)^2 + \left(\sum_{k=1}^d(b_k-c_k)\right)^2.
    \end{equation}
\end{fact}

Let $f_k:\mathcal{B}(r)^n \rightarrow \mathbb{R}_{\geq 0}$ be defined as $f_k(\X) = \lambda_k(\frac{1}{n}\X\X^T)$. Suppose $\X$ and $\X'$ differ in the $i^{\mathrm{th}}$ column. Let $\X_{(-i)}$ denote the matrix obtained from $\X$ by removing the $i^{\mathrm{th}}$ column. Then $f_k(\X) = \lambda_k\left[\frac{1}{n}\X_{(-i)}\X_{(-i)}^T+\frac{1}{n}X_iX_i^T\right]$. Note that $X_iX_i^T$ is a rank-one matrix with eigenvalues $0$ and $\|X_i\|_2^2$, so by Weyl's inequality $f_k(\X_{(-i)}) \leq f_k(\X)$ and $f_k(\X_{(-i)}) \leq f_k(\X')$. We have
\begin{align}
\nonumber
\sum_{k=1}^d\left(f_k(\X)-f_k(\X_{(-i)})\right)=&\sum_{k=1}^d f_k(\X)-\sum_{k=1}^d f_k(\X_{(-i)})
\\
\nonumber
=&\mathrm{tr}\left[\frac{1}{n}\X\X^T\right]-\mathrm{tr}\left[\frac{1}{n}\X_{(-i)}\X_{(-i)}^T\right]
\\
\nonumber
=&\frac{1}{n}\|X_i\|_2^2\leq\frac{1}{n}.
\end{align}
Similarly, $\sum_{k=1}^d\left(f_k(\X')-f_k(\X_{(-i)})\right)\leq \frac{1}{n}$.
Now by Fact \ref{claim:vecineq}, with $a_k=f_k(\X),b_k=f_k(\X')$ and $c_k=f_k(\X_{(-i)})$, we have
\begin{align}
\label{eqn:eigensens_expand}
\nonumber
\sum_{k=1}^d\left(f_k(\X)-f_k(\X')\right)^2 &\leq \left(\sum_{k=1}^d\left(f_k(\X)-f_k(\X_{(-i)})\right)\right)^2 + \left(\sum_{k=1}^d\left(f_k(\X')-f_k(\X_{(-i)})\right)\right)^2 
\\
&\leq \left(\frac{1}{n}\right)^2 + \left(\frac{1}{n}\right)^2 = \frac{2}{n^2}.
\end{align}
Taking the square root on both sides then gives the target inequality.
\end{proof}

For the eigenvectors, we use $\mathrm{GaussCov}$ to obtain a privatized $\widetilde{\mathbf{\Sigma}}_{\mathrm{Gau}}$ with the other half of the privacy budget, and perform an eigendecomposition.  Finally, we assemble the eigenvalues of eigenvectors to obtain a privatized $\bSigma$.  It should be clear that, after computing $\bSigma$, $\mathrm{SeparateCov}$ just needs two eigendecompositions and one full matrix multiplication, plus some $O(d^2)$-time operations.  On the other hand, $\mathrm{EMCov}$ performs $O(d)$ eigendecompositions and matrix multiplications, plus a nontrivial sampling procedure for the EM.

\begin{algorithm}
\caption{$\mathrm{SeparateCov}$}
\label{alg:sep_cov}
\begin{algorithmic}[1]
    \REQUIRE data $\X\in \mathcal{B}_d^n$; privacy parameter $\rho>0$.

    \STATE $\mathbf{\Lambda} \gets$ the eigenvalues of $\bSigma={1\over n} \X \X^T$
    
    \STATE $\widetilde{\mathbf{\Lambda}}_{\mathrm{Sep}} \gets \Lam + \frac{\sqrt{2}}{\sqrt{\rho} n} \cdot \mathbf{Y}$, where $\mathbf{Y}\sim\mathcal{N}(0, \mathbf{I}_{d\times d})$
    
    \STATE $\widetilde{\mathbf{\Sigma}}_{\mathrm{Gau}}\gets \mathrm{GaussCov}(\X,\frac{\rho}{2})$
    
    \STATE $\widetilde{\mathbf{P}}_{\mathrm{Sep}}\gets \mathbf{P}\left[\widetilde{\mathbf{\Sigma}}_{\mathrm{Gau}}\right]$
    
    \STATE $\widetilde{\mathbf{\Sigma}}_{\mathrm{Sep}} \gets\widetilde{\mathbf{P}}_{\mathrm{Sep}} \widetilde{\mathbf{\Lambda}}_{\mathrm{Sep}} \widetilde{\mathbf{P}}_{\mathrm{Sep}}^T$
    
    \RETURN{$\widetilde{\mathbf{\Sigma}}_{\mathrm{Sep}}$}
\end{algorithmic}
\end{algorithm}

That $\mathrm{SeparateCov}$ satisfies $\rho$-zCDP easily follows from the privacy of the Gaussian mechanism and the composition property.  The utility is given by the following theorem:

\begin{theorem}
\label{th:err_sep_cov}
Given any $\rho>0$, for any $\X\in \mathcal{B}_d^n$, and any $\beta>0$, with probability at least $1-\beta$, $\mathrm{SeparateCov}$ returns a $\widetilde{\mathbf{\Sigma}}_{\mathrm{Sep}}$ such that $\|\widetilde{\mathbf{\Sigma}}_{\mathrm{Sep}}-\mathbf{\Sigma}\|_F \le \frac{2^{1.25}\sqrt{\mathrm{tr}}}{\rho^{1/4}\sqrt{n}}\cdot \sqrt{\upsilon\left(d,{\beta\over 2}\right)}+\frac{\sqrt{2}}{\sqrt{\rho}n}\cdot \eta\left(d,{\beta\over 2}\right) = \tilde{O}\left(\frac{d^{1/4}\sqrt{\mathrm{tr}}}{\sqrt{n}}+\frac{\sqrt{d}}{n}\right)$.
\end{theorem}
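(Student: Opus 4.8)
The plan is to bound the error by the triangle inequality after inserting the ``hybrid'' matrix $\widetilde{\mathbf{P}}_{\mathrm{Sep}}\,\mathbf{\Lambda}\,\widetilde{\mathbf{P}}_{\mathrm{Sep}}^T$, which pairs the \emph{true} eigenvalues $\mathbf{\Lambda}$ with the \emph{noisy} eigenvectors $\widetilde{\mathbf{P}}_{\mathrm{Sep}}$:
\[
\|\widetilde{\mathbf{\Sigma}}_{\mathrm{Sep}}-\mathbf{\Sigma}\|_F \;\le\; \big\|\widetilde{\mathbf{P}}_{\mathrm{Sep}}\big(\widetilde{\mathbf{\Lambda}}_{\mathrm{Sep}}-\mathbf{\Lambda}\big)\widetilde{\mathbf{P}}_{\mathrm{Sep}}^T\big\|_F \;+\; \big\|\widetilde{\mathbf{P}}_{\mathrm{Sep}}\,\mathbf{\Lambda}\,\widetilde{\mathbf{P}}_{\mathrm{Sep}}^T-\mathbf{\Sigma}\big\|_F .
\]
The first term is routine: since $\widetilde{\mathbf{P}}_{\mathrm{Sep}}$ is orthogonal it equals $\|\widetilde{\mathbf{\Lambda}}_{\mathrm{Sep}}-\mathbf{\Lambda}\|_F = \frac{\sqrt 2}{\sqrt\rho\, n}\|\mathbf{Y}\|_2$, and Lemma~\ref{lm:upper_bound_noise} bounds this by $\frac{\sqrt 2}{\sqrt\rho\, n}\,\eta(d,\beta/2)$ with probability at least $1-\beta/2$ --- precisely the second summand of the claimed bound. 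Everything therefore reduces to the second term, the error from using perturbed eigenvectors together with exact eigenvalues.

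For that term I would write $N := \widetilde{\mathbf{\Sigma}}_{\mathrm{Gau}}-\mathbf{\Sigma} = \frac{\sqrt2}{\sqrt\rho\, n}\W$ and let $\widetilde P_1,\dots,\widetilde P_d$ be the columns of $\widetilde{\mathbf{P}}_{\mathrm{Sep}}$, i.e.\ an orthonormal eigenbasis of $\mathbf{\Sigma}+N$ in decreasing eigenvalue order. The crucial point is that $\widetilde{\mathbf{P}}_{\mathrm{Sep}}\mathbf{\Lambda}\widetilde{\mathbf{P}}_{\mathrm{Sep}}^T$ and $\mathbf{\Sigma}=\mathbf{P}\mathbf{\Lambda}\mathbf{P}^T$ share the same spectrum $\{\lambda_1,\dots,\lambda_d\}$; expanding the squared Frobenius norm and using $\mathrm{tr}[\widetilde{\mathbf{P}}_{\mathrm{Sep}}\mathbf{\Lambda}^2\widetilde{\mathbf{P}}_{\mathrm{Sep}}^T]=\mathrm{tr}[\mathbf{\Lambda}^2]=\|\mathbf{\Sigma}\|_F^2$ collapses the cross term into a trace-weighted sum,
\[
\big\|\widetilde{\mathbf{P}}_{\mathrm{Sep}}\mathbf{\Lambda}\widetilde{\mathbf{P}}_{\mathrm{Sep}}^T-\mathbf{\Sigma}\big\|_F^2 \;=\; 2\|\mathbf{\Lambda}\|_F^2-2\,\mathrm{tr}\big[\widetilde{\mathbf{P}}_{\mathrm{Sep}}\mathbf{\Lambda}\widetilde{\mathbf{P}}_{\mathrm{Sep}}^T\mathbf{\Sigma}\big] \;=\; 2\sum_{i=1}^d \lambda_i\big(\lambda_i-\widetilde P_i^T\mathbf{\Sigma}\,\widetilde P_i\big).
\]
Next I would show each Rayleigh-quotient gap is small, $|\lambda_i-\widetilde P_i^T\mathbf{\Sigma}\,\widetilde P_i|\le 2\|N\|_2$: since $\widetilde P_i$ is a unit eigenvector of $\mathbf{\Sigma}+N$ with eigenvalue $\lambda_i[\widetilde{\mathbf{\Sigma}}_{\mathrm{Gau}}]$, we have $\widetilde P_i^T\mathbf{\Sigma}\,\widetilde P_i = \lambda_i[\widetilde{\mathbf{\Sigma}}_{\mathrm{Gau}}] - \widetilde P_i^T N\widetilde P_i$, and then Weyl's inequality controls $|\lambda_i-\lambda_i[\widetilde{\mathbf{\Sigma}}_{\mathrm{Gau}}]|\le\|N\|_2$ while the Rayleigh-quotient bound gives $|\widetilde P_i^T N\widetilde P_i|\le\|N\|_2$. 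Substituting, and using $\lambda_i\ge 0$ with $\sum_i\lambda_i=\mathrm{tr}$, yields $\big\|\widetilde{\mathbf{P}}_{\mathrm{Sep}}\mathbf{\Lambda}\widetilde{\mathbf{P}}_{\mathrm{Sep}}^T-\mathbf{\Sigma}\big\|_F^2\le 4\,\mathrm{tr}\cdot\|N\|_2$, hence a bound of $2\sqrt{\mathrm{tr}}\,\sqrt{\|N\|_2}$. Finally $\|N\|_2=\frac{\sqrt2}{\sqrt\rho\, n}\|\W\|_2\le\frac{\sqrt2}{\sqrt\rho\, n}\upsilon(d,\beta/2)$ with probability at least $1-\beta/2$ by Lemma~\ref{lm:upper_bound_noise_SGW}, turning this into $\frac{2^{1.25}\sqrt{\mathrm{tr}}}{\rho^{1/4}\sqrt n}\sqrt{\upsilon(d,\beta/2)}$; a union bound over the two failure events, together with $\eta(d,\beta/2),\upsilon(d,\beta/2)=\tilde{O}(\sqrt d)$ for the asymptotic form, finishes everything.

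I expect the main obstacle to be the per-index estimate $|\lambda_i-\widetilde P_i^T\mathbf{\Sigma}\,\widetilde P_i|\le 2\|N\|_2$ --- or rather, recognizing that this is the right quantity to bound. The tempting route would be to control $\|\widetilde P_i-P_i\|$ by a Davis--Kahan/$\sin\Theta$ argument and propagate, but that needs a lower bound on the eigengap $\lambda_{i-1}-\lambda_i$ (or $\lambda_i-\lambda_{i+1}$), which is hopeless for covariance matrices whose eigenvalues are typically clustered or repeated. The way out is to never require the eigenvectors to be individually accurate: the Rayleigh quotient $\widetilde P_i^T\mathbf{\Sigma}\,\widetilde P_i$ is automatically within $O(\|N\|_2)$ of $\lambda_i$ with \emph{no} gap assumption, exactly because $\widetilde P_i$ is an exact eigenvector of the nearby matrix $\mathbf{\Sigma}+N$. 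This, combined with the facts that the eigenvalue weights sum only to $\mathrm{tr}\le 1$ rather than to $d$ and that one takes a square root at the very end, is what upgrades the $\tilde{O}(\sqrt d/n)$ spectral-norm noise level of $\mathrm{GaussCov}$ into the improved $\tilde{O}(d^{1/4}\sqrt{\mathrm{tr}}/\sqrt n)$ Frobenius error.
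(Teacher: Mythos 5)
Your proposal is correct and follows the same overall strategy as the paper: the same hybrid decomposition through $\widetilde{\mathbf{P}}_{\mathrm{Sep}}\mathbf{\Lambda}\widetilde{\mathbf{P}}_{\mathrm{Sep}}^T$, the same trace expansion collapsing the squared Frobenius norm to $2\sum_i\lambda_i(\lambda_i-\widetilde P_i^T\mathbf{\Sigma}\widetilde P_i)$, and the same final arithmetic yielding the $2^{1.25}$ constant. The one place you genuinely diverge is the per-index Rayleigh-quotient estimate: you get $\lambda_i - \widetilde P_i^T\mathbf{\Sigma}\widetilde P_i \le 2\|N\|_2$ by writing $\widetilde P_i^T\mathbf{\Sigma}\widetilde P_i = \lambda_i[\widetilde{\mathbf{\Sigma}}_{\mathrm{Gau}}] - \widetilde P_i^T N\widetilde P_i$ and invoking Weyl's inequality together with $|\widetilde P_i^T N\widetilde P_i|\le\|N\|_2$, whereas the paper reaches the same bound through a Courant--Fischer argument comparing the maxima of $u\mapsto u^T\mathbf{\Sigma}u$ over the trailing eigenspaces $\mathcal{S}_j$ of $\mathbf{\Sigma}$ and $\mathcal{S}_j'$ of $\widetilde{\mathbf{\Sigma}}_{\mathrm{Gau}}$. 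Your route is arguably tighter to state, since it avoids introducing the auxiliary subspaces and maximizers $\lambda_j', P_j'$ and makes the ``no eigengap needed'' intuition explicit; both arguments are ultimately consequences of the same variational characterization and give identical constants.
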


\begin{proof}
The error can be decomposed into two parts:
\begin{align*}
\|\widetilde{\mathbf{\Sigma}}_{\mathrm{Sep}}-\mathbf{\Sigma}\|_F =& \|\widetilde{\mathbf{P}}_{\mathrm{Sep}} \widetilde{\mathbf{\Lambda}}_{\mathrm{Sep}} \widetilde{\mathbf{P}}_{\mathrm{Sep}}^T - \mathbf{\Sigma}\|_F
\\
=& \|\widetilde{\mathbf{P}}_{\mathrm{Sep}} (\widetilde{\mathbf{\Lambda}}_{\mathrm{Sep}}-\Lam) \widetilde{\mathbf{P}}_{\mathrm{Sep}}^T + \widetilde{\mathbf{P}}_{\mathrm{Sep}} \Lam \widetilde{\mathbf{P}}_{\mathrm{Sep}}^T - \mathbf{\Sigma}\|_F
\\
\leq &  \|\widetilde{\mathbf{P}}_{\mathrm{Sep}} (\widetilde{\mathbf{\Lambda}}_{\mathrm{Sep}}-\Lam) \widetilde{\mathbf{P}}_{\mathrm{Sep}}^T \|_F + \|\widetilde{\mathbf{P}}_{\mathrm{Sep}} \Lam \widetilde{\mathbf{P}}_{\mathrm{Sep}}^T - \mathbf{\Sigma}\|_F.
\end{align*}

For the first term, by Lemma~\ref{lm:upper_bound_noise}, we have with probability at least $1-\frac{\beta}{2}$,
\[\|\widetilde{\mathbf{P}}_{\mathrm{Sep}} (\widetilde{\mathbf{\Lambda}}_{\mathrm{Sep}}-\Lam) \widetilde{\mathbf{P}}_{\mathrm{Sep}}^T \|_F =\|\widetilde{\mathbf{\Lambda}}_{\mathrm{Sep}}-\Lam\|_F =\frac{\sqrt{2}}{\sqrt{\rho}n} \cdot\|\mathbf{Y}\|_2  = \frac{\sqrt{2}}{\sqrt{\rho}n}\cdot \eta(d,\beta).\]

For the second term, 
\begin{align*}
\|\widetilde{\mathbf{P}}_{\mathrm{Sep}} \Lam \widetilde{\mathbf{P}}_{\mathrm{Sep}}^T - \mathbf{\Sigma}\|_F^2 = & \|\widetilde{\mathbf{P}}_{\mathrm{Sep}} \Lam \widetilde{\mathbf{P}}_{\mathrm{Sep}}^T\|_F^2 +\|\mathbf{\Sigma}\|_F^2 - 2\cdot \mathrm{tr}\left(\mathbf{\Sigma}\widetilde{\mathbf{P}}_{\mathrm{Sep}} \Lam \widetilde{\mathbf{P}}_{\mathrm{Sep}}^T\right)
\\
= & 2 \sum_j \lambda_j^2 - 2\cdot\mathrm{tr}\left(\Lam \widetilde{\mathbf{P}}_{\mathrm{Sep}}^T\mathbf{\Sigma}\widetilde{\mathbf{P}}_{\mathrm{Sep}}\right)
\\
= & 2 \sum_j \lambda_j^2 - 2\sum_{j} \lambda_j\left(\widetilde{P}_{\mathrm{Sep},j}^T\mathbf{\Sigma}\widetilde{P}_{\mathrm{Sep},j}\right)
\\
= & 2\sum_{j}\lambda_j \left(\lambda_j-\widetilde{P}_{\mathrm{Sep},j}^T\mathbf{\Sigma}\widetilde{P}_{\mathrm{Sep},j}\right).
\end{align*}

Now the only work is to bound the terms $\left(\lambda_j-\widetilde{P}_{\mathrm{Sep},j}^T\mathbf{\Sigma}\widetilde{P}_{\mathrm{Sep},j}\right)$ for all $j$. First, recall $\widetilde{\mathbf{\Sigma}}_{\mathrm{Gau}} = \mathbf{\Sigma}+\frac{1}{\sqrt{\rho}n}\cdot \mathbf{W}$, where $\mathbf{W}\sim \mathrm{SGW}(d)$. By Lemma~\ref{lm:upper_bound_noise_SGW}, with probability at least $1-\frac{\beta}{2}$, for all unit vector $u$,
\begin{equation}
\label{eq:lm:err_sep_cov_1}
\left|u^T\widetilde{\mathbf{\Sigma}}_{\mathrm{Gau}}u - u^T\mathbf{\Sigma}u\right| = \frac{1}{\sqrt{\rho}n}\cdot \left|u^T \mathbf{W} u\right| \leq \frac{\sqrt{2}}{\sqrt{\rho}n}\cdot\upsilon(d,\beta/2).
\end{equation}

For any $j$, let $\mathcal{S}_j = \mathrm{span}\{P_{j+1},\dots,P_d\}$, and $\mathcal{S}_j' = \mathrm{span}\{\widetilde{P}_{\mathrm{Sep},j+1},\dots,\widetilde{P}_{\mathrm{Sep},d}\}$. Let
\[\lambda'_j = \max_{u:\|u\|_2=1, u\in\mathcal{S}_j'}u^T\mathbf{\Sigma} u\]
and
\[P'_j=\argmax_{u:\|u\|_2=1, u\in\mathcal{S}_j'}u^T\mathbf{\Sigma} u.\]
By definition,
\begin{equation}
\label{eq:lm:err_sep_cov_2}
\lambda_j = \max_{u:\|u\|_2=1, u\in\mathcal{S}_j}u^T\mathbf{\Sigma} u\leq  \max_{u:\|u\|_2=1, u\in\mathcal{S}_j'}u^T\mathbf{\Sigma} u = \lambda'_j.
\end{equation}
Finally, we have
\begin{align*}
\widetilde{P}_{\mathrm{Sep},j}^T\mathbf{\Sigma}\widetilde{P}_{\mathrm{Sep},j}\geq & \widetilde{P}_{\mathrm{Sep},j}^T\widetilde{\mathbf{\Sigma}}_{\mathrm{Gau}}\widetilde{P}_{\mathrm{Sep},j} - \frac{\sqrt{2}}{\sqrt{\rho}n}\cdot \upsilon(d,\beta)
\\
\geq & P_j^{'T}\widetilde{\mathbf{\Sigma}}_{\mathrm{Gau}}P'_j- \frac{\sqrt{2}}{\sqrt{\rho}n}\cdot \upsilon(d,\beta)
\\
\geq & P_j^{'T}\mathbf{\Sigma}P'_j-\frac{2\sqrt{2}}{\sqrt{\rho}n}\cdot \upsilon(d,\beta)
\\
\geq & \lambda_j - \frac{2\sqrt{2}}{\sqrt{\rho}n}\cdot \upsilon(d,\beta).
\end{align*}
The first and third inequalities are from (\ref{eq:lm:err_sep_cov_1}). The second inequality is by the definition of $\widetilde{P}_{\mathrm{Sep},j}$. The last inequality is by (\ref{eq:lm:err_sep_cov_2}).
\end{proof}

\paragraph{Remark} While $\mathrm{SeparateCov}$ strictly improves over $\mathrm{EMCov}$, it does not dominate $\mathrm{GaussCov}$: When $\mathrm{tr}< \tilde{O}(d^{3/2}/n)$, $\mathrm{SeparateCov}$ is better; otherwise, $\mathrm{GaussCov}$ is better.   $\mathrm{EMCov}$ is better than $\mathrm{GaussCov}$ for a smaller trace range: $\tr < \tilde{O}(\sqrt{d}/n)$. 

\medskip
Theorem~\ref{th:err_sep_cov} implies our worst-case bound by taking $\mathrm{tr} = 1$:

\begin{theorem}
\label{th:err_sep_cov_worst_case}
Given any $\rho>0$, for any $\X\in \mathcal{B}_d^n$, and any $\beta>0$, with probability at least $1-\beta$, $\mathrm{SeparateCov}$ returns a $\widetilde{\mathbf{\Sigma}}_{\mathrm{Sep}}$ such that $\|\widetilde{\mathbf{\Sigma}}_{\mathrm{Sep}}-\mathbf{\Sigma}\|_F = \tilde{O}\left(\frac{d^{1/4}}{\sqrt{n}}+\frac{\sqrt{d}}{n}\right)$.
\end{theorem}

\section{Tail-sensitive Algorithm}
\label{sec:cov_sparse_data}

\subsection{Clipped Covariance}
\label{sec:clip_cov}
Clipping is a common technique to reduce the sensitivity of functions at the expense of some bias.  Given $\tau\ge 0$ and a vector $\x \in \mathbb{R}^d$, let $\mathrm{Clip}(\x,\tau) = \min\left(1, \frac{\tau}{\|\x\|_2}\right)\cdot \x$. Similarly, for any $\X\in\mathbb{R}^{d\times n}$, $\mathrm{Clip}(\X,\tau)$ denotes the matrix whose columns have been clipped to have norm at most $\tau$.  Clipping can be applied to both $\mathrm{GaussCov}$ and $\mathrm{SeparateCov}$ with a given $\tau$: just run the mechanism on $\frac{1}{\tau}\cdot\mathrm{Clip}(\X,\tau)$ and scale the result back by $\tau^2$. We denote the clipped versions of the two mechanisms as  $\mathrm{ClipGaussCov}$ and $\mathrm{ClipSeparateCov}$, respectively.

The following lemma bounds the bias caused by clipping in terms of the $\tau$-tail as defined in \eqref{eq:tail}.

\begin{lemma}
\label{lm:bounded_bias}
$\left\| \bSigma(\X)-\bSigma(\mathrm{Clip}(\X,\tau))\right\|_F \leq {1\over n} \sum_i \left(\|\x_i\|_2^2-\tau^2\right)\cdot \mathbb{I}\left(\|\x_i\|_2\geq \tau\right)\leq \gamma(\X,\tau)$.
\end{lemma}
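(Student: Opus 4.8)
The plan is to work columnwise. Writing $Y_i = \mathrm{Clip}(X_i,\tau)$, we have
\[
\bSigma(\X)-\bSigma(\mathrm{Clip}(\X,\tau)) = \frac{1}{n}\sum_i \left(X_iX_i^T - Y_iY_i^T\right),
\]
so by the triangle inequality for $\|\cdot\|_F$ it suffices to bound $\|X_iX_i^T - Y_iY_i^T\|_F$ for each $i$ and sum up. This reduces the matrix statement to a per-column scalar estimate.

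For a single column there are two cases. If $\|X_i\|_2\le\tau$, then $Y_i=X_i$ and the term vanishes. If $\|X_i\|_2>\tau$, then $Y_i=(\tau/\|X_i\|_2)X_i$, hence
\[
X_iX_i^T - Y_iY_i^T = \left(1-\frac{\tau^2}{\|X_i\|_2^2}\right)X_iX_i^T,
\]
a scalar multiple of the rank-one matrix $X_iX_i^T$. Since $\|X_iX_i^T\|_F^2 = \sum_{j,k}(X_{i,j}X_{i,k})^2 = \|X_i\|_2^4$, i.e.\ $\|X_iX_i^T\|_F = \|X_i\|_2^2$, this term has Frobenius norm $(1-\tau^2/\|X_i\|_2^2)\|X_i\|_2^2 = \|X_i\|_2^2-\tau^2$. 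Combining the two cases, $\|X_iX_i^T-Y_iY_i^T\|_F = (\|X_i\|_2^2-\tau^2)\,\mathbb{I}(\|X_i\|_2\ge\tau)$, where the boundary value $\|X_i\|_2=\tau$ contributes $0$ under either convention; summing over $i$ and dividing by $n$ gives the first inequality.

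The second inequality is immediate: $\tau^2\ge 0$ gives $\|X_i\|_2^2-\tau^2\le\|X_i\|_2^2$, and since the $\|X_i\|_2=\tau$ terms are zero, replacing $\mathbb{I}(\|X_i\|_2\ge\tau)$ by $\mathbb{I}(\|X_i\|_2>\tau)$ does not change the sum, so the middle quantity is at most $\frac{1}{n}\sum_i\|X_i\|_2^2\,\mathbb{I}(\|X_i\|_2>\tau)=\gamma(\X,\tau)$. There is no real obstacle here; the only points that need a little care are the identity $\|X_iX_i^T\|_F=\|X_i\|_2^2$ and the bookkeeping between ``$\ge$'' and ``$>$'' in the indicator, both of which are routine.
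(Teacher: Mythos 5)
Your proposal is correct and takes essentially the same route as the paper: both reduce the matrix bound to the per-column identity $\|\x\x^T - \widecheck{\x}\widecheck{\x}^T\|_F = \|\x\|_2^2 - \tau^2$ for $\|\x\|_2 > \tau$ (using $\|\x\x^T\|_F = \|\x\|_2^2$), and then invoke the triangle inequality over columns. In fact you are slightly more explicit than the paper, which states and proves only the per-column claim and leaves the columnwise summation implicit.
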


\begin{proof}
The proof of this lemma can be derived by the following technical result.

\begin{claim}
For any $\x\in \mathbf{R}^d$, and $\tau\leq \|\x\|_2$, let $\widecheck{\x} = \mathrm{Clip}(\x,\tau)$, then,
\[\left\|\x\x^T-\widecheck{\x}\widecheck{\x}^{T}\right\|_F = \|\x\|_2^2-\tau^2.\]
\end{claim}

By definition, $\widecheck{\x} = \frac{\tau}{\|\x\|_2}\x$. Then
\begin{align*}
\left\|\x\x^T-\widecheck{\x}\widecheck{\x}^{T}\right\|_F =& \left\|\x\x^T-(\frac{\tau}{\|\x\|_2})^2\x\x^{T}\right\|_F
\\
 =& \frac{\|\x\|_2^2-\tau^2}{\|\x\|_2^2} \|\x\x^T\|_F
\\
 =& \|\x\|_2^2-\tau^2.
\end{align*}
\end{proof}

Thus, running the better of $\mathrm{ClipGaussCov}$ and $\mathrm{ClipSeparateCov}$ yields a total error of $\mathrm{Noise}(\X,\tau,\rho,\beta) + \gamma(\X, \tau)$, where
\begin{equation}
\label{eq:error_detail}
\mathrm{Noise}(\X,\tau,\rho,\beta) = \min\left(\frac{\tau^2}{\sqrt{\rho} n}\cdot\omega(d,\beta),\frac{2^{1.25}\tau\sqrt{\mathrm{tr}}}{\rho^{1/4}\sqrt{n}}\cdot\sqrt{\upsilon\left(d,\frac{\beta}{2}\right)}+ \frac{\sqrt{2}\tau^2}{\sqrt{\rho}n}\cdot\eta\left(d,\frac{\beta}{2}\right)\right),
\end{equation}
which is the exact version of (\ref{eq:error}).  Note that the trace-sensitive term is only scaled by $\tau$, which follows from the proof of Theorem \ref{th:err_sep_cov} when all $X_i$ live in $\tau \cdot \mathcal{B}_d$. 

Ideally, we would like to find the optimal noise-bias trade-off, i.e., achieving an error of $\min_\tau(\mathrm{Noise}(\X,\tau) + \gamma(\X, \tau))$.  Two issues need to be addressed towards this goal:  The first, minor, issue is that $\tr$ is sensitive, so we cannot use it directly to decide whether to use $\mathrm{ClipGaussCov}$ or $\mathrm{ClipSeparateCov}$.  This can be addressed by using a privatized upper bound of $\tr$.  The more challenging problem is how to find the optimal $\tau$ in a DP fashion.  This problem has been well studied for the clipped mean estimator \cite{huang2021instance,amin2019bounding}, where it can be shown that setting $\tau$ to be the  $\tilde{O}(\sqrt{d})$-th largest $\|X_i\|_2$ results in the optimal noise-bias trade-off
\cite{huang2021instance}.  Then the problem boils down to finding a privatized quantile, for which multiple solutions exist \cite{huang2021instance,dong2021universal,nissim2007smooth,asi2020instance,smith2011privacy}.  For the clipped mean estimator, using such a quantile of the norms results in the optimal trade-off because $\mathrm{Noise}(\X,\tau)$ takes the simple form $\tilde{O}(\tau \sqrt{d}/n)$.  In fact, if we only had $\mathrm{ClipGaussCov}$, setting $\tau$ to be the $\tilde{O}(d)$-th largest $\|X_i\|_2$ would also yield an optimal trade-off, as $\mathrm{ClipGaussCov}$ is really just clipped mean in $d^2$ dimensions.  However, due to the trace-sensitive noise term, it is no longer the case.  Here, we give an example showing that no quantile, whose rank may arbitrarily depend on $d,n,\tr$, can achieve an optimal trade-off even ignoring polylogarithmic factors.  It thus calls for a new threshold-finding mechanism, which we describe next.

\begin{example}
We construct two datasets with the same $d$, $n$, and $\mathrm{tr}$ whose optimal quantile-based thresholds are asymptotically different. Let $d = \Theta(n^{9/16})$, $\mathrm{tr} = \frac{d}{n} = \Theta(n^{-7/16})$. With such setting, $\mathrm{SeparateCov}$ performs better than $\mathrm{GaussCov}$. For the first dataset, we have $n^{9/16}$ number of vectors with $\ell_2$ norm equal to one while all others have norm zero. In this case, we have $1$ and $0$ as the only candidate truncation thresholds. Using $\tau = 1$ gives an error of $\Theta(n^{-37/64})$ and while $\tau = 0$ corresponds to an error of $\Theta(n^{-7/16})$. Thus, $\tau = 1$ should be used, which corresponds a quantile between the $0$th and $n^{9/16}$th. For the second dataset, we have $O(1)$ number of vectors with $\ell_2$ norm square equal to one, $n^{19/32}$ number of vectors with $\ell_2$ norm square equal to $n^{-13/32}$ and all others have $\ell_2$ norm square equal to $n^{-7/16}$. Here, we have three candidates: $1$, $n^{-13/64}$ and $n^{-7/32}$. $\tau=1$ corresponds to the error $\Theta(n^{-37/64})$. $\tau = n^{-13/64}$ corresponds to the error $\Theta(n^{-25/32})$. $\tau = n^{-7/32}$ corresponds to the error $\Theta(n^{-51/64})$. Thus, we should choose $\tau = n^{-7/32}$, which corresponds to a quantile between the  $(\tilde{O}(1)+n^{19/32})$th and $n$th.
\end{example}

\subsection{Adaptive Covariance: Finding the Optimal Clipping Threshold}

Our basic idea is to try successively smaller values $\tau = 1,\frac{1}{2},\frac{1}{4},\dots$. As we reduce $\tau$, the noise decreases while the bias increases.  We should stop when they are approximately balanced, which would yield a near-optimal $\tau$. 

To do so in a DP manner, we need to quantify the noise and bias.  Consider the bias first.  Given a $\tau$, we divide the interval $(\tau,1]$ into sub-intervals $(\tau,2\tau],(2\tau,4\tau],\dots,(\frac{1}{2},1]$. For any $\x\in\X$ such that $\|\x\|_2\in (2^{s},2^{s+1}]$, let $\widecheck{\x} = \mathrm{Clip}(\x,\tau)$ and then by Lemma~\ref{lm:bounded_bias},
\begin{equation}
\label{eq:bias_1}
\|\x\x^T-\widecheck{\x}\widecheck{\x}^{T}\|_F \leq 2^{2s+2} - \tau^2.
\end{equation}
That is, clipping $\x$ can at most lead to $\frac{1}{n}\cdot (2^{2s+2}-\tau^2)$ bias. Besides, since $\|\x\|_2\in (2^{s},2^{s+1}]$, we have
\begin{equation}
\label{eq:bias_2}
2^{2s+2} - \tau^2\leq 2^{2s+2} \leq 2\cdot \|\x\|_2^2.
\end{equation}
Then, given ${\X}$, for any $s\in \mathbb{Z}$, we define $\mathrm{Count}_s({\X}) := \left|\left\{{\x}_i:\|{\x}_i\|_2\in (2^{s},2^{s+1}]\right\}\right|$.
It is easy to see for any $\X\sim\X'$, $\mathrm{Count}_s$ differs by at most $1$, so does the sum of any subset of $\mathrm{Count}_s$'s. We can define an upper bound on the bias: $\widehat{\mathrm{Bias}}({\X},\tau):=\frac{1}{n}\cdot\sum_{s=\log_2(\tau)}^{s<0} \mathrm{Count}_s \cdot (2^{2s+2}-\tau^2)$. Let $\widecheck{\X} = \mathrm{Clip}({\X},\tau)$. By (\ref{eq:bias_1}) and (\ref{eq:bias_2}), we have
\begin{equation}
\label{eq:bias_3}
{1\over n} \|{\X}{\X}^T-\widecheck{\X}\widecheck{\X}^T\|\leq \widehat{\mathrm{Bias}}({\X},\tau) \leq 2\cdot \gamma({\X},\tau).
\end{equation}
By the property of $\mathrm{Count}_s$'s, given any $\tau$, the sensitivity of $\widehat{\mathrm{Bias}}(\cdot,\tau)$ is bounded by $\frac{1}{n}$.

Now we turn to the noise. Recall that $\mathrm{Noise}({\X},\tau,\rho,\beta)$ is the smaller of two parts. The first part $\mathrm{GaussNoise}(\tau,\rho,\beta) := \tau^2\cdot\frac{1}{\sqrt{\rho} n}\cdot\omega(d,\beta)$ is independent of $\X$, so can be used directly.  The second part depends on $\tr$, is thus sensitive.  Since its sensitivity is $\frac{1}{n}$, we can easily privatize it by adding a Gaussian noise of scale $\Theta\left({1\over \sqrt{\rho}n}\right)$.  For technical reasons, we need to use an upper bound, so we add $\Theta\left({\log(1/\beta) \over \sqrt{\rho} n}\right)$ to it so as to obtain a privatized $\widehat{\tr} \ge \tr$.  Then we set
\begin{equation*}
\mathrm{SeparateNoise}(\widehat{\mathrm{tr}},\tau,\rho,\beta) :=\tau\cdot \frac{2^{1.25}\sqrt{\widehat{\mathrm{tr}}}}{\rho^{1/4}\sqrt{n}}\cdot\sqrt{\upsilon\left(d,\frac{\beta}{2}\right)}+\tau^2\cdot \frac{\sqrt{2}}{\sqrt{\rho}n}\cdot\eta\left(d,\frac{\beta}{2}\right),
\end{equation*}
and use $\widehat{\mathrm{Noise}}(\widehat{\mathrm{tr}},\tau,\rho,\beta) := \min\left(\mathrm{GaussNoise}(\tau,\rho,\beta),\mathrm{SeparateNoise}(\widehat{\mathrm{tr}},\tau,\rho,\beta)\right)$ as a DP upper bound of $\mathrm{Noise}({\X},\tau,\rho,\beta)$.  Note that given $\widehat{\mathrm{tr}}$, $\widehat{\mathrm{Noise}}(\widehat{\mathrm{tr}},\tau,\rho,\beta)$ is independent of $\X$.

\begin{algorithm}[t]
\caption{$\mathrm{AdaptiveCov}$}
\label{alg:Ada_cov}
\begin{algorithmic}[1]
    \REQUIRE data $\X\in \mathcal{B}_d^n$; privacy parameter $\rho>0$; high probablity parameter $\beta$.
    \STATE $\tilde{r} \gets \mathrm{PrivRadius}(\X,\frac{\sqrt{\rho}}{2},\frac{\beta}{8}, 2^{-2dn})$
    
    \STATE $\widetilde{\X}\gets\mathrm{Clip}(\X,\tilde{r})$
    
    \STATE $\widetilde{\mathrm{tr}}\gets\frac{1}{n}\sum_i\|\widetilde{\x}_i\|_2^2$
    
    \STATE $\widehat{\mathrm{tr}} \gets\min\left( \widetilde{\mathrm{tr}}+\frac{2\tilde{r}^2}{\sqrt{\rho} n}\cdot \mathcal{N}(0,1)+\frac{2\sqrt{2}\tilde{r}^2}{\sqrt{\rho}n}\cdot\sqrt{\log(8/\beta)},\ \tilde{r}^2\right)$
    
    \STATE $\tilde{t} \gets \log_2(\tilde{r})+1-\mathrm{SVT}\left(\left\{\mathrm{Diff}\left(\widetilde{\X},\widehat{\mathrm{tr}},\tilde{r},\frac{\rho}{2},\frac{\beta}{2}\right),\mathrm{Diff}\left((\widetilde{\X},\widehat{\mathrm{tr}},\frac{\tilde{r}}{2},\frac{\rho}{2},\frac{\beta}{2}\right),\dots,\mathrm{Diff}\left((\widetilde{\X},\widehat{\mathrm{tr}},2^{-dn},\frac{\rho}{2},\frac{\beta}{2}\right)\right\},0,\frac{\sqrt{\rho}}{\sqrt{2}}\right)$
    
    \STATE $\tilde{\tau} \gets \min\left(2^{\tilde{t}+1},\tilde{r}\right)$
    
    \STATE \algorithmicif \  $\mathrm{SeparateNoise}(\widehat{\mathrm{tr}},\tilde{\tau},\frac{\rho}{2},\frac{\beta}{2})\geq \mathrm{GaussNoise}(\tilde{\tau},\frac{\rho}{2},\frac{\beta}{2})$
    
    \STATE \hspace{\algorithmicindent} $\widetilde{\mathbf{\Sigma}}_{\mathrm{Ada}} \gets \mathrm{ClipGaussCov}(\widetilde{\X},\frac{\rho}{2},\tilde{\tau})$
    
    \STATE \algorithmicelse
    
    \STATE \hspace{\algorithmicindent} $\widetilde{\mathbf{\Sigma}}_{\mathrm{Ada}} \gets \mathrm{ClipSeparateCov}(\widetilde{\X},\frac{\rho}{2},\tilde{\tau})$
    
    \RETURN{$\widetilde{\mathbf{\Sigma}}_{\mathrm{Ada}}$}
\end{algorithmic}
\end{algorithm}

Finally, we run SVT on the following sequence of sensitivity-1 queries with $T=0$:
\begin{equation*}
\label{eq:diff}
\mathrm{Diff}({\X},\widehat{\mathrm{tr}},\tau,\rho,\beta):=n\cdot\left(\widehat{\mathrm{Bias}}({\X},\tau) - \widehat{\mathrm{Noise}}(\widehat{\mathrm{tr}},\tau,\rho,\beta)\right), \tau=1,\frac{1}{2},\dots,2^{-dn}.
\end{equation*}
The SVT would return a $\tau$ that balances the bias and noise.  After finding such a $\tau$, we choose to run either $\mathrm{GaussCov}$ or $\mathrm{SeparateCov}$ by comparing $\mathrm{GaussNoise}(\tau,\rho,\beta)$ and $\mathrm{SeparateNoise}(\widehat{\mathrm{tr}},\tau,\rho,\beta)$.  As the sequence consists of $dn$ queries, SVT has an error of $O(\log(dn))$, which, as we will show, affects the optimality by a logarithmic factor.  Meanwhile, the smallest $\tau$ we search over will induce an additive $2^{-dn}$ error.  

The algorithm above can almost give us the desired error bound in \eqref{eq:ada}, except that one thing may go wrong: The SVT introduces an error that is a logarithmic factor larger than the optimum, but at least $\widetilde{\Omega}(1/n)$.  This would be fine as long as there is one $X_i$ with $\|X_i\|_2\ge \widetilde{\Omega}(1)$, so that the optimum error is $\widetilde{\Omega}(1/n)$.  However, when all the $X_i$'s have very small norms, say $1/n^2$, the $\widetilde{\Omega}(1/n)$ error from SVT would not preserve optimality.  To address this issue, we first find the radius $\mathrm{rad}(\X) = \max_i \|X_i\|_2$, and use it to clip $\X$.  The following lemma shows that, under DP, it is possible to find a 2-approximation of $\mathrm{rad}(\X)$ plus an additive $b$ so that only $O(\log\log(1/b))$ vectors are clipped.  This allows us to set $b=2^{-dn}$ while only incurring an $O(\log dn)$ error.  Nicely, they match the additive and multiplicative errors that already exist from the SVT, so there is no asymptotic degradation in the optimality.

\begin{lemma}[\cite{dong2021universal}]
\label{lm:err_radius}
For any $\varepsilon>0$, $\beta>0$ and $b>0$, given $\X\in \mathcal{B}_d^{n}$, with probability at least $1-\beta$, $\mathtt{PrivRadius}$ returns a $\tilde{r}=\mathtt{PrivRadius}(\X,\varepsilon,\beta,b)$ such that $\tilde{r}\leq 2\cdot \mathrm{rad}(\X)+b$ and $\left|\left\{\|\x_i\|_2> \tilde{r}\right\}\right| = O\left(\frac{1}{\varepsilon}\log\frac{\log(\mathrm{rad}(\X)/b)}{\beta}\right).$
\end{lemma}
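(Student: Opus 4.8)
The plan is to treat the statement as a differentially private approximate-maximum problem for the scalar values $v_i := \|X_i\|_2 \in [0,1]$, solved with a geometric grid of candidate radii fed into the $\mathrm{SVT}$ of Algorithm~\ref{alg:svt}. Set $r_j := 2^{-j}$ for $j = 1, 2, \dots, \lceil\log_2(1/b)\rceil$, and let $f_j(\X) := |\{i : \|X_i\|_2 > r_j\}|$; each $f_j$ has $\ell_1$-sensitivity $1$ because one column moves a count by at most one, and $f_1 \le f_2 \le \cdots$ pointwise. I would run $\mathrm{SVT}$ on $(f_1, f_2, \dots)$ with threshold $T := 1 + \tfrac{6}{\varepsilon}\log(2t/\beta)$, where $t = \lceil\log_2(1/b)\rceil$ is the grid length; letting $k$ be the returned index, I would output $\tilde r := 2 r_k = 2^{-(k-1)}$, and output $\tilde r := b$ if $\mathrm{SVT}$ runs off the end.

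Privacy is immediate: $\mathrm{SVT}$ is $\varepsilon$-DP on any sensitivity-$1$ query sequence of any length, and $\tilde r$ is a post-processing of its index. For utility I would invoke Lemma~\ref{lm:err_SVT}: with probability $1-\beta$, every query strictly before index $k$ is at most $T + \tfrac6\varepsilon\log(2t/\beta)$, and, when $k \ne t+1$, the query at index $k$ is at least $T - \tfrac6\varepsilon\log(2t/\beta) \geq 1$. The latter forces some $v_i > r_k$, hence $\mathrm{rad}(\X) > r_k$ and $\tilde r = 2 r_k < 2\,\mathrm{rad}(\X) \le 2\,\mathrm{rad}(\X)+b$, while the off-the-end case gives $\tilde r = b \le 2\,\mathrm{rad}(\X)+b$ trivially. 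Applying the former to index $k-1$, whose threshold $r_{k-1}$ equals $\tilde r$, gives $|\{i : \|X_i\|_2 > \tilde r\}| = f_{k-1}(\X) = O\!\left(\tfrac1\varepsilon \log\tfrac{\log(1/b)}{\beta}\right)$ — exactly the claimed clipping bound, but with $\log(1/b)$ where the statement has $\log(\mathrm{rad}(\X)/b)$.

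The main obstacle is closing precisely that gap. The naive grid pays $\log\log(1/b)$ because $\mathrm{SVT}$'s error scales with the full grid length $t = \Theta(\log(1/b))$, even though the informative scales — those between $b$ and $\mathrm{rad}(\X)$ — number only $\log(\mathrm{rad}(\X)/b)$, every coarser scale contributing count $0$ and being pure overhead. The fix is a preliminary \emph{doubly-geometric} coarse phase: query the counts at the thresholds $2^{-1}, 2^{-2}, 2^{-4}, 2^{-8}, \dots$ (exponent doubling) and use a second $\mathrm{SVT}$ to find the first such threshold whose count exceeds a small value, thereby localizing $\log_2(1/\mathrm{rad}(\X))$ to within a constant factor using only $O(\log\log(1/\mathrm{rad}(\X)))$ queries; the fine phase above is then restricted to the $O(\log(\mathrm{rad}(\X)/b))$ remaining scales, so its $\mathrm{SVT}$ error becomes $O(\tfrac1\varepsilon\log\tfrac{\log(\mathrm{rad}(\X)/b)}{\beta})$, and splitting $\varepsilon$ between the two phases costs only a constant. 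Two points then need care: making the coarse phase robust to a handful of large outliers — so that it brackets the bulk of the norms, the few extra clipped points being absorbed into the allowed budget — and intersecting the two high-probability events; both are routine bookkeeping, though tightening the constant inside $\log(\mathrm{rad}(\X)/b)$ may require iterating the bracket refinement. The complete construction and its analysis are those of \cite{dong:universal}, which we invoke directly.
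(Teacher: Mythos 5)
The paper does not prove this lemma; it states it and cites \cite{dong:universal}, so there is no in-text proof to compare against, and your closing appeal to that reference matches exactly what the paper itself does. That said, the intermediate sketch you give does not establish the stated bound and should not be mistaken for a proof. Your single-phase construction is sound but, as you correctly note, only yields $O\!\left(\tfrac1\varepsilon\log\tfrac{\log(1/b)}{\beta}\right)$. The doubly-geometric coarse phase you propose to close the gap has two concrete problems. First, it is itself an SVT over $\Theta(\log\log(1/b))$ queries, so by Lemma~\ref{lm:err_SVT} the coarse cutoff can misclassify an extra $\Theta\!\left(\tfrac1\varepsilon\log\tfrac{\log\log(1/b)}{\beta}\right)$ points; these feed directly into the final clipping count and carry a $\log\log\log(1/b)$ dependence that is \emph{not} dominated by $\log\tfrac{\log(\mathrm{rad}(\X)/b)}{\beta}$ when $\mathrm{rad}(\X)=\Theta(b)$ and $b$ is tiny (the target then collapses to $O(\tfrac1\varepsilon\log\tfrac1\beta)$). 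Second, localizing $\log_2(1/\mathrm{rad}(\X))$ only to within a constant \emph{multiplicative} factor still leaves the fine grid with on the order of $\log(\mathrm{rad}(\X)/b)+\tfrac12\log(1/\mathrm{rad}(\X))$ scales, and the second summand can dominate the first (again when $\mathrm{rad}(\X)\approx b$), so the fine-phase SVT error does not drop to the target either. Recursing the same idea would give at best a $\log^{*}$-type improvement, not the claimed $\log\log(\mathrm{rad}(\X)/b)$. In short, the plan captures the right flavor --- geometric search driven by SVT --- but the two-phase refinement as described has a genuine gap; the lemma is best invoked as a black box from \cite{dong:universal}, which is precisely what the paper does.
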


The complete algorithm is given in Algorithm~\ref{alg:Ada_cov}.  Its privacy follows from the privacy of $\mathtt{PrivRadius}$, SVT,  $\mathrm{GaussCov}$, $\mathrm{SeparateCov}$, and the composition theorem of zCDP; its utility is analyzed in the following theorem:

\begin{theorem}
\label{th:err_ada_cov}
Given any $\rho>0$ and $\beta>0$, for any $\X\in\mathcal{B}_d^n$, with probability at least $1-\beta$, $\mathrm{AdaptiveCov}$ returns a $\widetilde{\mathbf{\Sigma}}_{\mathrm{Ada}}$ such that
\begin{align*}
\left\|\widetilde{\mathbf{\Sigma}}_{\mathrm{Ada}}-\mathbf{\Sigma}\right\|_F=&O\left(\min_{\tau}\left(\mathrm{Noise}\left(\X,\tau,\frac{\rho}{2},\frac{\beta}{2}\right)\cdot \frac{\log(1/\beta)^{1/4}}{\rho^{1/4}}\cdot\log(n)+\gamma(\X,\tau)\cdot \frac{\log(dn/\beta)}{\sqrt{\rho}}\right)+2^{-dn}\right)
\\
=&\tilde{O}\left(\min_{\tau}\left(\mathrm{Noise}\left(\X,\tau\right)+\gamma(\X,\tau)\right)+2^{-dn}\right).
\end{align*}
\end{theorem}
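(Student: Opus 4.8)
The plan is to condition on a single high-probability event under which all the randomized sub-components behave as intended, and then trace through the algorithm line by line to certify that the final clipping threshold $\tilde\tau$ balances bias against noise up to the stated factors. Concretely, I would first invoke Lemma~\ref{lm:err_radius} (with $\varepsilon = \sqrt{\rho}/2$, failure probability $\beta/8$, additive term $b = 2^{-2dn}$) to get that $\tilde r \le 2\,\mathrm{rad}(\X) + 2^{-2dn}$ and that at most $O\!\left(\frac{1}{\sqrt{\rho}}\log\frac{\log(\mathrm{rad}(\X)/b)}{\beta}\right) = \tilde O(1)$ vectors are clipped by $\mathrm{Clip}(\X,\tilde r)$; crucially the Frobenius bias of this first clipping is then bounded, via Lemma~\ref{lm:bounded_bias}, by $\tilde O(\mathrm{rad}(\X)^2/n)$, which is dominated by $\mathrm{Noise}(\X,\tau,\cdot,\cdot)$ at the relevant $\tau$ (this is exactly the point of replacing the naive lower endpoint with $2^{-dn}$). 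Simultaneously I would record, from the Gaussian tail bound (Lemma~\ref{lm:upper_bound_noise} applied to the scalar noise on line~4), that $\widehat{\mathrm{tr}} \ge \mathrm{tr}(\widetilde\X) $ and $\widehat{\mathrm{tr}} = O(\mathrm{tr} + \tilde r^2\log(1/\beta)/(\sqrt{\rho}n))$, so that $\widehat{\mathrm{tr}}$ is a valid upper bound on the true trace that is also not too large; this is what makes $\widehat{\mathrm{Noise}}(\widehat{\mathrm{tr}},\tau,\rho,\beta)$ a legitimate surrogate, sandwiched between $\mathrm{Noise}$ and $\tilde O(1)\cdot\mathrm{Noise}$.

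Next I would bring in the SVT utility bound (Lemma~\ref{lm:err_SVT}) with the $dn$-query sequence $\mathrm{Diff}(\widetilde\X,\widehat{\mathrm{tr}},\tau,\cdot,\cdot)$ and threshold $T=0$: with probability $1-\beta/2$ the returned index $\tilde t$ satisfies, for every larger $\tau$ in the search grid, $n(\widehat{\mathrm{Bias}} - \widehat{\mathrm{Noise}}) \le \frac{6}{\varepsilon}\log(2dn/\beta)$, and at the chosen $\tilde\tau$ (if the search did not run to the end) $n(\widehat{\mathrm{Bias}} - \widehat{\mathrm{Noise}}) \ge -\frac{6}{\varepsilon}\log(2dn/\beta)$. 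The first inequality, applied at $\tilde\tau$ itself, gives $\widehat{\mathrm{Bias}}(\widetilde\X,\tilde\tau) \le \widehat{\mathrm{Noise}}(\widehat{\mathrm{tr}},\tilde\tau,\cdot,\cdot) + \frac{6}{\sqrt{\rho}n}\log(2dn/\beta)$, and combining with $\widehat{\mathrm{Bias}} \ge \frac{1}{n}\|\widetilde\X\widetilde\X^T - \widecheck\X\widecheck\X^T\|$ from \eqref{eq:bias_3} bounds the actual bias of the second clipping step; together with the $\widehat{\mathrm{Noise}} \le \tilde O(1)\cdot \mathrm{Noise}$ bound and the accuracy guarantees of $\mathrm{ClipGaussCov}$ / $\mathrm{ClipSeparateCov}$ (which follow directly from Theorem~\ref{th:err_sep_cov} and the $\omega$-bound on $\|\W\|_F$, applied to data in $\tilde\tau\cdot\mathcal{B}_d$), this shows $\|\widetilde\bSigma_{\mathrm{Ada}} - \bSigma\|_F = \tilde O(\mathrm{Noise}(\X,\tilde\tau) + \gamma(\X,\tilde\tau)) + 2^{-dn}$. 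The branch on line~7 correctly picks the smaller of the two noise terms because $\widehat{\mathrm{Noise}}$ is computed with the same (privatized) inputs used in the test.

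The remaining, and genuinely delicate, direction is the lower bound on quality: I must show $\mathrm{Noise}(\X,\tilde\tau) + \gamma(\X,\tilde\tau) \le \tilde O(1)\cdot \min_\tau(\mathrm{Noise}(\X,\tau) + \gamma(\X,\tau))$. Let $\tau^\star$ be the (continuous) optimizer and let $\tau^\circ$ be the smallest grid point $\ge \tau^\star$; since the grid is geometric with ratio $2$, $\tau^\circ \le 2\tau^\star$, and both $\mathrm{Noise}$ (polynomial in $\tau$) and $\gamma$ (monotone in $\tau$, and $\gamma(\X,\tau^\circ)\le\gamma(\X,\tau^\star)$) change by at most constant factors, so the grid point $\tau^\circ$ is itself near-optimal. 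If $\tilde\tau \ge \tau^\circ$, then $\gamma(\X,\tilde\tau) \le \gamma(\X,\tau^\circ)$ is already controlled, and I bound $\mathrm{Noise}(\X,\tilde\tau)$ using $\widehat{\mathrm{Noise}}(\widehat{\mathrm{tr}},\tilde\tau) \approx \widehat{\mathrm{Bias}}(\widetilde\X,\tilde\tau) \le 2\gamma(\X,\tilde\tau) \le 2\gamma(\X,\tau^\circ)$ plus the SVT slack $\tilde O(1/n)$, and the slack is absorbed because the first clipping already guarantees the optimum is $\widetilde\Omega(1/n)$ whenever it is not already below $2^{-dn}$. If instead $\tilde\tau < \tau^\circ$, the SVT "for all $i<k$" guarantee run at $\tau^\circ$ forces $\widehat{\mathrm{Bias}}(\widetilde\X,\tau^\circ) \le \widehat{\mathrm{Noise}}(\widehat{\mathrm{tr}},\tau^\circ) + \tilde O(1/n) = \tilde O(\mathrm{Noise}(\X,\tau^\circ)) + \tilde O(1/n)$; since $\gamma(\X,\tau^\circ) \le \widehat{\mathrm{Bias}}(\widetilde\X,\tau^\circ)$ up to constants and $\gamma$ increases as $\tilde\tau$ shrinks, I need the matching statement that $\mathrm{Noise}$ at the \emph{smaller} $\tilde\tau$ is no larger than at $\tau^\circ$ — which holds because $\mathrm{Noise}(\X,\tau)$ is increasing in $\tau$ — so $\mathrm{Noise}(\X,\tilde\tau) \le \mathrm{Noise}(\X,\tau^\circ)$, and $\gamma(\X,\tilde\tau)$ is tied to $\widehat{\mathrm{Bias}}(\widetilde\X,\tilde\tau)\le\widehat{\mathrm{Noise}}(\widehat{\mathrm{tr}},\tilde\tau)+\tilde O(1/n)\le \widehat{\mathrm{Noise}}(\widehat{\mathrm{tr}},\tau^\circ)+\tilde O(1/n)$. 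Assembling the explicit polylogarithmic factors — $\log(n)$ from the geometric grid and the $\mathrm{Noise}\propto\tau$ dependence, $\log(dn/\beta)$ from the SVT on $dn$ queries, $(\log(1/\beta)/\rho)^{1/4}$ from propagating $\widehat{\mathrm{tr}}$'s overestimate into the $\sqrt{\widehat{\mathrm{tr}}}$ term — yields the first displayed bound, and suppressing them yields the second. The main obstacle is precisely this last case analysis: making sure that when the SVT stops "too early" at a $\tilde\tau$ smaller than the optimal grid point, the extra bias $\gamma(\X,\tilde\tau)$ incurred is still bounded by the surrogate noise at $\tilde\tau$ (hence by the true optimum up to the stated factors), which rests on the monotonicity of $\mathrm{Noise}$ in $\tau$ together with the two-sided SVT guarantee; the $\widehat{\mathrm{tr}}$-versus-$\mathrm{tr}$ slack and the first-clipping bias must all be verified to be dominated rather than dominant.
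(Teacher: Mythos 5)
Your plan identifies all the right ingredients — $\mathrm{PrivRadius}$ to tame the bottom end of the grid, a Gaussian tail bound to certify $\widehat{\tr}$ as a not-too-loose upper bound on $\tr$ (feeding the $(\log(1/\beta)/\rho)^{1/4}$ blow-up), $\mathrm{SVT}$'s two-sided guarantee, the sandwiching $\widehat{\mathrm{Bias}} \le 2\gamma$, and the $\mathrm{Noise}\le\widehat{\mathrm{Noise}}\le\tilde O(1)\cdot\mathrm{Noise}+2^{-dn}$ surrogacy — and the union-bound-then-trace-the-algorithm structure matches the paper's. Where you depart is in the near-optimality argument for $\tilde\tau$, and there you have a gap.

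In your ``$\tilde\tau\ge\tau^\circ$'' case you write $\widehat{\mathrm{Noise}}(\widehat{\tr},\tilde\tau)\approx\widehat{\mathrm{Bias}}(\widetilde\X,\tilde\tau)\le 2\gamma(\X,\tilde\tau)\le 2\gamma(\X,\tau^\circ)$, but the $\approx$ does not hold at $\tilde\tau$. The SVT gives a one-sided bound at $\tilde\tau=2^{\tilde t+1}$ (it is the index $k-1$ query, so $\widehat{\mathrm{Bias}}(\tilde\tau)\le\widehat{\mathrm{Noise}}(\tilde\tau)+\mathrm{slack}$), and the reverse direction $\widehat{\mathrm{Noise}}\le\widehat{\mathrm{Bias}}+\mathrm{slack}$ only holds one step further down, at $2^{\tilde t}$, where $\gamma$ is larger, not smaller. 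So the chain you wrote does not deliver $\widehat{\mathrm{Noise}}(\tilde\tau)\lesssim\gamma(\tau^\circ)$; you would additionally need a sub-case on whether the optimizer $\tau^\star$ lies above or below $2^{\tilde t}$ (below: bound via $\gamma(2^{\tilde t})\le\gamma(\tau^\star)$; above: bound via $\mathrm{Noise}(\tilde\tau)\le 4\,\mathrm{Noise}(2^{\tilde t})\le 4\,\mathrm{Noise}(\tau^\star)$). The paper avoids all of this case bookkeeping with the identity
\[
\widehat{\mathrm{Bias}}(\tilde\tau)+\widehat{\mathrm{Noise}}(\tilde\tau)
= 2\min\left(\widehat{\mathrm{Bias}}(\tilde\tau),\widehat{\mathrm{Noise}}(\tilde\tau)\right)
+ \left|\widehat{\mathrm{Bias}}(\tilde\tau)-\widehat{\mathrm{Noise}}(\tilde\tau)\right|,
\]
observes that monotonicity alone (bias nonincreasing, noise nondecreasing in $\tau$) gives $\min\bigl(\widehat{\mathrm{Bias}}(\tilde\tau),\widehat{\mathrm{Noise}}(\tilde\tau)\bigr)\le\max\bigl(\widehat{\mathrm{Bias}}(\tau),\widehat{\mathrm{Noise}}(\tau)\bigr)$ for \emph{every} grid $\tau$ (no SVT needed), and only calls on the SVT to bound the residual $\bigl|\widehat{\mathrm{Bias}}(\tilde\tau)-\widehat{\mathrm{Noise}}(\tilde\tau)\bigr|=\mathrm{Diff}(\tilde\tau)\cdot\tilde r^2/n$ via the two SVT inequalities at $2^{\tilde t}$ and $2^{\tilde t+1}$ together with the monotonicity of $\mathrm{Diff}$. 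That single trick replaces your two-case comparison with $\tau^\circ$, and avoids ever needing to approximate $\widehat{\mathrm{Noise}}$ by $\widehat{\mathrm{Bias}}$ at the same threshold. I'd also note that your auxiliary claim ``$\gamma(\X,\tau^\circ)\le\widehat{\mathrm{Bias}}(\widetilde\X,\tau^\circ)$ up to constants'' is true (both are within a factor of $4$ given the dyadic binning), but it is not actually needed: the only direction the proof requires is $\widehat{\mathrm{Bias}}\le 2\gamma$, which the paper records as its equation~(\ref{eq:th:err_ada_cov_9}).
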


\begin{proof}
We start by analyzing the error from using the private estimate $\tilde{r}$ of $\mathrm{rad}(\X)$. By Lemma~\ref{lm:err_radius}, we have, with probability at least $1-\frac{\beta}{8}$,
\begin{equation}
\label{eq:th:err_ada_cov_1}
\tilde{r}\leq 2\cdot \mathrm{rad}(\X)+2^{-2dn},
\end{equation}
and
\begin{equation*}
\left|\left\{\|\x_i\|_2> \tilde{r}\right\}\right| = O\left(\frac{1}{\sqrt{\rho}}\log\frac{dn}{\beta}\right).
\end{equation*}
This gives
\begin{equation}
\label{eq:th:err_ada_cov_2}
\left\|\mathbf{\Sigma}-\mathbf{\Sigma}[\widetilde{\X}] \right\|_F = O\left(\frac{\mathrm{rad}(\X)^2}{\sqrt{\rho }n}\log\frac{dn}{\beta}\right).
\end{equation}

Now, we analyze the error  $\left\|\widetilde{\mathbf{\Sigma}}_{\mathrm{Ada}}-\mathbf{\Sigma}[\widetilde{\X}] \right\|_F$. First, by tail bound of Gaussian, with probability at least $1-\frac{\beta}{8}$,
\begin{equation}
\label{eq:th:err_ada_cov_3}
\widetilde{\mathrm{tr}} \leq \widehat{\mathrm{tr}}\leq \widetilde{\mathrm{tr}} + \frac{4\sqrt{2}\cdot\tilde{r}^2}{\sqrt{\rho}n}\cdot\sqrt{\log(8/\beta)}.
\end{equation}

Second, as the discussion in the definitions of $\mathrm{GaussNoise}(\cdot)$, $\mathrm{SeparateNoise}(\cdot)$ and $\widehat{\mathrm{Noise}}(\cdot)$, with probability at least $1-\frac{\beta}{2}$, we have
\begin{equation}
\label{eq:th:err_ada_cov_4}
\left\|\widetilde{\mathbf{\Sigma}}_{\mathrm{Ada}} - \mathbf{\Sigma}[\widecheck{\X}]\right\|_F = O\left(\widehat{\mathrm{Noise}}\left(\widetilde{\X},\widehat{\mathrm{tr}},\tilde{\tau},\frac{\rho}{2},\frac{\beta}{2}\right)\right).
\end{equation}
By (\ref{eq:bias_3}),
\begin{equation}
\label{eq:th:err_ada_cov_5}
\left\|\mathbf{\Sigma}[\widetilde{\X}]-\mathbf{\Sigma}[\widecheck{\X}]\right\|_F \leq \widehat{\mathrm{Bias}}\left(\widetilde{\X},\tilde{\tau}\right).
\end{equation}
Combining (\ref{eq:th:err_ada_cov_4}) and (\ref{eq:th:err_ada_cov_5}), we have
\begin{equation}
\label{eq:th:err_ada_cov_6}
\left\|\widetilde{\mathbf{\Sigma}}_{\mathrm{Ada}}-\mathbf{\Sigma}[\widetilde{\X}]\right\|_F = O\left(\widehat{\mathrm{Bias}}\left(\X,\tilde{\tau}\right) +\widehat{\mathrm{Noise}}\left(\X,\widehat{\mathrm{tr}},\tilde{\tau},\frac{\rho}{2},\frac{\beta}{2}\right)\right).
\end{equation}

Then, let's analyze the error from $\mathrm{SVT}$. There are two sub-cases: $\tilde{t} = -dn-1$ and $\tilde{t}\geq -dn$. For the first case, $\tilde{\tau} = 2^{-dn}$ and no query is higher than the threshold. By Lemma~\ref{lm:err_SVT}, with probability at least $1-\frac{\beta}{4}$, for any $\tau = \tilde{r},\frac{\tilde{r}}{2},\dots,2^{-dn}$,
\[\mathrm{Diff}\left(\widetilde{\X},\widehat{\mathrm{tr}},\tau,\frac{\rho}{2},\frac{\beta}{2}\right) = O\left(\frac{1}{\sqrt{\rho}}\log(dn/\beta)\right),\]
Recall that $\mathrm{Diff}\left(\widetilde{\X},\widehat{\mathrm{tr}},\tau,\frac{\rho}{2},\frac{\beta}{2}\right)$ increases with the decrease of $\tau$, we have, for all $\tau = \tilde{r},\frac{\tilde{r}}{2},\dots,2^{-dn}$,
\begin{equation}
\label{eq:th:err_ada_cov_7}
\left| \mathrm{Diff}\left(\widetilde{\X},\widehat{\mathrm{tr}},\tilde{\tau},\frac{\rho}{2},\frac{\beta}{2}\right) \right|- \left|\mathrm{Diff}\left(\widetilde{\X},\widehat{\mathrm{tr}},\tau,\frac{\rho}{2},\frac{\beta}{2}\right)\right| = O\left(\frac{1}{\sqrt{\rho}}\log(dn/\beta)\right).
\end{equation}
Therefore, for all $\tau = \tilde{r},\frac{\tilde{r}}{2},\dots,2^{-dn}$,
\begin{align*}
&\widehat{\mathrm{Bias}}\left(\widetilde{\X},\tilde{\tau}\right)+\widehat{\mathrm{Noise}}\left(\widehat{\mathrm{tr}},\tilde{\tau},\frac{\rho}{2},\frac{\beta}{2}\right)
\\
=&2\cdot \min\left(\widehat{\mathrm{Bias}}\left(\widetilde{\X},\tilde{\tau}\right)+\widehat{\mathrm{Noise}}\left(\widehat{\mathrm{tr}},\tilde{\tau},\frac{\rho}{2},\frac{\beta}{2}\right)\right) + \frac{\tilde{r}^2}{n}\cdot  \left|\mathrm{Diff}\left(\widetilde{\X},\widehat{\mathrm{tr}},\tilde{\tau},\frac{\rho}{2},\frac{\beta}{2}\right)\right|
\\
\leq & 2\cdot \max\left(\widehat{\mathrm{Bias}}\left(\widetilde{\X},\tau\right)+\widehat{\mathrm{Noise}}\left(\widehat{\mathrm{tr}},\tau,\frac{\rho}{2},\frac{\beta}{2}\right)\right) + \frac{\tilde{r}^2}{n}\cdot  \left|\mathrm{Diff}\left(\widetilde{\X},\widehat{\mathrm{tr}},\tilde{\tau},\frac{\rho}{2},\frac{\beta}{2}\right)\right|
\\
= & 2\cdot \max\left(\widehat{\mathrm{Bias}}\left(\widetilde{\X},\tau\right)+\widehat{\mathrm{Noise}}\left(\widehat{\mathrm{tr}},\tau,\frac{\rho}{2},\frac{\beta}{2}\right)\right) + \frac{\tilde{r}^2}{n}\cdot  \left|\mathrm{Diff}\left(\widetilde{\X},\widehat{\mathrm{tr}},\tau,\frac{\rho}{2},\frac{\beta}{2}\right)\right|+ O\left(\frac{\tilde{r}^2}{\sqrt{\rho} n}\log(dn/\beta)\right)
\\
= & O\left(\widehat{\mathrm{Bias}}\left(\widetilde{\X},\tau\right)+\widehat{\mathrm{Noise}}\left(\widehat{\mathrm{tr}},\tau,\frac{\rho}{2},\frac{\beta}{2}\right)+\frac{\tilde{r}^2}{\sqrt{\rho}n}\log(dn/\beta)\right),
\end{align*}
where the fourth line is by (\ref{eq:th:err_ada_cov_7}).
Together with (\ref{eq:th:err_ada_cov_6}), finally, we have
\begin{align}
\nonumber
&\|\widetilde{\mathbf{\Sigma}}_{\mathrm{Ada}}-\mathbf{\Sigma}[\widetilde{\X}]\|_F \\
=& O\left(\min_{2^{-dn}\leq \tau\leq \tilde{r}}\left(\widehat{\mathrm{Bias}}\left(\widetilde{\X},\tau\right)+\widehat{\mathrm{Noise}}\left(\widehat{\mathrm{tr}},\tau,\frac{\rho}{2},\frac{\beta}{2}\right)\right)+\frac{\widetilde{r}^2}{\sqrt{\rho}n}\log(dn/\beta)\right)
\\
=& O\left(\min_{2^{-dn}\leq \tau\leq 1}\left(\widehat{\mathrm{Bias}}\left(\widetilde{\X},\tau\right)+\widehat{\mathrm{Noise}}\left(\widehat{\mathrm{tr}},\tau,\frac{\rho}{2},\frac{\beta}{2}\right)\right)+\frac{\widetilde{r}^2}{\sqrt{\rho}n}\log(dn/\beta)\right)
\label{eq:th:err_ada_cov_8}
\end{align}

For the second case, where $\tilde{t} \geq -dn$, by Lemma~\ref{lm:err_SVT}, with probability at least $1-\frac{\beta}{4}$, we have,
\[\mathrm{Diff}\left(\widetilde{\X},\widehat{\mathrm{tr}},2^{\tilde{t}},\frac{\rho}{2},\frac{\beta}{2}\right)\geq -\frac{6\sqrt{2}}{\sqrt{\rho}}\cdot\tilde{r}^2\cdot\log(8(dn+1)/\beta),\]
and 
\[\mathrm{Diff}\left(\widetilde{\X},\widehat{\mathrm{tr}},2^{\tilde{t}+1},\frac{\rho}{2},\frac{\beta}{2}\right)\leq \frac{6\sqrt{2}}{\sqrt{\rho}}\cdot\tilde{r}^2\cdot\log(8(dn+1)/\beta).\]
Recall $\mathrm{Diff}\left(\widetilde{\X},\widehat{\mathrm{tr}},\tau,\frac{\rho}{2},\frac{\beta}{2}\right)$ increases with the decrease of $\tau$, thus we have, for all $\tau = \tilde{r},\frac{\tilde{r}}{2},\dots,2^{-dn}$,
\begin{equation*}
\left| \mathrm{Diff}\left(\widetilde{\X},\widehat{\mathrm{tr}},2^{\tilde{t}+1},\frac{\rho}{2},\frac{\beta}{2}\right) \right|- \left|\mathrm{Diff}\left(\widetilde{\X},\widehat{\mathrm{tr}},\tau,\frac{\rho}{2},\frac{\beta}{2}\right)\right| = O\left(\frac{1}{\sqrt{\rho}}\log(dn/\beta)\right),
\end{equation*}
or 
\begin{equation*}
\left| \mathrm{Diff}\left(\widetilde{\X},\widehat{\mathrm{tr}},2^{\tilde{t}},\frac{\rho}{2},\frac{\beta}{2}\right) \right|- \left|\mathrm{Diff}\left(\widetilde{\X},\widehat{\mathrm{tr}},\tau,\frac{\rho}{2},\frac{\beta}{2}\right)\right| = O\left(\frac{1}{\sqrt{\rho}}\log(dn/\beta)\right).
\end{equation*}
Further using
\[\widehat{\mathrm{Bias}}\left(\widetilde{\X},2^{\tilde{t}+1}\right)+\widehat{\mathrm{Noise}}\left(\widehat{\mathrm{tr}},2^{\tilde{t}+1},\frac{\rho}{2},\frac{\beta}{2}\right) \leq \widehat{\mathrm{Bias}}\left(\widetilde{\X},2^{\tilde{t}}\right)+\widehat{\mathrm{Noise}}\left(\widehat{\mathrm{tr}},2^{\tilde{t}},\frac{\rho}{2},\frac{\beta}{2}\right),\]
and following the same procedure as before, we can also get (\ref{eq:th:err_ada_cov_8}).

Now, we are ready to show how the target error bound can be derived based on (\ref{eq:th:err_ada_cov_8}). Given a $\tau$, recall in (\ref{eq:bias_3}) we have 
\begin{equation}
\label{eq:th:err_ada_cov_9}
\widehat{\mathrm{Bias}}\left(\widetilde{\X},\tau\right) \leq 2\cdot \gamma(\widetilde{\X},\tau).
\end{equation}
Also, recall the definitions of $\mathrm{GaussNoise}(\cdot)$, $\mathrm{SeparateNoise}(\cdot)$ and $\mathrm{Noise}(\cdot)$; combining them with (\ref{eq:th:err_ada_cov_3}) and the fact $\widetilde{\mathrm{tr}}\leq \mathrm{tr}$, we have
\begin{align}
\nonumber
&\widehat{\mathrm{Noise}}\left(\widehat{\mathrm{tr}},\tau,\frac{\rho}{2},\frac{\beta}{2}\right)
\\
\nonumber
=& O\left(\min\left(\frac{\tau^2}{\sqrt{\rho} n}\cdot\omega\left(d,\frac{\beta}{2}\right), \frac{\tau}{\rho^{1/4}\sqrt{n}}\cdot\left(\sqrt{\mathrm{tr}}+\frac{\tilde{r}\cdot\log(1/\beta)^{1/4}}{\rho^{1/4}\sqrt{n}} \right)\cdot\sqrt{\upsilon\left(d,\frac{\beta}{4}\right)} +\frac{\tau^2}{\rho^{1/2}n}\cdot\eta\left(d,\frac{\beta}{4}\right)\right)\right)
\\
\nonumber
=& O\left(\min\left(\frac{\tau^2}{\sqrt{\rho} n}\cdot\omega\left(d,\frac{\beta}{2}\right), \frac{\tau}{\rho^{1/4}\sqrt{n}}\cdot\left(\sqrt{\mathrm{tr}}+\frac{\mathrm{rad}(\X)\cdot\log(1/\beta)^{1/4}}{\rho^{1/4}\sqrt{n}} \right)\cdot\sqrt{\upsilon\left(d,\frac{\beta}{4}\right)} +\frac{\tau^2}{\rho^{1/2}n}\cdot\eta\left(d,\frac{\beta}{4}\right)\right)\right.
\\
\nonumber
&\left.+\frac{2^{-2dn}}{\sqrt{\rho} n}\cdot \sqrt{\upsilon\left(d,\frac{\beta}{4}\right)}\cdot\log\left(dn/\beta\right)\right)
\\
\nonumber
=& O\left(\min\left(\frac{\tau^2}{\sqrt{\rho} n}\cdot\omega\left(d,\frac{\beta}{2}\right), \frac{\tau}{\rho^{1/4}\sqrt{n}}\cdot\left(\sqrt{\mathrm{tr}}+\frac{\sqrt{\mathrm{tr}}\cdot\log(1/\beta)^{1/4}}{\rho^{1/4}} \right)\cdot\sqrt{\upsilon\left(d,\frac{\beta}{4}\right)} +\frac{\tau^2}{\rho^{1/2}n}\cdot\eta\left(d,\frac{\beta}{4}\right)\right)\right.
\\
\nonumber
&\left.+\frac{2^{-2dn}}{\sqrt{\rho} n}\cdot \sqrt{\upsilon\left(d,\frac{\beta}{4}\right)}\cdot\log\left(dn/\beta\right)\right)
\\
\label{eq:th:err_ada_cov_10}
=&O\left(\mathrm{Noise}\left(\X,\tau,\frac{\rho}{2},\frac{\beta}{2}\right)\cdot \frac{\log(1/\beta)^{1/4}}{\rho^{1/4}}+2^{-dn}\right).
\end{align}
The second equality is from (\ref{eq:th:err_ada_cov_1}). The third equality is by the fact $\mathrm{tr}\geq \frac{\mathrm{rad}^2}{n}$, and the last equality is by (\ref{eq:error_detail}).

Finally, combining 
(\ref{eq:th:err_ada_cov_2}), (\ref{eq:th:err_ada_cov_8}), (\ref{eq:th:err_ada_cov_9}) and (\ref{eq:th:err_ada_cov_10}), we have
\begin{align*}
\|\widetilde{\mathbf{\Sigma}}_{\mathrm{Ada}}-\mathbf{\Sigma}\|_F = & O\left(\min_{2^{-dn}\leq \tau\leq \tilde{r}}\left(\mathrm{Noise}\left(\X,\tau,\frac{\rho}{2},\frac{\beta}{2}\right)\cdot \frac{\log(1/\beta)^{1/4}}{\rho^{1/4}}+\gamma(\widetilde{\X},\tau)\right)+2^{-dn}+\frac{\widetilde{r}^2}{\sqrt{\rho}n}\log(dn/\beta)\right)
\\
= & O\left(\min_{2^{-dn}\leq \tau}\left(\mathrm{Noise}\left(\X,\tau,\frac{\rho}{2},\frac{\beta}{2}\right)\cdot \frac{\log(1/\beta)^{1/4}}{\rho^{1/4}}+\gamma(\widetilde{\X},\tau)\right)+2^{-dn}+\frac{\widetilde{r}^2}{\sqrt{\rho}n}\log(dn/\beta)\right)
\\
= & O\left(\min_{2^{-dn}\leq \tau}\left(\mathrm{Noise}\left(\X,\tau,\frac{\rho}{2},\frac{\beta}{2}\right)\cdot \frac{\log(1/\beta)^{1/4}}{\rho^{1/4}}+\gamma(\widetilde{\X},\tau)\right)+2^{-dn}+\frac{\mathrm{rad}(\X)^2}{\sqrt{\rho}n}\log(dn/\beta)\right)
\\
= & O\left(\min_{2^{-dn}\leq \tau}\left(\mathrm{Noise}\left(\X,\tau,\frac{\rho}{2},\frac{\beta}{2}\right)\cdot \frac{\log(1/\beta)^{1/4}}{\rho^{1/4}}\cdot\log(n)+\gamma(\X,\tau)\cdot \frac{\log(dn/\beta)}{\sqrt{\rho}}\right)+2^{-dn}\right)
\\
= & O\left(\min_{\tau}\left(\mathrm{Noise}\left(\X,\tau,\frac{\rho}{2},\frac{\beta}{2}\right)\cdot \frac{\log(1/\beta)^{1/4}}{\rho^{1/4}}\cdot\log(n)+\gamma(\X,\tau)\cdot \frac{\log(dn/\beta)}{\sqrt{\rho}}\right)+2^{-dn}\right).
\end{align*}
The second line follows from the observation that increasing $\tau$ beyond $\tilde{r}$ will increase $\mathrm{Noise}\left(\X,\tau,\frac{\rho}{2},\frac{\beta}{2}\right)$ but will not reduce $\gamma(\widetilde{\X},\tau)$. The third line is by (\ref{eq:th:err_ada_cov_1}). The fourth line is due to $\gamma(\X,\tau)\geq \gamma(\widetilde{\X},\tau)$ and the following fact: if $\tau\geq\mathrm{rad}(\X)$, then $\frac{\tau^2}{n}\geq \frac{\mathrm{rad}(\X)^2}{n}$ and $\frac{\tau\sqrt{\mathrm{tr}}}{\sqrt{n}}\geq \frac{\mathrm{rad}(\X)^2}{n}$; otherwise, $\gamma(\X,\tau)\geq \frac{\mathrm{rad}(\X)^2}{n}$. The last line is because
\[2^{-dn} = \Omega\left(\mathrm{Noise}\left(\X,2^{-dn},\frac{\rho}{2},\frac{\beta}{2}\right)\cdot \frac{\log(1/\beta)^{1/4}}{\rho^{1/4}}\cdot\log(n)\right),\]
and for any $\tau<2^{-nd}$
\[\gamma(\X,\tau)\geq \gamma(\X,2^{-dn}).\] 
\end{proof}

\section{Extension to pure-DP}
\label{sec:pure_dp}

Our proposed algorithms can be extended to the  $\varepsilon$-DP setting, by simply replacing Gaussian noise with Laplace noise. However, the error is enlarged since the added noise is proportional to $\ell_1$ sensitivities, which can be much larger than $\ell_2$ sensitivities. Using the fact that $\|\x\|_2\leq 1$ implies $\|\x\|_1\leq \sqrt{d}$, we immediately obtain an upperbound for the $\ell_1$ sensitivity of computing $\mathbf{\Sigma}$ based on the $\ell_2$ sensitivity:

\begin{lemma}
\label{lm:lap_sens}
For any $\X,\X'\in\mathcal{B}_d^n$, $\X\sim\X'$,
\[\|\mathbf{\Sigma}(\X) - \mathbf{\Sigma}(\X')\|_1 \leq \frac{\sqrt{2}d}{n}.\]
\end{lemma}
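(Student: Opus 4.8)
The plan is to reduce this $\ell_1$-sensitivity bound to the already-known $\ell_2$-sensitivity bound for $\bSigma$ via a routine norm conversion; no new machinery is needed. Recall from Section~\ref{sec:pre} that $\bSigma$ has $\ell_2$-sensitivity $\frac{\sqrt 2}{n}$ \cite{biswas2020coinpress}, i.e. for $\X\sim\X'$ we have $\|\bSigma(\X)-\bSigma(\X')\|_F \le \frac{\sqrt 2}{n}$. If a self-contained argument is wanted, this follows because, when $\X,\X'$ differ only in the $i$-th column, $\bSigma(\X)-\bSigma(\X') = \frac1n\big(X_iX_i^T - X_i'X_i'^T\big)$, and expanding the Frobenius norm gives $\|X_iX_i^T - X_i'X_i'^T\|_F^2 = \|X_i\|_2^4 + \|X_i'\|_2^4 - 2\langle X_i,X_i'\rangle^2 \le 2$ since both columns lie in $\mathcal{B}_d$ and the cross term is nonnegative.

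Next I would convert this Frobenius bound on the $d\times d$ difference matrix into an entrywise $\ell_1$ bound. Viewing a $d\times d$ matrix $M$ as a vector in $\mathbb{R}^{d^2}$ and applying the fact invoked just before the lemma statement (the inequality $\|v\|_1\le\sqrt{k}\,\|v\|_2$ for $v\in\mathbb{R}^{k}$, here with $k=d^2$; equivalently Cauchy--Schwarz), we get $\|M\|_1 = \sum_{j,k}|M_{jk}| \le \sqrt{d^2}\,\big(\sum_{j,k}M_{jk}^2\big)^{1/2} = d\,\|M\|_F$. Taking $M=\bSigma(\X)-\bSigma(\X')$ and combining with the sensitivity bound yields $\|\bSigma(\X)-\bSigma(\X')\|_1 \le d\cdot\frac{\sqrt 2}{n} = \frac{\sqrt 2 d}{n}$, which is the claim.

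There is no genuinely hard step; the only points that need care are (i) being explicit that $\|\cdot\|_1$ denotes the entrywise $\ell_1$ norm, which is the quantity relevant when the Laplace mechanism is applied coordinatewise to the entries of $\bSigma$, and (ii) the entry count: if one exploits symmetry and perturbs only the $d(d+1)/2$ entries on and above the diagonal, the conversion factor becomes $\sqrt{d(d+1)/2}$ rather than $d$, but since $d(d+1)/2\le d^2$ the stated bound $\frac{\sqrt 2 d}{n}$ still holds. I do not expect to need the rank-$2$ structure of the difference matrix, since the crude $\sqrt{d^2}$ factor from Cauchy--Schwarz is already enough (a rank-based argument would in fact only recover the weaker constant $\frac{2d}{n}$).
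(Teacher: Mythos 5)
Your argument is correct and matches what the paper intends: the paper gives no explicit proof, only the remark that the $\ell_1$-sensitivity follows ``immediately'' from the $\ell_2$-sensitivity via the $\ell_1$--$\ell_2$ norm comparison, and your proof spells out exactly that conversion (vectorizing the $d\times d$ difference matrix into $\mathbb{R}^{d^2}$ and applying Cauchy--Schwarz to get $\|M\|_1\le d\|M\|_F$, then combining with $\|\bSigma(\X)-\bSigma(\X')\|_F\le \sqrt{2}/n$). Your two side remarks are also sound: restricting to the $d(d+1)/2$ on-or-above-diagonal entries only improves the constant, and the cruder ``triangle inequality on $X_iX_i^T$ and $X_i'X_i'^T$ then $\|x\|_1\le\sqrt d$'' route would only yield $2d/n$, so the detour through the Frobenius bound is genuinely what recovers the stated $\sqrt 2\,d/n$.
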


The $\ell_1$ sensitivity for computing $\mathbf{\Lambda}$ was shown by~\cite{amin2019differentially} to be:

\begin{lemma}[\cite{amin2019differentially}]
\label{lm:eigensensl1}
For any $\X,\X'\in\mathcal{B}_d^n$, $\X\sim\X'$, 
\begin{equation}
\nonumber
\|\mathbf{\Lambda}-\mathbf{\Lambda}'\|_{1}\leq \frac{2}{n}.
\end{equation}
\end{lemma}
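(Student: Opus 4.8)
The plan is to recycle essentially the entire setup from the proof of Lemma~\ref{lm:eigensens}, observing that the $\ell_1$ norm is actually \emph{easier} to control than the $\ell_2$ norm because it decomposes additively and therefore needs no analogue of Fact~\ref{claim:vecineq}. As before, write $f_k(\X) = \lambda_k(\tfrac1n\X\X^T)$, let $\X$ and $\X'$ differ only in their $i$-th column, and let $\X_{(-i)}$ denote $\X$ (equivalently $\X'$) with that column removed. Since $\tfrac1n\X\X^T = \tfrac1n\X_{(-i)}\X_{(-i)}^T + \tfrac1n X_iX_i^T$ is a rank-one positive semidefinite perturbation of $\tfrac1n\X_{(-i)}\X_{(-i)}^T$, Weyl's inequality gives $f_k(\X)\ge f_k(\X_{(-i)})$ for every $k$, and symmetrically $f_k(\X')\ge f_k(\X_{(-i)})$.

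Next I would apply the triangle inequality coordinate-by-coordinate, $|f_k(\X)-f_k(\X')|\le (f_k(\X)-f_k(\X_{(-i)})) + (f_k(\X')-f_k(\X_{(-i)}))$, where the monotonicity just recorded lets me drop the absolute values on the right. Summing over $k$ and telescoping the eigenvalue sums into traces exactly as in Lemma~\ref{lm:eigensens},
\begin{align*}
\|\mathbf{\Lambda}-\mathbf{\Lambda}'\|_1 = \sum_{k=1}^d |f_k(\X)-f_k(\X')| &\le \sum_{k=1}^d (f_k(\X)-f_k(\X_{(-i)})) + \sum_{k=1}^d (f_k(\X')-f_k(\X_{(-i)})) \\
&= \tfrac1n\|X_i\|_2^2 + \tfrac1n\|X_i'\|_2^2 \le \tfrac2n,
\end{align*}
where the last step uses that the data columns live in $\mathcal{B}_d$. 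This is the claimed bound.

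I do not expect any real obstacle; the single point that needs care is the direction of Weyl's inequality — adjoining a positive semidefinite rank-one term can only raise eigenvalues — which is precisely what lets the sum of absolute eigenvalue differences collapse into a difference of traces. One could instead invoke Lemma~\ref{lm:eigensens} together with $\|\cdot\|_1 \le \sqrt d\,\|\cdot\|_2$ to get $\|\mathbf{\Lambda}-\mathbf{\Lambda}'\|_1 \le \sqrt{2d}/n$, but that is weaker by a $\sqrt d$ factor, so the direct argument above — which respects the $\ell_1$ geometry — is the one to present.
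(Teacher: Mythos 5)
Your proof is correct. The paper itself cites this lemma from \cite{amin2019differentially} without reproducing a proof, so there is no in-paper argument to compare against directly; however, your argument is exactly the $\ell_1$ specialization of the paper's own proof of Lemma~\ref{lm:eigensens}: the same decomposition through $\X_{(-i)}$, the same use of Weyl's inequality to get $f_k(\X)\ge f_k(\X_{(-i)})$ and $f_k(\X')\ge f_k(\X_{(-i)})$, and the same telescoping of $\sum_k f_k$ into a trace difference $\frac{1}{n}\|X_i\|_2^2\le\frac1n$. The simplification you identify is genuine and well-observed: because the $\ell_1$ norm is additive, the coordinate-wise triangle inequality suffices and Fact~\ref{claim:vecineq} is not needed, making this case strictly easier than the $\ell_2$ one.
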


Under $\varepsilon$-DP, our base mechanism is the Laplace mechanism. In the same spirit as the Gaussian mechanism, we draw a symmetric $d\times d$ noise matrix $\W$ where the entries on and above the diagonal are i.i.d. $\mathrm{Lap}(1)$. For convenience, we refer to $\W$ as a Symmetric Laplace Wigner matrix and write $\W\sim \mathrm{SLW}(d)$. Then, we scale the noise matrix by the factor $\frac{\sqrt{2}d}{\varepsilon n}$. We refer to the modified algorithm as $\mathrm{LapCov}$. For $\mathrm{SeparateCov}$, we use $\mathrm{LapCov}$ to replace $\mathrm{GaussCov}$ for privatizing eigenvectors and add Laplace noise in place of Gaussian noise to privatize the eigenvalues.

\subsection{Concentration inequalities}
Before we proceed to anaylze the error of our modified algorithms, some concentration bounds for the Laplace noise vectors and matrices are in order. Our derivations are based on two lemmata.

\begin{lemma} [\cite{sambale2020some}]
    \label{lm:subexp_vec_l2_inq}
Let $Y_1,...,Y_m$ be independent sub-exponential random variables with $\mathbf{E}[Y_k]=0$ for $1\leq k \leq m$. Let $f:\mathbb{R}^m \rightarrow \mathbb{R}$ be convex and $1$-Lipschitz. Then for $Y=(Y_1,...,Y_m)$, and some constant $c>0$,
\begin{equation}
\nonumber
\Pr[|f(Y)-\mathbf{E}[f(Y)]| > t] \leq 2\exp\left(-c\cdot\frac{t}{\log(m)}\right).
\end{equation}
\end{lemma}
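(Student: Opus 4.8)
The plan is to reduce to the case of \emph{bounded} independent random variables by truncation, and then apply Talagrand's dimension-free concentration inequality for convex Lipschitz functions of bounded independent variables, paying separately for the (rare) truncation residual. Throughout, constants may depend on $K:=\max_k\|Y_k\|_{\psi_1}$, which is forced since the statement is not scale-invariant. Two preliminary reductions: since $f\mapsto f-\mathbf{E}f(Y)$ is invariant under adding a constant to $f$, I may assume $f(0)=0$, so $|f(y)|\le\|y\|_2$; and the ``small'' regime $m\le m_0(K)$ (a constant depending on $K$) is handled directly, because there $\log m=O_K(1)$ while $f(Y)-\mathbf{E}f(Y)$ has $\psi_1$-norm at most $2\,\|\,\|Y\|_2\,\|_{\psi_1}\le 2\sum_k\|Y_k\|_{\psi_1}\le 2mK=O_K(1)$ (using that $\|\cdot\|_{\psi_1}$ is a norm and $\|y\|_2\le\sum_k|y_k|$), so the tail bound follows from the definition of the $\psi_1$-norm. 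Hence assume $m>m_0(K)$ and set the truncation level $R:=K\log m$.

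Write $Y_k=U_k+V_k$ with $U_k:=Y_k\mathbb{I}(|Y_k|\le R)$ and $V_k:=Y_k\mathbb{I}(|Y_k|>R)$, and let $U,V$ be the corresponding vectors, so each $U_k\in[-R,R]$ and $Y=U+V$. Since $f$ is $1$-Lipschitz, $|f(Y)-f(U)|\le\|V\|_2$ pointwise and $|\mathbf{E}f(Y)-\mathbf{E}f(U)|\le\mathbf{E}\|V\|_2$; also a median $\mathbf{M}f(U)$ and the mean $\mathbf{E}f(U)$ differ by $O(R)$ (integrate the Talagrand tail below). A routine moment computation from $\Pr[|Y_k|>u]\le 2e^{-u/K}$ with $R=K\log m$ gives $\mathbf{E}\|V\|_2^2=\sum_k\mathbf{E}[Y_k^2\mathbb{I}(|Y_k|>R)]=O(K^2\log^2 m)$, hence $\mathbf{E}\|V\|_2=O(K\log m)$. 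Combining,
\[
|f(Y)-\mathbf{E}f(Y)| \le \|V\|_2 + |f(U)-\mathbf{M}f(U)| + C_1K\log m
\]
for an absolute constant $C_1$. For $t$ below a sufficiently large $K$-dependent multiple of $\log m$ the asserted inequality is vacuous (its right-hand side is $\ge1$ once $c$ is small enough), so assume $t$ exceeds that threshold; then it suffices to bound $\Pr[\|V\|_2>t/3]$ and $\Pr[|f(U)-\mathbf{M}f(U)|>t/3]$.

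For the first, each $U_k$ lies in an interval of length $2R$ and $f$ is convex and $1$-Lipschitz, so Talagrand's concentration inequality for convex Lipschitz functions of independent bounded random variables (which carries \emph{no} dimension dependence) gives $\Pr[|f(U)-\mathbf{M}f(U)|>s]\le 4\exp(-cs^2/R^2)$; with $s=t/3$ and $t$ at least a large $K$-dependent multiple of $R=K\log m$, this is $\le 4\exp(-c't/\log m)$. For the second, $\|V\|_2^2=\sum_k Z_k$ with $Z_k:=Y_k^2\mathbb{I}(|Y_k|>R)\ge0$ independent; the same moment estimates give $\mathbf{E}\sum_kZ_k=O(K^2\log^2 m)$, $\sum_k\mathrm{Var}(Z_k)\le\sum_k\mathbf{E}[Y_k^4\mathbb{I}(|Y_k|>R)]=O(K^4\log^4 m)$, and $\|Z_k\|_{\psi_{1/2}}\le\|Y_k^2\|_{\psi_{1/2}}=O(\|Y_k\|_{\psi_1}^2)=O(K^2)$. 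Since $t^2/9\ge 2\,\mathbf{E}\sum_kZ_k$ in the range under consideration, a standard concentration bound for sums of independent sub-Weibull-$(1/2)$ random variables reduces $\Pr[\sum_kZ_k>t^2/9]$ to $\Pr[\sum_k(Z_k-\mathbf{E}Z_k)>t^2/18]\le 2\exp\!\big(-c\min\big(t^4/(K^4\log^4 m),\,t/K\big)\big)$; both exponents are $\Omega(t/\log m)$ when $m>m_0(K)$ and $t$ exceeds a suitable $K$-dependent multiple of $\log m$ (the first since then $t^3$ exceeds a $K$-dependent multiple of $\log^3 m$, the second since $\log m$ exceeds a $K$-dependent constant). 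Adding the two bounds and adjusting $c$ (using the trivial bound $\le1$ wherever $2e^{-ct/\log m}\ge1$) gives the claimed two-sided estimate, the argument being sign-symmetric because the displayed inequality bounds $|f(Y)-\mathbf{E}f(Y)|$ directly.

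The step I expect to be the main obstacle is the second one: controlling $\|V\|_2$, the $\ell_2$-norm of the truncation residual. Its square is a sum of independent copies of $Y_k^2\mathbb{I}(|Y_k|>R)$, i.e.\ of squares of (truncated) sub-exponential variables, which are only sub-Weibull of exponent $1/2$ — heavy-tailed and not log-concave — so a naive Bernstein estimate or a ``discard the largest coordinate'' argument does \emph{not} deliver the clean $\exp(-ct/\log m)$ rate for all $t$ (it breaks down around $t\asymp\sqrt m$). One must instead use the correct sub-Weibull concentration inequality for sums, with its $\min$ of a quartic and a square-root exponent, and then verify that for the particular choice $R=\Theta(K\log m)$ that $\min$ dominates $t/\log m$ throughout the non-trivial range of $t$. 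Everything else — the moment computations, the median-versus-mean correction in Talagrand's inequality, the separate treatment of the $m\le m_0(K)$ and small-$t$ regimes, and the bookkeeping of the $K$-dependent constants — is routine.
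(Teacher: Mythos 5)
The paper supplies no proof of this lemma: it is imported directly from \cite{sambale2020some}, so there is no in-paper argument for your proposal to be checked against. As a self-contained reconstruction, your architecture---truncate at $R=K\log m$, apply Talagrand's convex-Lipschitz concentration inequality to the bounded part $U$, and control the truncation residual $\|V\|_2$ separately---is essentially the route taken in the literature for results of this type (going back to Adamczak's 2008 argument, of which the cited Sambale result is a descendant), and your small-$m$, small-$t$, and median-to-mean reductions are all routine and handled correctly.

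The step that deserves more care, and which you rightly flag as the crux, is the bound on $\Pr[\|V\|_2>t/3]$. You invoke a ``standard'' sub-Weibull$(1/2)$ sum bound of the form $\Pr[\sum_k(Z_k-\mathbf{E}Z_k)>u]\le 2\exp(-c\min(u^2/\sigma^2,(u/L)^{1/2}))$ with $\sigma^2=\sum_k\mathrm{Var}(Z_k)$ and $L=\max_k\|Z_k\|_{\psi_{1/2}}$. You are right that the variance must sit in the Gaussian-regime exponent: using $\sum_k\|Z_k\|_{\psi_{1/2}}^2\asymp mK^4$ there instead would make that exponent of order $t^4/m$, which is useless in the relevant range $t\asymp\log m$. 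However, the inequalities that actually carry the variance in the Gaussian part (Adamczak 2008, Theorem~4, and its sub-Weibull descendants) put $L=\|\max_k|Z_k|\|_{\psi_{1/2}}$, which here is $\Theta(K^2\log^2m)$ rather than $K^2$, and/or carry an extra prefactor of $m$ on the heavy-tail term. Neither change defeats the argument---the larger $L$ in fact gives the exponent $t/\log m$ directly without needing $\log m\gtrsim K$, and the prefactor of $m$ is absorbed once $t\ge C_2(K)\log m$, both of which your setup already provides---but as written your proposal asserts a slightly stronger concentration bound than is standard, and the reader should substitute the correct form. With that substitution the proof is sound, and the choice $R=\Theta(K\log m)$ makes the $\min$ dominate $t/\log m$ exactly as you verify.
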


\begin{lemma} [\cite{bandeira2016sharp}]
\label{lm:matrix_opnorm}
For any $d>0$, let $\W$ be a $d\times d$ Symmetric Laplace Wigner matrix . Then,
\[\mathbf{E}[\|\W\|_{2}] \leq 3\sqrt{d}+\frac{9\log(d)}{\sqrt{\log(1.5)}}.\]
\end{lemma}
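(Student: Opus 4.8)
The plan is to derive this as a specialization of the sharp operator-norm bound of Bandeira and van Handel~\cite{bandeira2016sharp} to the symmetric random matrix whose entries on and above the diagonal are i.i.d.\ $\mathrm{Lap}(1)$. Their bound is controlled by two quantities: the largest row standard deviation $\sigma(\W) := \max_i \bigl(\sum_j \mathbf{E}[w_{ij}^2]\bigr)^{1/2}$, which governs the leading term, and a ``size'' parameter of the individual entries, which governs the correction term. The first step is to evaluate these: since $\mathrm{Var}(\mathrm{Lap}(1)) = 2$ and every row (diagonal included) has $d$ such entries, $\sigma(\W) = \sqrt{2d}$; and the Laplace entries, while unbounded, have sub-exponential ($\psi_1$) norm $\Theta(1)$.

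The only real subtlety is the unboundedness of the Laplace entries, so that the bounded-entry form of the Bandeira--van Handel inequality (whose size parameter is $\max_{ij}\|w_{ij}\|_\infty = \infty$ here) does not apply directly. I would address this in one of two equivalent ways. The direct route is to invoke the version of~\cite{bandeira2016sharp} for entries with a finite $\psi_1$-norm: this replaces the ``$\max_{ij}\|w_{ij}\|_\infty \cdot \sqrt{\log d}$'' correction by a ``$\Theta(1)\cdot\log d$'' correction, exactly the shape of the second term, and choosing the free slack parameter $\epsilon = \tfrac12$ yields the $\sqrt{\log(1.5)}$ in the denominator and the constant $9$. The self-contained route is truncation: write $\W = \W_{\le t} + \W_{>t}$ at level $t = \Theta(\log d)$, apply the bounded-entry bound to the recentered matrix $\W_{\le t}$ (with size parameter $\le t$), and dispose of $\mathbf{E}\|\W_{>t}\|$ crudely via $\mathbf{E}\|\W_{>t}\| \le \mathbf{E}\|\W_{>t}\|_F$ together with $\Pr[|\mathrm{Lap}(1)| > t] = e^{-t}$, which is negligible once $t = \Theta(\log d)$. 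Feeding $\sigma(\W) = \sqrt{2d}$ into the leading term, which is $2\sigma(\W) = 2\sqrt{2}\,\sqrt d < 3\sqrt d$, and collecting constants then gives the stated bound.

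I expect the main obstacle to be precisely this truncation/tail step: a one-shot truncation feeds $t \asymp \log d$ into a bound that itself carries an extra $\sqrt{\log d}$, producing a $\Theta(\log^{3/2}d)$ correction rather than the $\Theta(\log d)$ claimed. Obtaining the sharp $\log d$ power --- and, for the leading term, the sharp $\Theta(\sqrt d)$ rather than the $\Theta(\sqrt{d\log d})$ that a naive matrix Bernstein inequality would give --- is exactly what requires the machinery of~\cite{bandeira2016sharp} (either its $\psi_1$-refinement or a multi-scale truncation), after which only constant-chasing with $\epsilon = \tfrac12$ remains.
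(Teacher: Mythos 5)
The paper does not actually prove this lemma; it is quoted directly from Bandeira and van Handel, and the bracketed citation in the lemma header is the whole justification the paper offers. There is therefore no in-paper argument to compare your sketch against, so I'll assess it on its own terms.

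Your reconstruction is sound in its broad strokes and the internal checks work out: $\mathrm{Lap}(1)$ has variance $2$, so the largest row standard deviation is $\sigma(\W)=\sqrt{2d}$ and the leading-order term $2\sigma=2\sqrt{2}\sqrt{d}\approx 2.83\sqrt{d}$ does sit below the stated $3\sqrt{d}$; and Laplace entries have $\psi_1$-norm $\Theta(1)$, which is exactly why the bounded-entry form of the Bandeira--van Handel bound (whose correction term is $\sigma_*\sqrt{\log d}$ with $\sigma_*=\max_{ij}\|w_{ij}\|_\infty=\infty$ here) cannot be invoked as is. You are also right to be suspicious of the one-shot truncation route: truncating at $t\asymp\log d$ and feeding $\sigma_*\le t$ into the bounded-entry bound produces a $t\sqrt{\log d}\asymp\log^{3/2}d$ correction, a $\sqrt{\log d}$ factor worse than the lemma claims. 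The $\log d$ correction in the statement is precisely the hallmark of a $\psi_1$-tailored form of the Bandeira--van Handel result (where $\sqrt{\log n}$ is replaced by $(\log n)^{1/\alpha}$ for $\psi_\alpha$ entries), or equivalently of a multi-scale truncation, which is what the cited reference supplies and what you correctly identify as the black box. In short: your sketch is a reasonable account of how the lemma follows from the cited source, it correctly locates the one nontrivial issue, and it is honest that it does not resolve that issue from scratch --- which is also how the paper treats it, by citing rather than proving.
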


Based on these, we have
\begin{lemma}
\label{lm:lap_vec_inq}
For any $d>0$, $\beta>0$, let $Y=(Y_1,...,Y_d)$ where $Y_k \sim \mathrm{Lap}(1)$, for $1\leq k \leq d$. Then for some constant $c>0$, with probability at least $1-\beta$,
\[\|Y\|_2 \leq \frac{3}{2}\sqrt{d} + O\left(\log(1/\beta)\cdot\log(d)\right).\]
\end{lemma}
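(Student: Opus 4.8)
The plan is to apply the sub-exponential concentration inequality of Lemma~\ref{lm:subexp_vec_l2_inq} to the Euclidean-norm function, and then control the resulting expectation by a one-line second-moment computation. The key observation is that $f:\mathbb{R}^d\to\mathbb{R}$, $f(y)=\|y\|_2$, is convex and $1$-Lipschitz in the Euclidean metric (by the reverse triangle inequality $\bigl|\|y\|_2-\|y'\|_2\bigr|\le\|y-y'\|_2$), which is exactly the hypothesis needed to invoke that lemma.

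\textbf{Step 1 (concentration around the mean).} Each coordinate $Y_k\sim\mathrm{Lap}(1)$ is centered and sub-exponential with a universal sub-exponential norm (its moment generating function $1/(1-t^2)$ is finite for $|t|<1$), so Lemma~\ref{lm:subexp_vec_l2_inq} with $m=d$ gives, for a universal constant $c>0$ and all $t>0$,
\[\Pr\Bigl[\bigl|\|Y\|_2-\mathbf{E}[\|Y\|_2]\bigr|>t\Bigr]\le 2\exp\!\Bigl(-c\,\tfrac{t}{\log d}\Bigr).\]
Choosing $t=\tfrac1c\log(2/\beta)\log d=O(\log(1/\beta)\log d)$ makes the right-hand side at most $\beta$, so with probability at least $1-\beta$, $\|Y\|_2\le\mathbf{E}[\|Y\|_2]+O(\log(1/\beta)\log d)$.

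\textbf{Step 2 (bounding the mean).} By Jensen's inequality and independence of the coordinates,
\[\mathbf{E}[\|Y\|_2]\le\sqrt{\mathbf{E}[\|Y\|_2^2]}=\sqrt{\textstyle\sum_{k=1}^d\mathbf{E}[Y_k^2]}=\sqrt{2d},\]
since $\mathrm{Var}(\mathrm{Lap}(1))=2$. As $\sqrt2<\tfrac32$, this is at most $\tfrac32\sqrt d$; combining with Step~1 yields the claim.

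I do not expect a genuine obstacle here; the two points requiring mild care are (i) checking that $\mathrm{Lap}(1)$ has bounded sub-exponential norm, so that the constant $c$ in Lemma~\ref{lm:subexp_vec_l2_inq} is truly independent of $d$, and (ii) noticing that it is precisely the slack between $\sqrt2$ and $3/2$ that lets the leading constant be stated cleanly as $3/2$ rather than being folded into the lower-order term.
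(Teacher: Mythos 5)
Your proof is correct and follows essentially the same route as the paper: both invoke Lemma~\ref{lm:subexp_vec_l2_inq} with $m=d$ on the convex $1$-Lipschitz function $y\mapsto\|y\|_2$ and then bound $\mathbf{E}[\|Y\|_2]$ by a second-moment calculation. The only (inconsequential) difference is the inequality used for the mean: you apply Jensen to get $\mathbf{E}[\|Y\|_2]\le\sqrt{2d}$, while the paper uses $\sqrt{z}\le\tfrac{1+z}{2}$ to land exactly on $\tfrac{3}{2}\sqrt{d}$; your constant is in fact slightly sharper.
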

\begin{proof}
First, note that $\mathbf{E}\left[\|Y\|_2\right] \leq \frac{3}{2}\sqrt{d}$. Indeed,
\begin{equation}
\nonumber
\frac{\mathbf{E}\left[\|Y\|_2\right]}{\sqrt{d}} = \mathbf{E}\left[\frac{\|Y\|_2}{\sqrt{d}}\right]\leq \mathbf{E}\left[\frac{1}{2}\left(1+\frac{\|Y\|_2^2}{d}\right)\right] = \frac{1}{2}\left(1+\frac{\sum_{k}\mathbf{E}\left[Y_k^2\right]}{d}\right) = \frac{1}{2}\left(1+\frac{\sum_{k}2}{d}\right) = \frac{3}{2},
\end{equation}
where the inequality is due to $\sqrt{z} \leq \frac{1+z}{2}$ for $z\geq 0$ and the second last equality is due to $Y_k \sim \mathrm{Lap}(1)$ with mean $0$ and variance $2$. Applying Lemma \ref{lm:subexp_vec_l2_inq} with $m=d$ then gives the stated inequality.
\end{proof}

\begin{lemma}
\label{lm:matrix_opnorm_bound}
For any $d>0$, $\beta>0$, given $\W\sim \mathrm{SLW}(d)$, then, with probability at least $1-\beta$,
\begin{equation}
\nonumber
\|\W\|_2 \leq 3\sqrt{d}+O\left(\log(1/\beta)\cdot\log(d)\right).
\end{equation}
\end{lemma}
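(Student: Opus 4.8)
The plan is to combine the expectation bound of Lemma~\ref{lm:matrix_opnorm} with a concentration-of-measure argument supplied by Lemma~\ref{lm:subexp_vec_l2_inq}. View $\W\sim\mathrm{SLW}(d)$ as determined by the vector $Y=(w_{jk})_{1\le j\le k\le d}\in\mathbb{R}^m$ of its $m=d(d+1)/2$ free entries, which are i.i.d.\ $\mathrm{Lap}(1)$, hence independent, mean-zero, and sub-exponential. Define $g:\mathbb{R}^m\to\mathbb{R}$ by $g(Y)=\|\W(Y)\|_2$, where $\W(Y)$ is the symmetric matrix reconstructed from these free entries. Since $Y\mapsto\W(Y)$ is linear and the operator norm is a norm, $g$ is convex.

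Next I would bound the Lipschitz constant of $g$ with respect to the Euclidean norm on $\mathbb{R}^m$. If $Y,Y'$ differ by $\Delta=(\delta_{jk})_{j\le k}$, then $E:=\W(Y)-\W(Y')$ is the symmetric matrix with on-/above-diagonal entries $\delta_{jk}$, so $\|E\|_F^2=\sum_j\delta_{jj}^2+2\sum_{j<k}\delta_{jk}^2\le 2\|\Delta\|_2^2$. Combining the triangle inequality for $\|\cdot\|_2$ with $\|E\|_2\le\|E\|_F$ gives $|g(Y)-g(Y')|\le\sqrt{2}\,\|\Delta\|_2$. Hence $f:=g/\sqrt{2}$ is convex and $1$-Lipschitz, and Lemma~\ref{lm:subexp_vec_l2_inq} (with this $m$) yields $\Pr[\,|f(Y)-\mathbf{E}[f(Y)]|>t\,]\le 2\exp(-ct/\log m)$ for the absolute constant $c$ from that lemma.

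To conclude, I would set $t=\tfrac{1}{c}\log(m)\log(2/\beta)$ so the failure probability is at most $\beta$; multiplying back by $\sqrt{2}$ and using $\log m=\log(d(d+1)/2)=O(\log d)$ gives $\bigl|\,\|\W\|_2-\mathbf{E}[\|\W\|_2]\,\bigr|=O(\log(1/\beta)\log d)$ with probability at least $1-\beta$. Plugging in $\mathbf{E}[\|\W\|_2]\le 3\sqrt{d}+\tfrac{9\log d}{\sqrt{\log 1.5}}$ from Lemma~\ref{lm:matrix_opnorm} and absorbing the $O(\log d)$ term into the error term yields $\|\W\|_2\le 3\sqrt{d}+O(\log(1/\beta)\log d)$.

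I expect the only delicate point to be the Lipschitz computation, specifically correctly tracking the effect of symmetry (each off-diagonal free coordinate appears twice in $\W$), which is exactly what produces the harmless $\sqrt{2}$ factor; once $f=g/\sqrt2$ is seen to be $1$-Lipschitz and convex, the rest is a routine invocation of the two cited lemmata together with a standard choice of $t$.
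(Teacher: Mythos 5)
Your proof is correct and follows essentially the same route as the paper: parametrize $\W$ by its $m=d(d+1)/2$ independent free entries, observe that the operator norm is a convex Lipschitz function of this parametrization, apply Lemma~\ref{lm:subexp_vec_l2_inq} for concentration around the mean, and plug in the expectation bound from Lemma~\ref{lm:matrix_opnorm}. You are in fact slightly more careful than the paper, which asserts $1$-Lipschitzness without noting the $\sqrt{2}$ factor coming from each off-diagonal free coordinate appearing twice in $\W$; as you point out, this factor is harmless and disappears into the $O(\cdot)$.
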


\begin{proof}
Note that the function of computing $\|\cdot\|_{2}$ is convex and $1$-Lipschitz for matrices \cite{wainwright2019high}. Note that $\W$ has $m=d(d+1)/2$ independent entries. We apply Lemma \ref{lm:subexp_vec_l2_inq} to get the stated inequality.
\end{proof}

\begin{lemma}
\label{lm:lap_fnorm_bound}
For any $d>0$, $\beta>0$, given $\W\sim \mathrm{SLW}(d)$, then, with probability at least $1-\beta$,
\begin{equation}
\nonumber
\|\W\|_F \leq \frac{3}{2}d+O\left(\log(1/\beta)\cdot\log(d)\right).
\end{equation}
\end{lemma}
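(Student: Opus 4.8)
The plan is to mirror the arguments already used for Lemma~\ref{lm:lap_vec_inq} and Lemma~\ref{lm:matrix_opnorm_bound}: first bound the expectation $\mathbf{E}[\|\W\|_F]$ by $\tfrac{3}{2}d$, and then invoke the sub-exponential concentration inequality of Lemma~\ref{lm:subexp_vec_l2_inq} to control the fluctuation around this mean.

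For the expectation, I would first use the symmetry of $\W$ to write $\|\W\|_F^2 = \sum_{j} w_{jj}^2 + 2\sum_{j<k} w_{jk}^2$, where the $m = d(d+1)/2$ entries $w_{jk}$ with $j\le k$ are i.i.d.\ $\mathrm{Lap}(1)$, each of mean $0$ and variance $2$. Taking expectations gives $\mathbf{E}[\|\W\|_F^2] = 2d + 2\cdot\binom{d}{2}\cdot 2 = 2d^2$. Then, exactly as in the proof of Lemma~\ref{lm:lap_vec_inq}, I would combine Jensen's inequality with $\sqrt{z}\le\frac{1+z}{2}$ for $z\ge 0$:
\[
\frac{\mathbf{E}[\|\W\|_F]}{d} \;=\; \mathbf{E}\!\left[\frac{\|\W\|_F}{d}\right] \;\le\; \mathbf{E}\!\left[\frac{1}{2}\left(1+\frac{\|\W\|_F^2}{d^2}\right)\right] \;=\; \frac{1}{2}\left(1+2\right) \;=\; \frac{3}{2},
\]
so $\mathbf{E}[\|\W\|_F]\le \tfrac{3}{2}d$.

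For the concentration, I would view $\|\W\|_F$ as a function of the vector $v\in\mathbb{R}^m$ of independent entries: writing $\|\W\|_F^2 = v^T D v$ with $D$ the diagonal matrix having entry $1$ in coordinates of diagonal positions of $\W$ and $2$ in coordinates of off-diagonal positions, we get $\|\W\|_F = \|D^{1/2}v\|_2$, which is convex (a norm composed with a linear map). Hence $\frac{1}{\sqrt 2}\|\W\|_F$ is convex and $1$-Lipschitz in $v$. Applying Lemma~\ref{lm:subexp_vec_l2_inq} to $\frac{1}{\sqrt 2}\|\W\|_F$ with $m = d(d+1)/2$, and using $\log m = O(\log d)$, yields that with probability at least $1-\beta$, $\big|\|\W\|_F - \mathbf{E}[\|\W\|_F]\big| = O(\log(1/\beta)\log d)$. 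Combining this with the bound $\mathbf{E}[\|\W\|_F]\le\tfrac{3}{2}d$ gives the statement.

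The only subtlety — and the "main obstacle," such as it is — is that $\|\W\|_F$ is $\sqrt 2$-Lipschitz rather than exactly $1$-Lipschitz in the independent coordinates, because each off-diagonal entry is counted twice in the Frobenius norm; this costs only a constant factor in the deviation term, absorbed into the $O(\cdot)$. One must also note $\log m = \log(d(d+1)/2) = O(\log d)$ so that the claimed $\log d$ dependence is correct, and that the sub-exponentiality of $\mathrm{Lap}(1)$ needed by Lemma~\ref{lm:subexp_vec_l2_inq} is immediate.
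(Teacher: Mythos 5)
Your proof is correct and follows essentially the same route as the paper: bound $\mathbf{E}[\|\W\|_F]\le\tfrac{3}{2}d$ via $\mathbf{E}[\|\W\|_F^2]=2d^2$ and the pointwise inequality $\sqrt{z}\le\tfrac{1+z}{2}$, then apply Lemma~\ref{lm:subexp_vec_l2_inq} with $m=d(d+1)/2$. Your explicit observation that $\|\W\|_F$ is only $\sqrt{2}$-Lipschitz in the $m$ independent coordinates (off-diagonal entries being counted twice), and hence Lemma~\ref{lm:subexp_vec_l2_inq} should be applied to $\tfrac{1}{\sqrt2}\|\W\|_F$, is a detail the paper silently absorbs into the $O(\cdot)$; it is a welcome clarification but not a different approach.
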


\begin{proof}
Similar to the proof of Lemma \ref{lm:lap_vec_inq}, we compute $\mathbf{E}\left[\|\W\|_F\right]$ and apply Lemma \ref{lm:subexp_vec_l2_inq} with $m=d(d+1)/2$.
\begin{align}
\nonumber
\frac{\mathbf{E}\left[\|\W\|_F\right]}{d} &\leq \frac{1}{2}\cdot\mathbf{E}\left[1+\frac{\|\W\|_F^2}{d^2}\right] 
\\
\nonumber
&= \frac{1}{2}\left(1+\frac{\sum_{k}\mathbf{E}\left[W_{k,k}^2\right]+2\sum_{j<k}\mathbf{E}\left[W_{j,k}^2\right]}{d^2}\right) 
\\
\nonumber
&= \frac{1}{2}\left(1+\frac{2d+2d(d-1)}{d^2}\right) = \frac{3}{2}.
\end{align}
\end{proof}

\subsection{Error bounds}
\label{sec:error_pure_dp}

Based on Lemma~\ref{lm:lap_fnorm_bound}, $\mathrm{LapCov}$ achieves error $O\left(\frac{d^2}{\varepsilon n}+\frac{d}{\varepsilon n}\cdot\log(1/\beta)\cdot\log(d)\right) = \tilde{O}\left(\frac{d^2}{n}\right)$ and by Lemma \ref{lm:lap_vec_inq} and \ref{lm:matrix_opnorm_bound}, we have

\begin{lemma}
\label{lm:err_lap_sep_cov}
Given any $\varepsilon > 0$, for any $\X\in \mathcal{B}_d^n$, $\mathrm{SeparateCov}$ preserves $\varepsilon$-DP and for any $\beta>0$, with probability at least $1-\beta$, $\mathrm{SeparateCov}$ returns a $\widetilde{\mathbf{\Sigma}}_{\mathrm{Lap}}$ such that
\begin{align*}
\|\widetilde{\mathbf{\Sigma}}_{\mathrm{Sep}}-\mathbf{\Sigma}\|_F =& O\left(\frac{\sqrt{d\cdot\mathrm{tr}}}{\sqrt{\varepsilon n} }\cdot \sqrt{\sqrt{d} +\log(1/\beta)\cdot\log(d)}+\frac{1}{\varepsilon n}\left(\sqrt{d}+\log(1/\beta)\cdot \log(d)\right)\right)
\\
=&\tilde{O}\left(\frac{d^{3/4}\sqrt{\mathrm{tr}}}{\sqrt{n}}+\frac{\sqrt{d}}{n}\right).
\end{align*}
\end{lemma}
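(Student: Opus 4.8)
The plan is to mirror the proof of Theorem~\ref{th:err_sep_cov} almost line by line, substituting the sub-exponential concentration bounds for Laplace noise (Lemmas~\ref{lm:lap_vec_inq} and~\ref{lm:matrix_opnorm_bound}) for the sub-Gaussian ones. For privacy, note that $\mathrm{SeparateCov}$ splits its budget in two: one half privatizes $\mathbf{\Lambda}$ by adding $\frac{4}{\varepsilon n}\cdot\mathbf{Y}$ with $\mathbf{Y}\sim\mathrm{Lap}(1)^d$, which is $\frac{\varepsilon}{2}$-DP by Lemma~\ref{lm:lap} and the $\ell_1$-sensitivity bound $\|\mathbf{\Lambda}-\mathbf{\Lambda}'\|_1\le\frac{2}{n}$ of Lemma~\ref{lm:eigensensl1}; the other half runs $\mathrm{LapCov}$ on $\bSigma$, i.e.\ outputs $\widetilde{\bSigma}_{\mathrm{Lap}} := \bSigma + \frac{2\sqrt2 d}{\varepsilon n}\cdot\W$ with $\W\sim\mathrm{SLW}(d)$, which is $\frac{\varepsilon}{2}$-DP by Lemma~\ref{lm:lap} and the $\ell_1$-sensitivity bound $\|\bSigma(\X)-\bSigma(\X')\|_1\le\frac{\sqrt2 d}{n}$ of Lemma~\ref{lm:lap_sens}. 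Basic composition (Lemma~\ref{lm:composition}) then yields $\varepsilon$-DP for the whole algorithm.

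For utility, I would use the same two-part decomposition as in Theorem~\ref{th:err_sep_cov},
\[
\|\widetilde{\bSigma}_{\mathrm{Sep}}-\bSigma\|_F \le \|\widetilde{\mathbf{P}}_{\mathrm{Sep}}(\widetilde{\mathbf{\Lambda}}_{\mathrm{Sep}}-\mathbf{\Lambda})\widetilde{\mathbf{P}}_{\mathrm{Sep}}^T\|_F + \|\widetilde{\mathbf{P}}_{\mathrm{Sep}}\mathbf{\Lambda}\widetilde{\mathbf{P}}_{\mathrm{Sep}}^T-\bSigma\|_F .
\]
Since $\widetilde{\mathbf{P}}_{\mathrm{Sep}}$ is orthogonal, the first term equals $\|\widetilde{\mathbf{\Lambda}}_{\mathrm{Sep}}-\mathbf{\Lambda}\|_F = \frac{4}{\varepsilon n}\|\mathbf{Y}\|_2$, which by Lemma~\ref{lm:lap_vec_inq} is $O\!\big(\frac{1}{\varepsilon n}(\sqrt d+\log(1/\beta)\log d)\big)$ with probability $1-\frac{\beta}{2}$ — this is the second summand of the claimed bound. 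The second term is handled exactly as in Theorem~\ref{th:err_sep_cov}: orthogonality of $\widetilde{\mathbf{P}}_{\mathrm{Sep}}$ together with $\bSigma=\mathbf{P}\mathbf{\Lambda}\mathbf{P}^T$ gives the identity $\|\widetilde{\mathbf{P}}_{\mathrm{Sep}}\mathbf{\Lambda}\widetilde{\mathbf{P}}_{\mathrm{Sep}}^T-\bSigma\|_F^2 = 2\sum_j\lambda_j\big(\lambda_j - \widetilde{P}_{\mathrm{Sep},j}^T\bSigma\widetilde{P}_{\mathrm{Sep},j}\big)$, so it suffices to bound each $\lambda_j - \widetilde{P}_{\mathrm{Sep},j}^T\bSigma\widetilde{P}_{\mathrm{Sep},j}$. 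The only ingredient that changes is the uniform perturbation estimate: for every unit $u$, $|u^T\widetilde{\bSigma}_{\mathrm{Lap}}u - u^T\bSigma u| = \frac{2\sqrt2 d}{\varepsilon n}|u^T\W u| \le \frac{2\sqrt2 d}{\varepsilon n}\|\W\|_2 =: \Delta$, and by Lemma~\ref{lm:matrix_opnorm_bound} this holds with probability $1-\frac{\beta}{2}$ with $\Delta = O\!\big(\frac{d}{\varepsilon n}(\sqrt d+\log(1/\beta)\log d)\big)$. Feeding this into the same four-inequality chain from Theorem~\ref{th:err_sep_cov} — take $P_j'$ to be the top eigenvector of $\bSigma$ inside $\mathrm{span}\{\widetilde{P}_{\mathrm{Sep},j+1},\dots,\widetilde{P}_{\mathrm{Sep},d}\}$, apply the uniform bound twice together with Courant--Fischer and $\lambda_j\le\lambda_j'$ — gives $\widetilde{P}_{\mathrm{Sep},j}^T\bSigma\widetilde{P}_{\mathrm{Sep},j}\ge\lambda_j-2\Delta$ for every $j$. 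Hence $\|\widetilde{\mathbf{P}}_{\mathrm{Sep}}\mathbf{\Lambda}\widetilde{\mathbf{P}}_{\mathrm{Sep}}^T-\bSigma\|_F^2 \le 4\Delta\sum_j\lambda_j = 4\Delta\cdot\mathrm{tr}$, i.e.\ $\|\widetilde{\mathbf{P}}_{\mathrm{Sep}}\mathbf{\Lambda}\widetilde{\mathbf{P}}_{\mathrm{Sep}}^T-\bSigma\|_F \le 2\sqrt{\Delta\cdot\mathrm{tr}} = O\!\big(\sqrt{d\cdot\mathrm{tr}/(\varepsilon n)}\cdot\sqrt{\sqrt d+\log(1/\beta)\log d}\big)$, the first summand. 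A union bound over the two failure events plus the triangle inequality gives the exact bound; dropping logs and using $\sqrt{d\cdot\mathrm{tr}/(\varepsilon n)}\cdot d^{1/4} = d^{3/4}\sqrt{\mathrm{tr}}/\sqrt{\varepsilon n}$ yields the $\tilde{O}(d^{3/4}\sqrt{\mathrm{tr}}/\sqrt n + \sqrt d/n)$ form.

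Because the argument is structurally identical to Theorem~\ref{th:err_sep_cov}, there is no deep obstacle; the only points requiring care are (i) tracking the factor-$2$ budget split so that the noise scales come out to $\frac{4}{\varepsilon n}$ and $\frac{2\sqrt2 d}{\varepsilon n}$ — in particular the extra $d$ inside $\Delta$ relative to the zCDP case, where the noise matrix scales only like $\frac{1}{\sqrt{\rho} n}$, is precisely what turns the $d^{1/4}$ into $d^{3/4}$ in the trace term — and (ii) replacing the sub-Gaussian tail bounds $\eta(d,\cdot)$, $\upsilon(d,\cdot)$ by the sub-exponential ones of Lemmas~\ref{lm:lap_vec_inq} and~\ref{lm:matrix_opnorm_bound}, which is where the $\log(1/\beta)\log d$ terms replace the $\sqrt{\log(1/\beta)}$ terms of the Gaussian analysis. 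Everything else — the quadratic-form identity for the second term, the Courant--Fischer/$\mathcal{S}_j'$ comparison, and the final $2\sqrt{\Delta\cdot\mathrm{tr}}$ bound — carries over verbatim.
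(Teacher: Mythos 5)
Your proof is correct and is precisely the argument the paper intends: the paper states the lemma without proof, noting only that it follows from the Laplace concentration bounds (Lemmas~\ref{lm:lap_vec_inq} and~\ref{lm:matrix_opnorm_bound}), and you have carried out exactly the substitution the authors had in mind — same two-term decomposition, same quadratic-form identity, same Courant--Fischer chain, with $\eta,\upsilon$ swapped for the sub-exponential tails and the noise scale tracking the extra factor of $d$ from the $\ell_1$-sensitivity, which correctly produces the $d^{3/4}$ in place of $d^{1/4}$.
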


Under $\varepsilon$-DP, $\mathrm{EMCov}$ achieves the error $\tilde{O}\left(\frac{d\sqrt{\mathrm{tr}}}{\sqrt{n}}+\frac{\sqrt{d}}{n}\right)$, so $\mathrm{SeparateCov}$ improves the $d$ dependency in the first term from $d$ to $d^{3/4}$.  Note that our trace-sensitive bound also implies a worst-case bound of $\tilde{O}({d^{3/4} \over \sqrt{n}})$.  Concurrently with our work, Nikolov~\cite{nikolov2022private}  shows how to achieve a better worst-case bound of $\tilde{O}(\frac{\sqrt{d}}{\sqrt{n}})$, as an application of a new private query release mechanism based on the Johnson-Lindenstrauss transform, although his approach does not seem to yield a trace-sensitive bound.

Finally, we modify the $\mathrm{AdaptiveCov}$ algorithm to support $\varepsilon$-DP. To this end, we first replace $\mathrm{GaussCov}$ with $\mathrm{LapCov}$ and $\mathrm{SeparateCov}$ with its $\varepsilon$-DP version. Then, since $\mathrm{PrivRadius}$ and $\mathrm{SVT}$ already satisfy pure DP, the remaining modification is to use a Laplace tail bound in place of the Gaussian tail bound when constructing the DP upper bound of $\mathrm{tr}$. Based on these, we have

\begin{theorem}
\label{th:err_lap_ada_cov}
Given any $\varepsilon>0$, for any $\X\in\mathcal{B}_d^n$, $\mathrm{AdaptiveCov}$ preserves $\varepsilon$-DP and for any $\beta>0$, with probability at least $1-\beta$, $\mathrm{AdaptiveCov}$ returns a $\widetilde{\mathbf{\Sigma}}_{\mathrm{Ada}}$ such that
\begin{align*}
\left\|\widetilde{\mathbf{\Sigma}}_{\mathrm{Ada}}-\mathbf{\Sigma}\right\|_F =& \tilde{O}\left(\min_{\tau} \left(\min\left(\frac{\tau^2d^2}{n},\frac{\tau d^{3/4}\sqrt{\mathrm{tr}}}{\sqrt{n}} +\frac{\tau^2\sqrt{d}}{n}\right)+\gamma(\X,\tau)\right)+2^{-dn}\right).
\end{align*}
\end{theorem}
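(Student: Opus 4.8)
The plan is to mirror the structure of the proof of Theorem~\ref{th:err_ada_cov}, replacing each Gaussian concentration bound with its Laplace counterpart and tracking the resulting $\sqrt{d}$-factor blow-ups. I would organize the argument around the same three error sources: (i) the clipping by the private radius $\tilde r$ from $\mathtt{PrivRadius}$, which is unchanged since $\mathtt{PrivRadius}$ already satisfies pure-DP and Lemma~\ref{lm:err_radius} applies verbatim with $\varepsilon$ in place of $\sqrt\rho$; (ii) the error of the chosen base mechanism on the clipped data, now governed by Lemma~\ref{lm:err_lap_sep_cov} for $\mathrm{ClipSeparateCov}$ and by the $\tilde O(d^2/n)$ bound for $\mathrm{LapCov}$ in place of $\mathrm{ClipGaussCov}$; and (iii) the SVT error in balancing bias against noise, where the sensitivity-$1$ query structure is identical so Lemma~\ref{lm:err_SVT} gives the same $O(\tilde r^2 \log(dn/\beta)/\varepsilon)$ additive slack. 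The privacy claim follows from basic composition (Lemma~\ref{lm:composition}) over $\mathtt{PrivRadius}$, SVT, and the (Laplace versions of the) two base mechanisms, each run with an appropriate share of $\varepsilon$.

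\textbf{Key steps.} First I would restate the definitions of $\mathrm{GaussNoise}$, $\mathrm{SeparateNoise}$, $\widehat{\mathrm{Noise}}$, and $\mathrm{Diff}$ with the Laplace concentration quantities: $\omega(d,\beta)$ is replaced by the $\mathrm{SLW}$ Frobenius bound $\tfrac32 d + O(\log(1/\beta)\log d)$ from Lemma~\ref{lm:lap_fnorm_bound} (times the $\sqrt2 d/\varepsilon n$ scaling, giving a $\tilde O(\tau^2 d^2/n)$ noise term), and $\upsilon(d,\cdot)$, $\eta(d,\cdot)$ are replaced by the $\mathrm{SLW}$ operator-norm bound (Lemma~\ref{lm:matrix_opnorm_bound}) and the Laplace vector bound (Lemma~\ref{lm:lap_vec_inq}), both $\tilde O(\sqrt d)$ but now scaled by $\sqrt2 d/\varepsilon n$ rather than $1/\sqrt\rho n$ — so after the $\sqrt{\cdot}$ appearing in the $\mathrm{SeparateCov}$ analysis the trace term becomes $\tilde O(\tau d^{3/4}\sqrt{\tr}/\sqrt n)$ and the additive eigenvalue term becomes $\tilde O(\tau^2\sqrt d/n)$, matching Lemma~\ref{lm:err_lap_sep_cov}. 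Second, I would re-derive the analogue of the bias bound \eqref{eq:bias_3}, which is purely combinatorial (depending only on $\mathrm{Count}_s$ and Lemma~\ref{lm:bounded_bias}) and hence carries over unchanged. Third, I would privatize $\tr$ using a Laplace tail bound: since $\tfrac1n\sum_i\|\widetilde x_i\|_2^2$ has sensitivity $\tilde r^2/n$, adding $\mathrm{Lap}(\tilde r^2/\varepsilon n)$ plus an $O(\tilde r^2\log(1/\beta)/\varepsilon n)$ offset yields $\widehat{\tr}$ with $\widetilde\tr \le \widehat\tr \le \widetilde\tr + O(\tilde r^2\log(1/\beta)/\varepsilon n)$, the exact Laplace replacement for \eqref{eq:th:err_ada_cov_3}. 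Fourth, I would run the identical SVT case analysis ($\tilde t = -dn-1$ versus $\tilde t \ge -dn$) to conclude the analogue of \eqref{eq:th:err_ada_cov_8}, namely that $\|\widetilde\bSigma_{\mathrm{Ada}} - \bSigma[\widetilde\X]\|_F = O(\min_{2^{-dn}\le\tau\le1}(\widehat{\mathrm{Bias}}(\widetilde\X,\tau) + \widehat{\mathrm{Noise}}(\widehat\tr,\tau,\varepsilon/2,\beta/2)) + \tilde r^2\log(dn/\beta)/(\varepsilon n))$. Finally, combining with the radius-clipping error $\|\bSigma - \bSigma[\widetilde\X]\|_F = O(\mathrm{rad}(\X)^2\log(dn/\beta)/\varepsilon n)$, translating $\widehat{\mathrm{Bias}}$ to $\gamma$ and $\widehat{\mathrm{Noise}}$ to the Laplace $\mathrm{Noise}$ expression (using $\tr \ge \mathrm{rad}^2/n$ to absorb the $\tilde r^2$ slack into the noise term exactly as in \eqref{eq:th:err_ada_cov_10}), and using $2^{-dn}$ to dominate the tail contributions, yields the stated bound.

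\textbf{Main obstacle.} The routine part is the bookkeeping of constants and polylogarithmic factors; the part needing care is verifying that the Laplace concentration inequalities plug into the $\mathrm{SeparateCov}$-style argument in exactly the same \emph{shape} as the Gaussian ones. Specifically, the proof of Theorem~\ref{th:err_sep_cov} crucially uses that for every unit vector $u$, $|u^T\W u| = \tilde O(\sqrt d)$ with high probability (a uniform-over-$u$ statement following from the operator-norm bound), and that this same operator-norm control transfers through the eigenvector-perturbation chain $\widetilde P_{\mathrm{Sep},j}^T\bSigma\widetilde P_{\mathrm{Sep},j} \ge \lambda_j - O(\|\W\|_2/\varepsilon n)$. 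I would need Lemma~\ref{lm:matrix_opnorm_bound} to supply precisely this, which it does, so the structural transfer goes through; but one must double-check that the union bound over $u$ is already subsumed in the operator-norm event (it is, since $\|\W\|_2 = \sup_u |u^T\W u|$ for symmetric $\W$) and that no step secretly relied on Gaussian rotational invariance — the eigenvalue-noise step (line 2 of Algorithm~\ref{alg:sep_cov}) does use that adding isotropic noise to $\Lam$ is equivalent to adding it in any basis, which fails for Laplace noise, but this is harmless because we add the Laplace noise directly to the (already diagonal) eigenvalue vector, so $\|\widetilde\Lam_{\mathrm{Sep}} - \Lam\|_F = \|Y\|_2$ with $Y$ a Laplace vector, and Lemma~\ref{lm:lap_vec_inq} applies. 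Once these two checks are dispatched, the remainder of the proof is a mechanical substitution into the Theorem~\ref{th:err_ada_cov} template.
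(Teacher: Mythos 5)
Your proposal is correct and follows the same route the paper intends: the paper gives no explicit proof of Theorem~\ref{th:err_lap_ada_cov}, stating only that one replaces $\mathrm{GaussCov}$ with $\mathrm{LapCov}$, $\mathrm{SeparateCov}$ with its Laplace version, and the Gaussian tail bound on $\widehat{\mathrm{tr}}$ with a Laplace one, and you have faithfully and correctly filled in the substitutions (Lemma~\ref{lm:lap_fnorm_bound} for $\omega$, Lemma~\ref{lm:matrix_opnorm_bound} for $\upsilon$, Lemma~\ref{lm:lap_vec_inq} for $\eta$, sensitivity $\tilde r^2/n$ for the trace, unchanged $\mathtt{PrivRadius}$ and SVT and bias bookkeeping). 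One small remark: your worry about Gaussian rotational invariance in the eigenvalue-noise step is moot even before your repair, since the identity $\|\widetilde{\mathbf{P}}_{\mathrm{Sep}}(\widetilde{\mathbf{\Lambda}}_{\mathrm{Sep}}-\mathbf{\Lambda})\widetilde{\mathbf{P}}_{\mathrm{Sep}}^T\|_F = \|\widetilde{\mathbf{\Lambda}}_{\mathrm{Sep}}-\mathbf{\Lambda}\|_F$ is just unitary invariance of the Frobenius norm and holds for any noise distribution.
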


\section{Experiments}
\label{sec:experiments}

We conducted experiments\footnote{The code can be found at \url{https://github.com/hkustDB/PrivateCovariance}.} to evaluate our algorithms on both synthetic and real-world datasets. We compare $\mathrm{SeparateCov}$ and $\mathrm{AdaptiveCov}$ against  $\mathrm{GaussCov}$~\cite{dwork2014analyze}, $\mathrm{EMCov}$~\cite{amin2019differentially}.  We implemented $\mathrm{EMCov}$ in Python following the pseudo-code provided in \cite{amin2019differentially} and the descriptions of the sampling algorithm in \cite{kent2018new}.  We also tested $\mathrm{CoinPress}$~\cite{biswas2020coinpress}, but since it is designed to minimize the Mahalanobis error, it does not perform well when measured in Frobenius error. The two distance measures coincide when $\mathbf{\Sigma}$ is well-conditioned but in this case, $\mathrm{CoinPress}$ degenerates into $\mathrm{GaussCov}$. Therefore, we omit it from the reported results.  As a baseline, we include returning a zero matrix, which has error $O(\tr)$, hence a trivial trace-sensitive algorithm.  When $\mathrm{rad}(\X)$ is much smaller than $1$, it is unfair for $\mathrm{GaussCov}$ and $\mathrm{EMCov}$, so we scale all datasets such that $0.5\leq\mathrm{rad}(\X)\leq1$.  As a result, we do not need the step to obtain a private radius in $\mathrm{AdaptiveCov}$, either. Each experiment is repeated 50 times, and we report the average error. Furthermore, we have also conducted experiments under pure-DP and the results can be found in the end.

\subsection{Synthetic Datasets}
\label{sec:synthetic_data}
We generate synthetic datasets by first following the method in \cite{amin2019differentially}, to obtain a matrix $\X=\mathbf{ZU}$, where $\mathbf{U}\in \mathbb{R}^{d\times d}$ is sampled from $U(0,1)$, and $\mathbf{Z}\in \mathbb{R}^{n\times d}$ is sampled from $\mathcal{N}(0,\mathbf{I})$. Then the vectors in $\X$ are adjusted to be centred at $0$ and their norms scaled. In \cite{amin2019differentially}, the vectors are scaled to have unit $\ell_2$ norm; in our experiments, to better control $\tr$ and data skewness, we scale the norms so that they follow the Zipf's law.  More precisely, we divide the norms into $N$ bins. The number  of vectors in the $k$-th bin is proportional to $1/k^s$ and their norm is $2^{k-N}$. The parameter $s$ characterizes the skewness, which we fix as $s=3$. Note that $N=1$ corresponds to the unit-norm case with $\tr=1$.

The results on $\tr=1$ case are shown in Figure \ref{fig:err_d_N1}, which correspond to the worst-case bounds.  The $\rho$ here is fixed at $0.1$ and we examine the error growth w.r.t. $d$ for $n=1000,4000,16000$. The results generally agree with the theory: For low $d$, $\mathrm{GaussCov}$ is (slightly) better than $\mathrm{SeparateCov}$, while the latter is much better for high $d$.  $\mathrm{AdaptiveCov}$  is able to choose the better of the two adaptively, with a small cost due to allocating some privacy budget to estimate $\tr$.  Actually, if $\mathrm{AdaptiveCov}$ is given the precondition that all norms are $1$, this extra cost can be saved.
\begin{figure}[htbp]
    %\vskip -.01in
         \centering
         \begin{subfigure}[t]{0.3\textwidth}
            \centering
            \includegraphics[width=\textwidth]{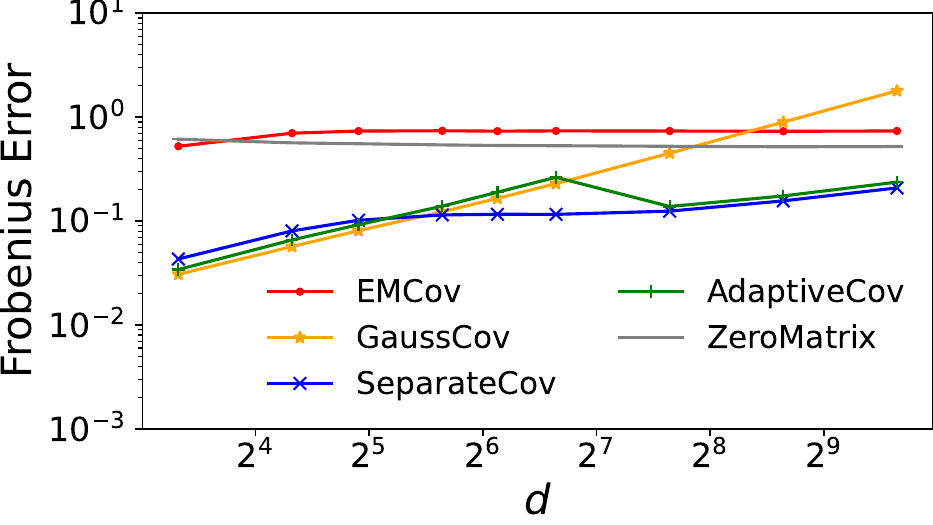}
            \vskip -.08in
            \subcaption{$n=1000$.}
         \end{subfigure}
         \hfill
         \begin{subfigure}[t]{0.3\textwidth}
            \centering
            \includegraphics[width=\textwidth]{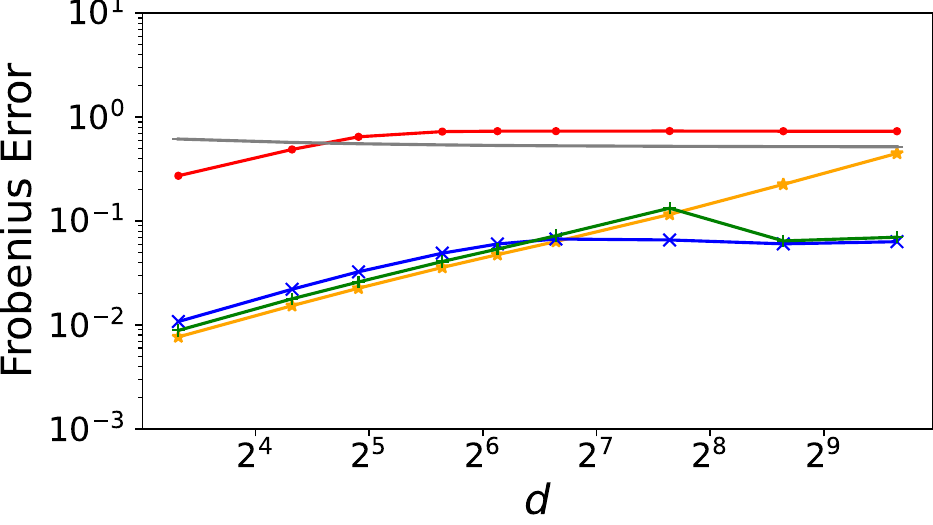}
            \vskip -.08in
            \subcaption{$n=4000$.}
         \end{subfigure}
         \hfill
         %\qquad
         \begin{subfigure}[t]{0.3\textwidth}
            \centering
            \includegraphics[width=\textwidth]{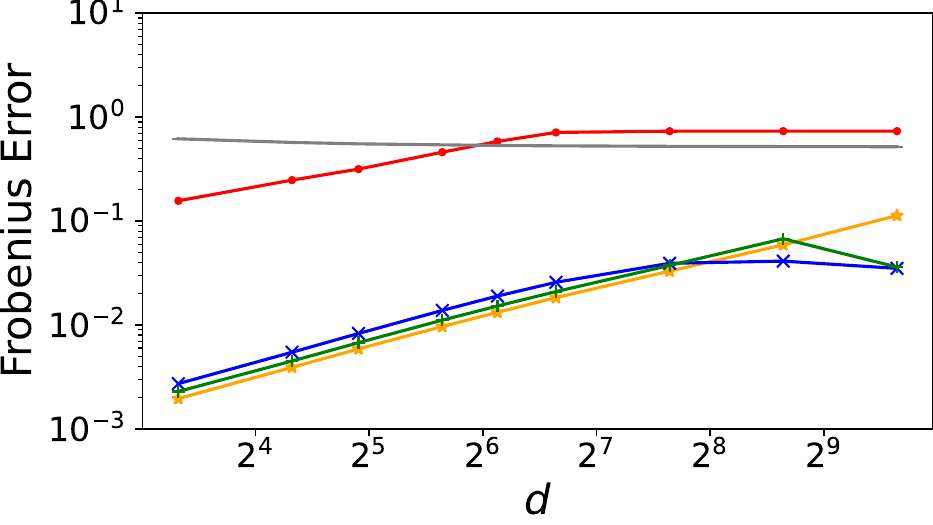}
            \vskip -.08in
            \subcaption{$n=16000$.}
         \end{subfigure}
         \vskip -.01in
         \caption{Results on synthetic datasets fixing $\mathrm{tr}=1$. }
         \label{fig:err_d_N1}
    \vskip -.01in
    \end{figure}
Next, we vary one of the parameters while fixing the others at their default values $d=200$, $n=50000$, $N=4$ and $\rho=0.1$, and the results are reported in Figure \ref{fig:err_d_n_rho_N}.  The most interesting case is Figure \ref{fig:err_d_n_rho_N}(a), where we increase $N$, hence reducing $\tr$, which demonstrates the trace-sensitive bounds.  Clearly, $\mathrm{GaussCov}$ is not trace-sensitive, while the other 4 methods are.  In fact, returning the zero matrix is the best trace-sensitive algorithm if $\tr$ is sufficiently small.  However, this may not be very meaningful in practice, as $N=2^5$ means that most data have norm  $2^{-31}$ but a few have norm $1$.  These few may be outliers and should be removed anyway.   Figure \ref{fig:err_d_n_rho_N}(b)--(d) shows that higher $d$, smaller $n$, and smaller $\rho$ all have similar effects, i.e., $\mathrm{SeparateCov}$ becomes better while $\mathrm{GaussCov}$ becomes worse, while $\mathrm{AdaptiveCov}$ is able to pick the better one. 
\begin{figure}[htbp]
    %\vskip -.01in
         \centering
         \begin{subfigure}[t]{0.22\textwidth}
             \centering
             \includegraphics[width=\textwidth]{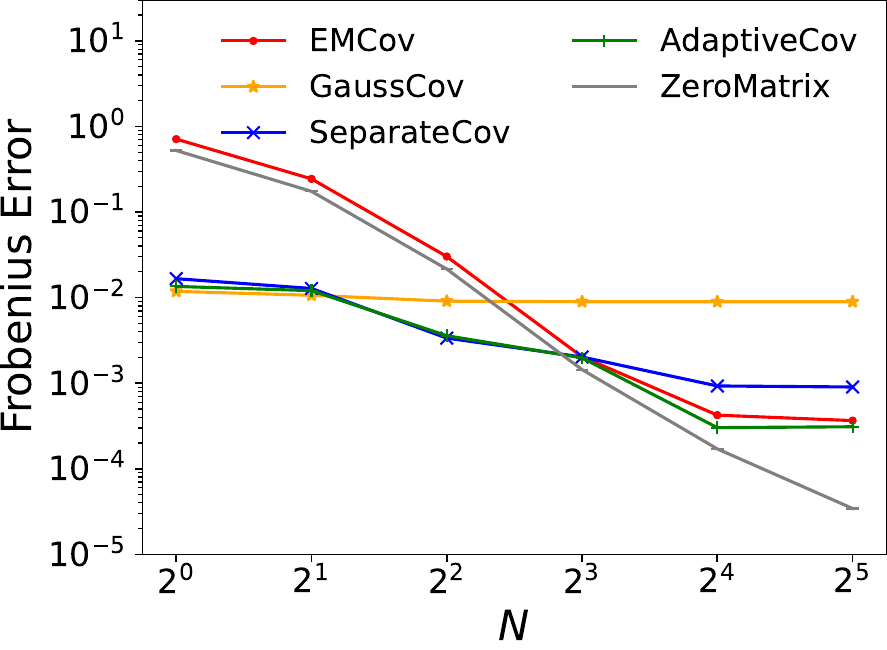}
             \vskip -.08in
             \subcaption{various $\mathrm{tr}$.}
         \end{subfigure}
         \hfill
         %\qquad
         \begin{subfigure}[t]{0.22\textwidth}
            \centering
            \includegraphics[width=\textwidth]{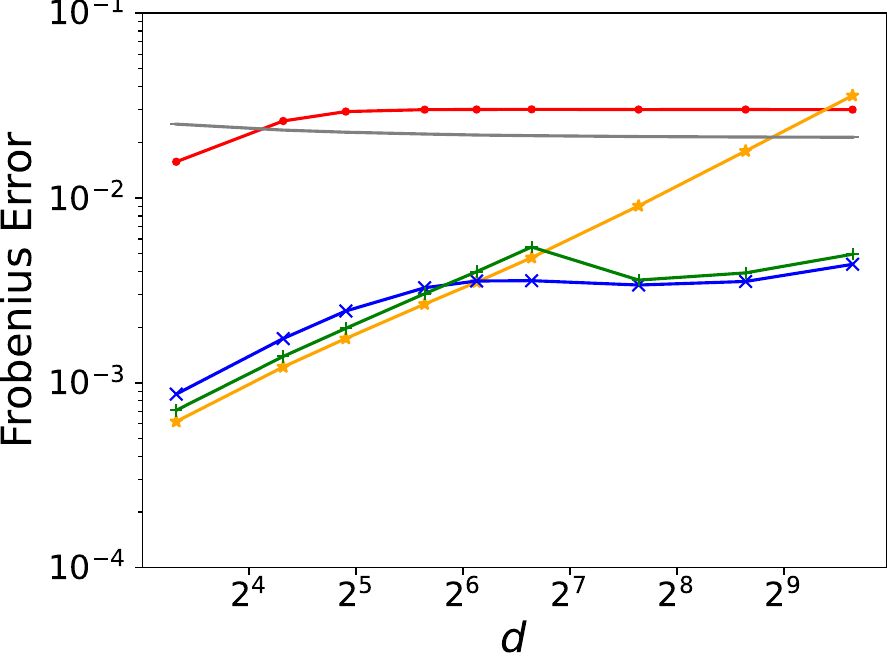}
            \vskip -.08in
            \subcaption{various $d$.}
         \end{subfigure}
         \hfill
         \begin{subfigure}[t]{0.22\textwidth}
            \centering
            \includegraphics[width=\textwidth]{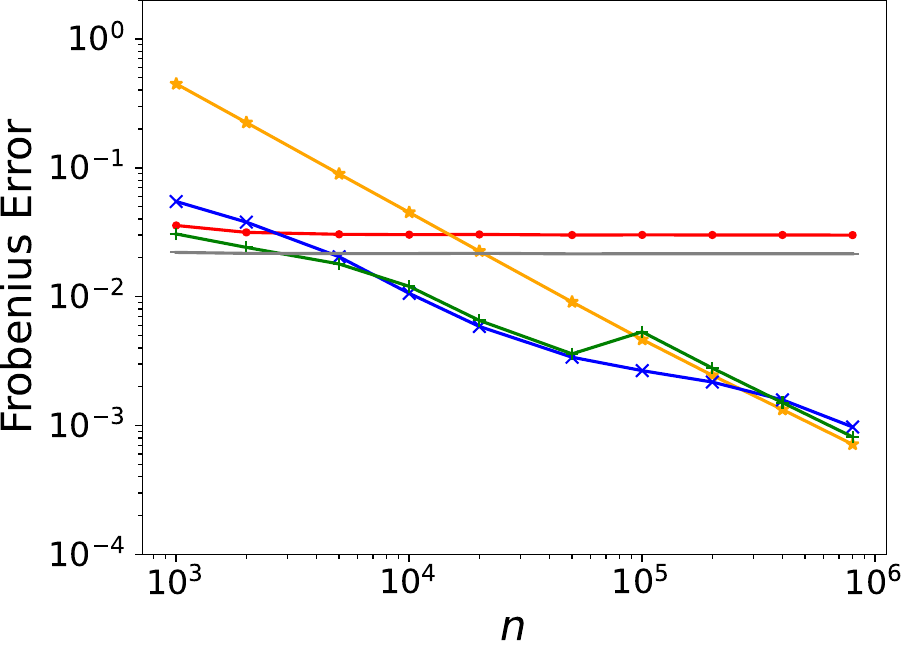}
            \vskip -.08in
            \subcaption{various $n$.}
         \end{subfigure}
         \hfill
         %\qquad
         \begin{subfigure}[t]{0.22\textwidth}
            \centering
            \includegraphics[width=\textwidth]{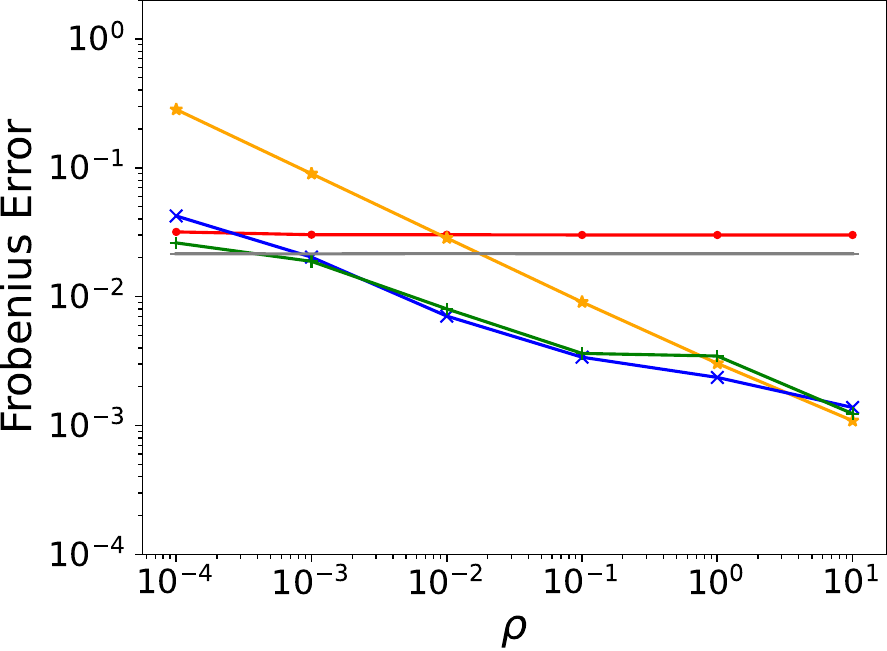}
            \vskip -.08in
            \subcaption{various $\rho$.}
         \end{subfigure}
         \vskip -.01in
         \caption{Results on synthetic datasets as $d,n,N$ or $\rho$ varies. }
         \label{fig:err_d_n_rho_N}
    \vskip -.01in
    \end{figure}

\subsection{Real-world Datasets}
\label{sec:real_data}
We also evaluate the algorithms on two real-world datasets. The first dataset is the MNIST \cite{lecun1998} dataset, which contains images of handwritten digits. We use its training dataset which contains $60,000$ images represented as vectors in $\mathbb{Z}_{255}^d$, where $d=784=28\times 28$. These vectors are normalized by $255\sqrt{d}$ in the experiments. We estimate $\widetilde{\bSigma}$ using samples containing all the digits (Fig.\ref{fig:err_mnist_rho}a-c), we also estimate $\widetilde{\bSigma}$ corresponding to individual digits (Fig.\ref{fig:err_mnist_rho}d-e). In the first case, $\widetilde{\bSigma}$ can be used for further dimensionality reduction analysis; in the second case, individual $\widetilde{\bSigma}$ can be used for modelling the distributions of individual digits, which together can be used in a collective model for classification (e.g. using a mixture model). The second dataset contains news commentary data \cite{wmt2021} consisting of approximately $15,000$ articles, each containing $500-4300$ words, which we convert into vectors of various dimensions using the hashing trick implemented in the scikit-learn package. In this case, the estimated $\widetilde{\bSigma}$ can be used to help with further feature selection for NLP models, for example. These vectors are normalized to have unit $\ell_2$ norm or normalized by the max $\ell_2$ norm.

The experimental results on these two real dataset are shown in Figure \ref{fig:err_mnist_rho} and \ref{fig:err_news_rho}, where we vary $n,d$, and $\rho$, respectively.  On these results, we see that $\mathrm{GaussCov}$ never outperforms $\mathrm{SeparateCov}$, except for a very small advantage in a few cases where we have $\tr=1$, a low $d$, and a high $\rho$.  Another interesting observation is that $\mathrm{AdaptiveCov}$ outperforms both $\mathrm{GaussCov}$ and $\mathrm{SeparateCov}$ in many cases, something that is not apparent on the synthetic datasets.  We believe that this is because these real datasets have heavier tails than the Zipf distribution (we used $s=3$ for Zipf), which makes the adaptive clipping threshold selection more effective.  This really demonstrates the benefits of a tail-sensitive bound.
\begin{figure}[htbp]
    %\vskip -.01in
         \centering
         \begin{subfigure}[t]{0.3\textwidth}
            \centering
            \includegraphics[width=\textwidth]{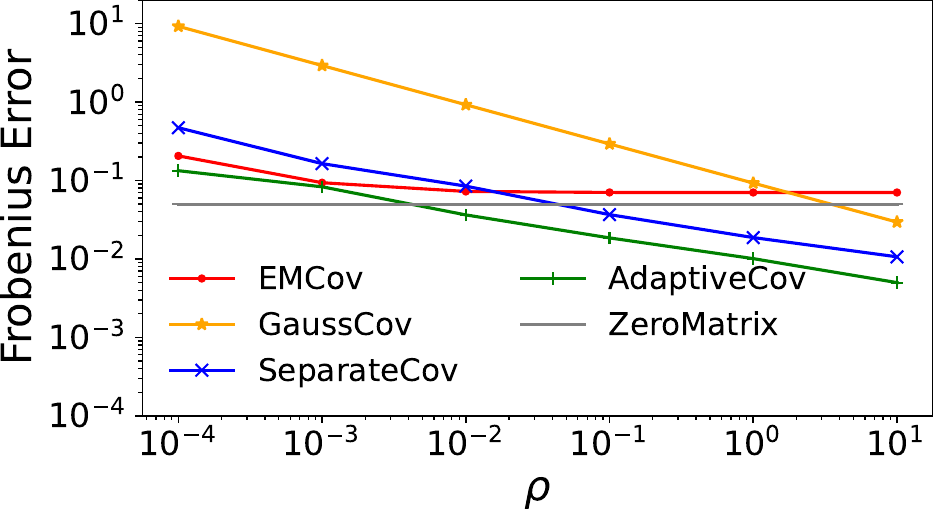}
            \vskip -.08in
            \subcaption{$n=6000,\mathrm{tr}<1$.}
             \vspace{0.08in}
         \end{subfigure}
         \hfill
         \begin{subfigure}[t]{0.3\textwidth}
            \centering
            \includegraphics[width=\textwidth]{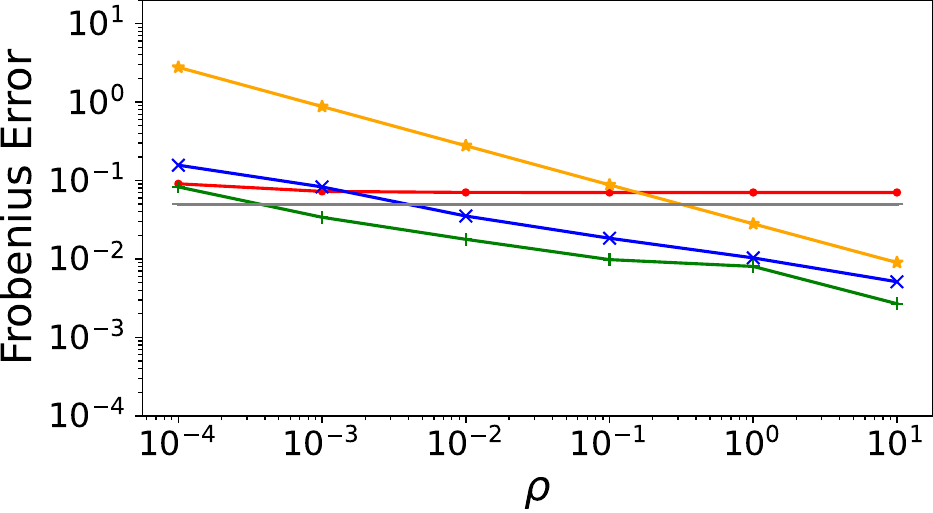}
            \vskip -.08in
            \subcaption{$n=20000,\mathrm{tr}<1$.}
             \vspace{0.08in}
         \end{subfigure}
         \hfill
         %\qquad
         \begin{subfigure}[t]{0.3\textwidth}
            \centering
            \includegraphics[width=\textwidth]{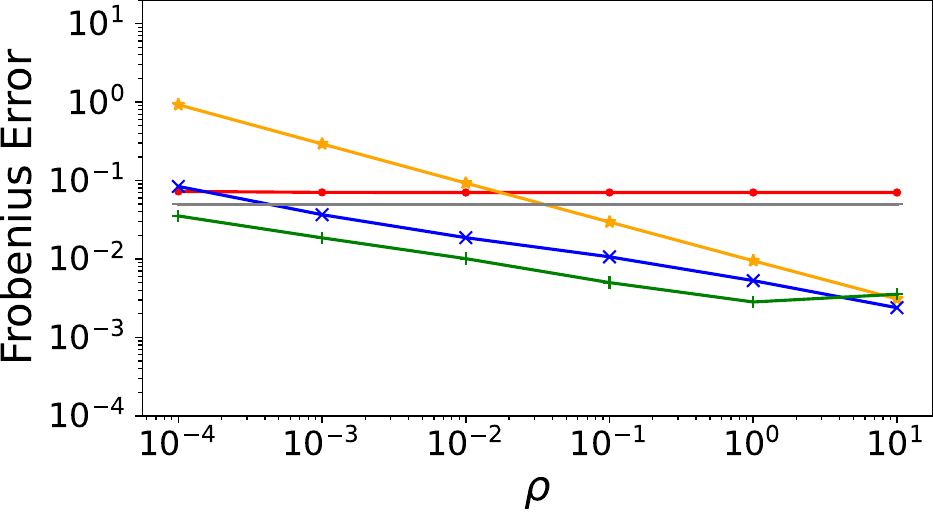}
            \vskip -.08in
            \subcaption{$n=60000,\mathrm{tr}<1$.}
             \vspace{0.08in}
         \end{subfigure}
                  \centering
         \begin{subfigure}[t]{0.3\textwidth}
            \centering
            \includegraphics[width=\textwidth]{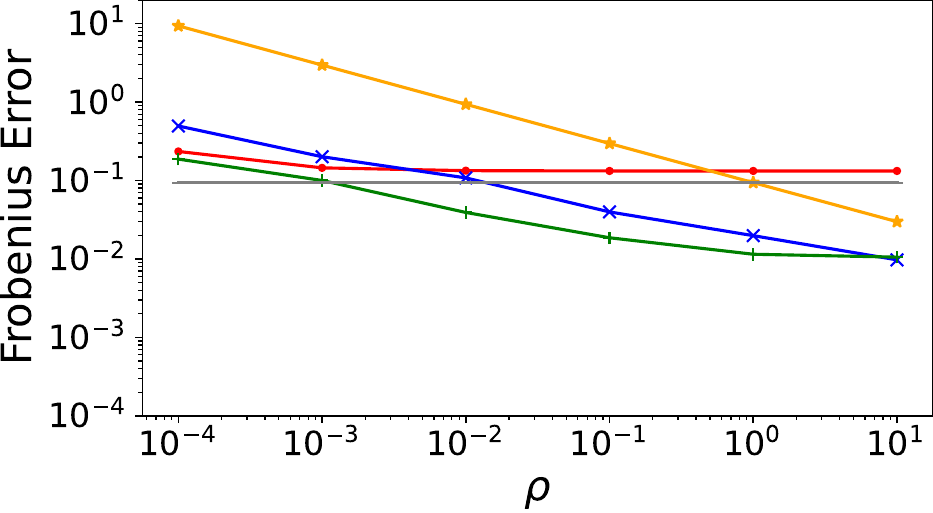}
            \vskip -.08in
            \subcaption{Digit $0$.}
         \end{subfigure}
         \hfill
         \begin{subfigure}[t]{0.3\textwidth}
            \centering
            \includegraphics[width=\textwidth]{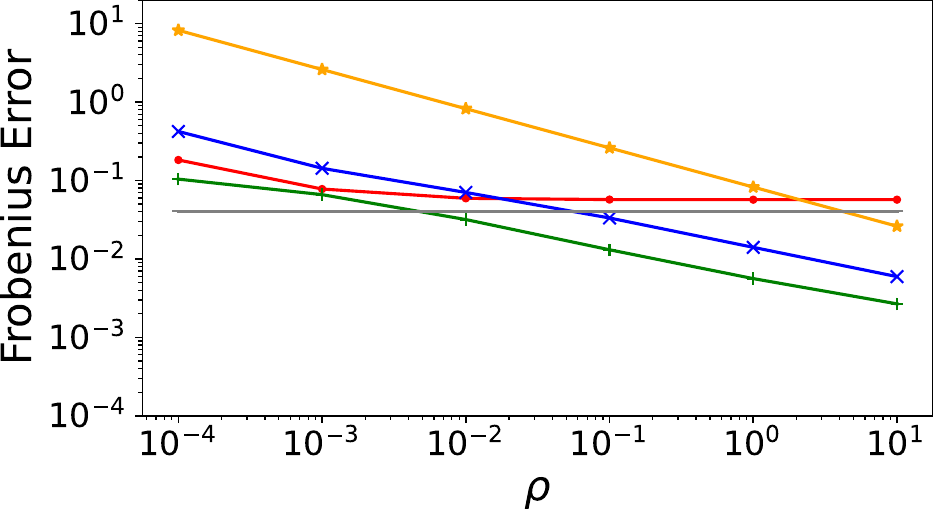}
            \vskip -.08in
            \subcaption{Digit $1$.}
         \end{subfigure}
         \hfill
         %\qquad
         \begin{subfigure}[t]{0.3\textwidth}
            \centering
            \includegraphics[width=\textwidth]{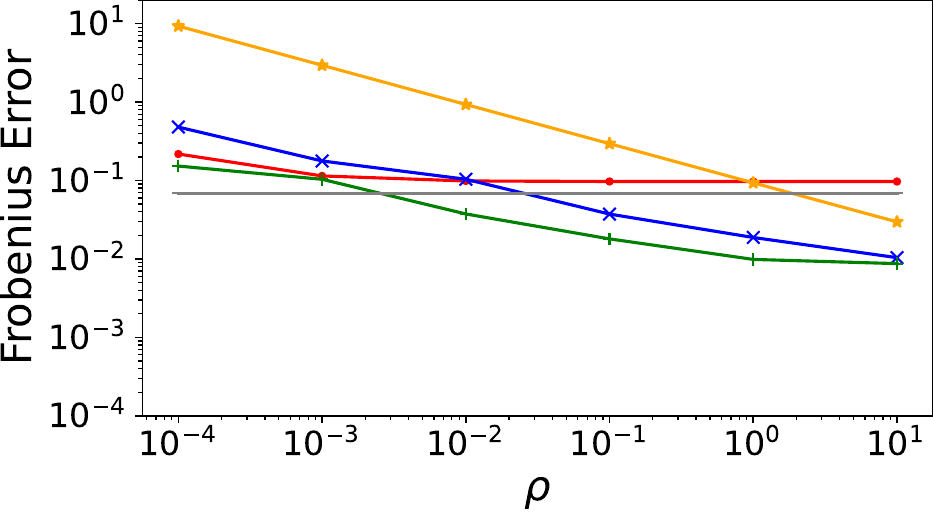}
            \vskip -.08in
            \subcaption{Digit $2$.}
         \end{subfigure}
         \vskip -.01in
         \caption{Results on MNIST dataset. }
         \label{fig:err_mnist_rho}
    \vskip -.01in
    \end{figure}

\begin{figure}[htbp]
    \vskip -.01in
     \centering
         \begin{subfigure}[t]{0.3\textwidth}
            \centering
            \includegraphics[width=\textwidth]{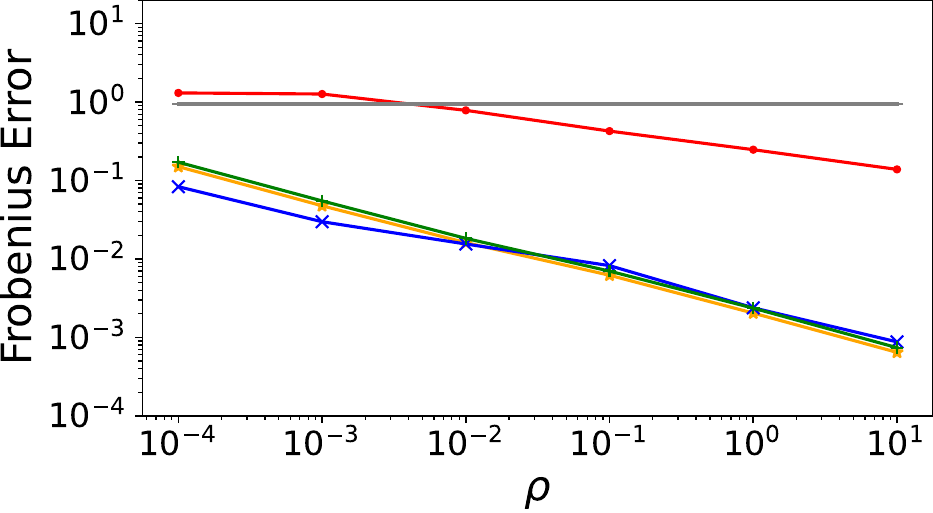}
            \vskip -.08in
            \subcaption{$d=32,\mathrm{tr}=1$.}
            \vspace{0.08in}
         \end{subfigure}
         \hfill
         \begin{subfigure}[t]{0.3\textwidth}
            \centering
            \includegraphics[width=\textwidth]{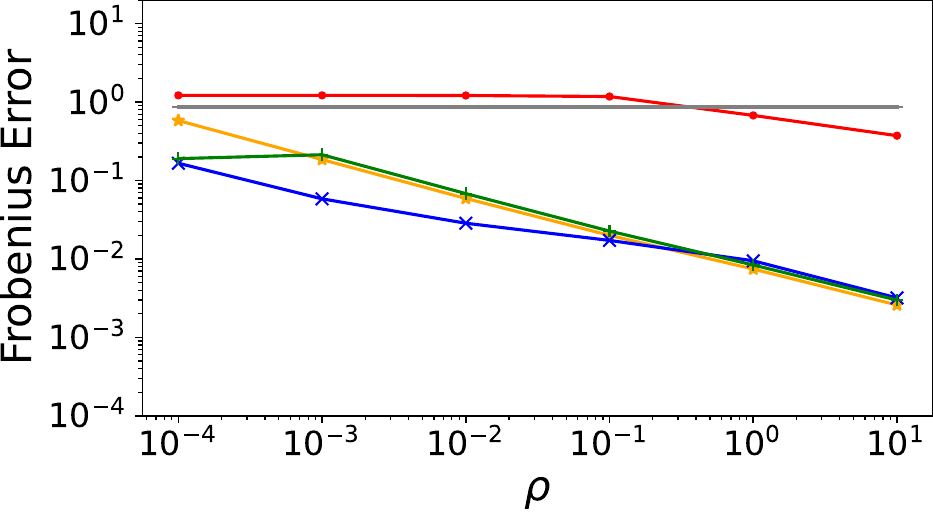}
            \vskip -.08in
            \subcaption{$d=128,\mathrm{tr}=1$.}
            \vspace{0.08in}
         \end{subfigure}
         \hfill
         %\qquad
         \begin{subfigure}[t]{0.3\textwidth}
            \centering
            \includegraphics[width=\textwidth]{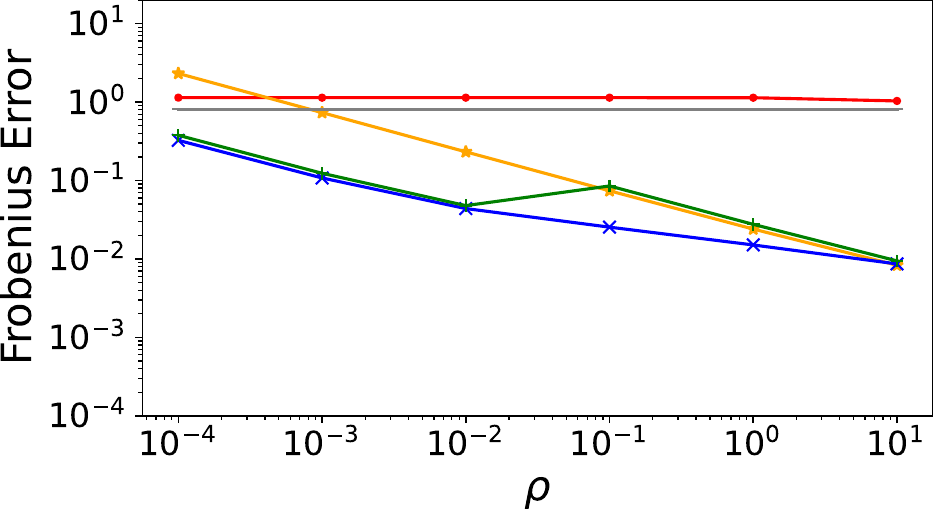}
            \vskip -.08in
            \subcaption{$d=512,\mathrm{tr}=1$.}
            \vspace{0.08in}
         \end{subfigure}
         \centering
         \begin{subfigure}[t]{0.3\textwidth}
            \centering
            \includegraphics[width=\textwidth]{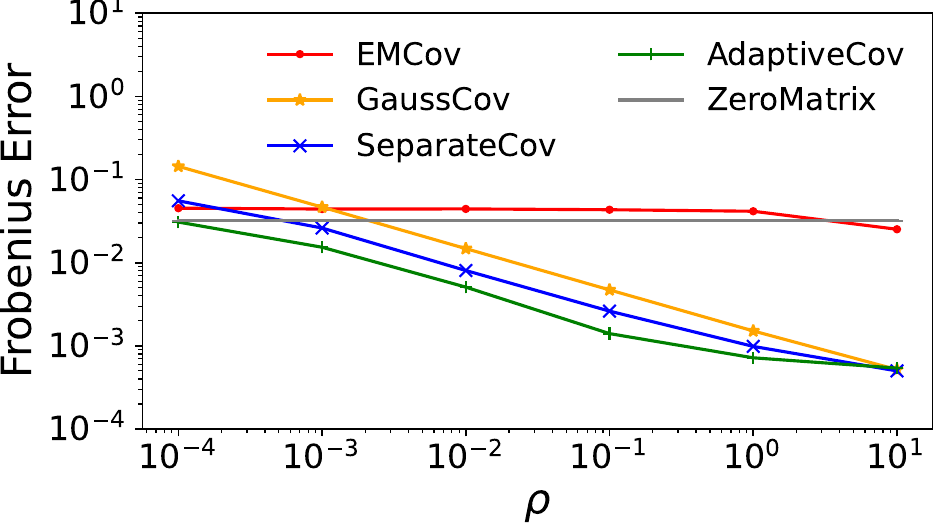}
            \vskip -.08in
            \subcaption{$d=32,\mathrm{tr}<1$.}
         \end{subfigure}
         \hfill
         \begin{subfigure}[t]{0.3\textwidth}
            \centering
            \includegraphics[width=\textwidth]{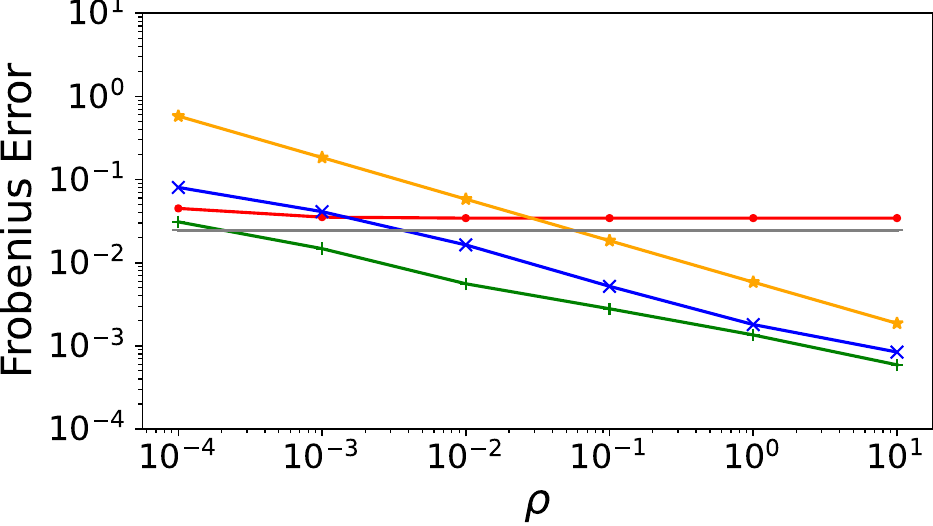}
            \vskip -.08in
            \subcaption{$d=128,\mathrm{tr}<1$.}
         \end{subfigure}
         \hfill
         %\qquad
         \begin{subfigure}[t]{0.3\textwidth}
            \centering
            \includegraphics[width=\textwidth]{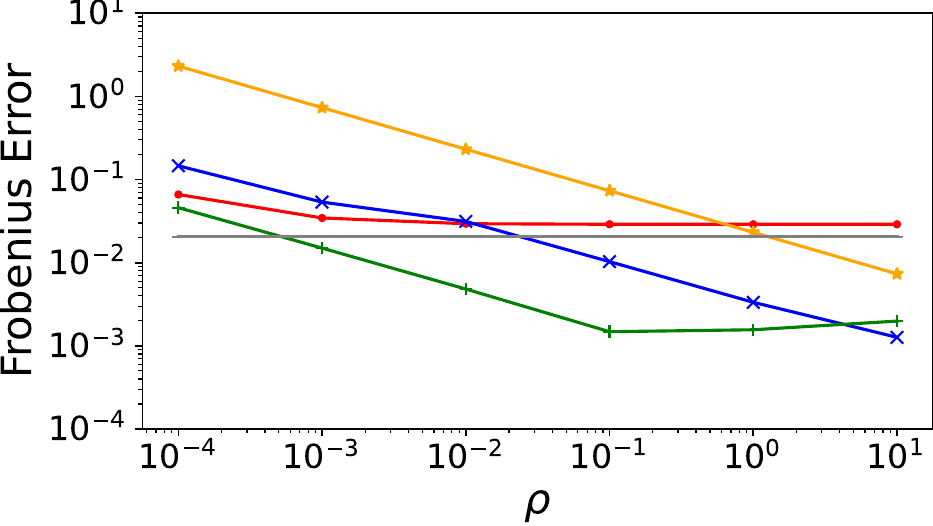}
            \vskip -.08in
            \subcaption{$d=512,\mathrm{tr}<1$.}
         \end{subfigure}
         \vskip -.01in
         \caption{Results on the news dataset. }
         \label{fig:err_news_rho}
    \vskip -.01in
    \end{figure}
    
\subsection{Pure DP experiments}
We also conducted experiments using the pure DP algorithms on the same datasets as in Section~\ref{sec:experiments}. The parameter settings are also mostly the same, except we replaced $\varepsilon$ with $\rho$. For synthetic datasets, we also present an extra set of experiments with a larger sample size $n$. Since the dependency on $d$ degrades for all the algorithms, a larger sample size is needed in order for them to have an advantage over the zero matrix. As mentioned in Appendix~\ref{sec:approximate_DP_nips_2019}, \cite{amin2019differentially} propose two ways to allocate privacy budget for eigenvector estimation under pure-DP; one of them being uniform allocation and the other is to allocate more budget to larger (privatized) eigenvalues. We tested both and their error appeared similar, with the latter being slightly better in some cases (as consistent with their report). Here, we only report the results for the latter budget allocation scheme.

The experimental results agree with our theoretical analysis and are similar as those in $\rho$-zCDP setting:  $\mathrm{SepCov}$ always outperforms $\mathrm{EMCov}$ and will perform better than $\mathrm{LapCov}$ with higher $d$, smaller $n$, and smaller $\varepsilon$. Also, as expected, to outperform the zero matrix, we require larger $n$, smaller $d$, and larger $\varepsilon$ compared with the $\rho$-zCDP setting. For real-world datasets, since they are highly dimensional with limited samples, none of the algorithms have better utility than the zero matrix (except when $d$ is small and $\varepsilon$ is large); as such we do not report the results for real-world datasets here.
\begin{figure}[htbp]
    %\vskip -.01in
         \centering
         \begin{subfigure}[t]{0.3\textwidth}
            \centering
            \includegraphics[width=\textwidth]{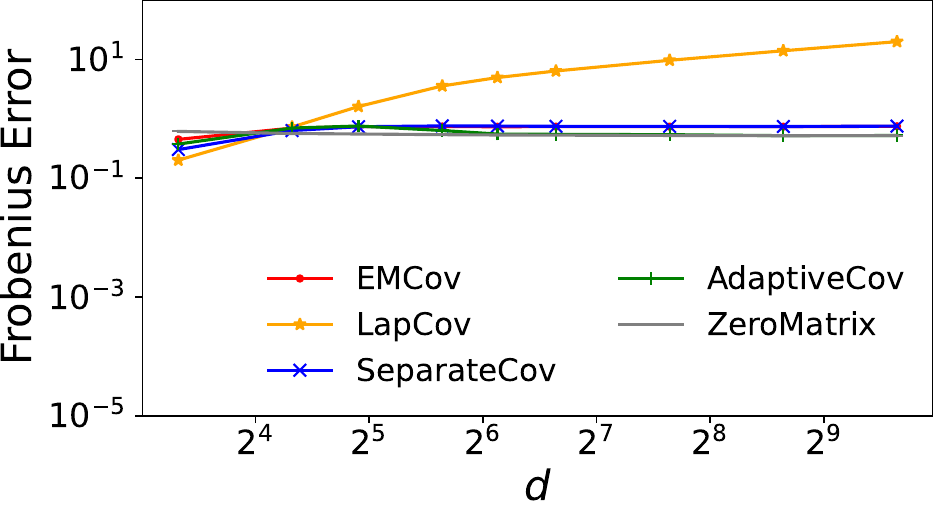}
            \vskip -.08in
            \subcaption{$n=1000$.}
         \end{subfigure}
         \hfill
         \begin{subfigure}[t]{0.3\textwidth}
            \centering
            \includegraphics[width=\textwidth]{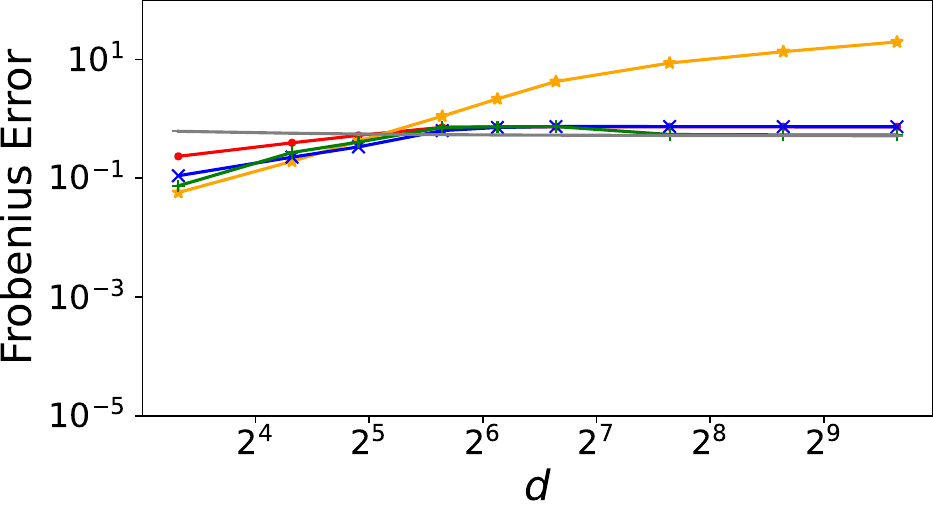}
            \vskip -.08in
            \subcaption{$n=4000$.}
         \end{subfigure}
         \hfill
         %\qquad
         \begin{subfigure}[t]{0.3\textwidth}
            \centering
            \includegraphics[width=\textwidth]{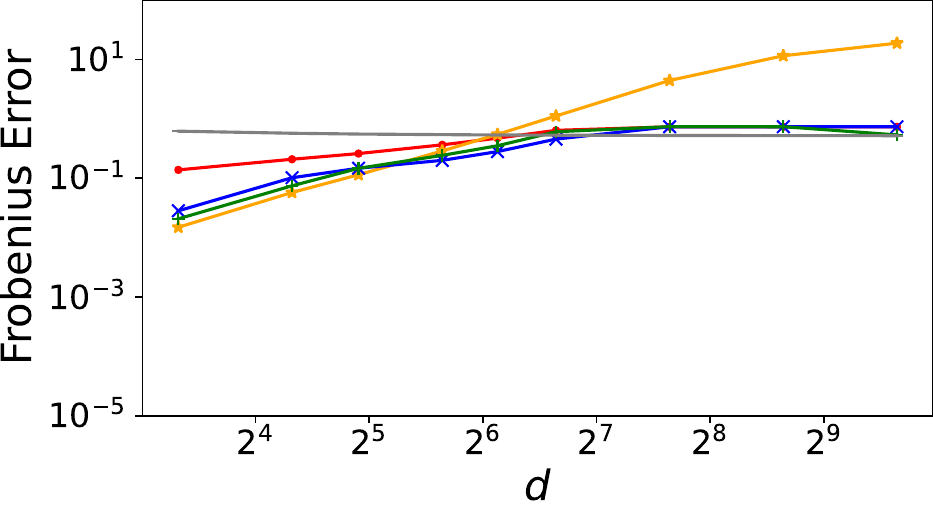}
            \vskip -.08in
            \subcaption{$n=16000$.}
         \end{subfigure}
                 \begin{subfigure}[t]{0.3\textwidth}
            \centering
            \includegraphics[width=\textwidth]{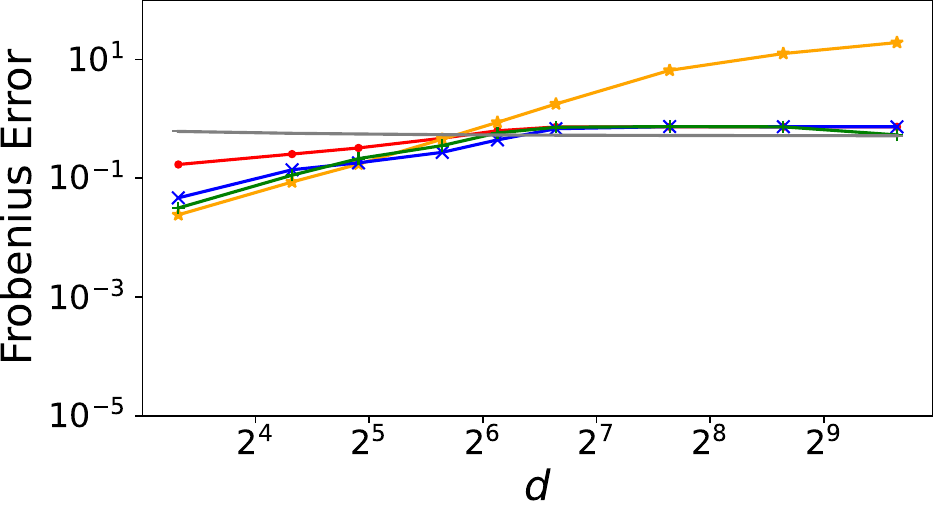}
            \vskip -.08in
            \subcaption{$n=10000$.}
         \end{subfigure}
         \hfill
         \begin{subfigure}[t]{0.3\textwidth}
            \centering
            \includegraphics[width=\textwidth]{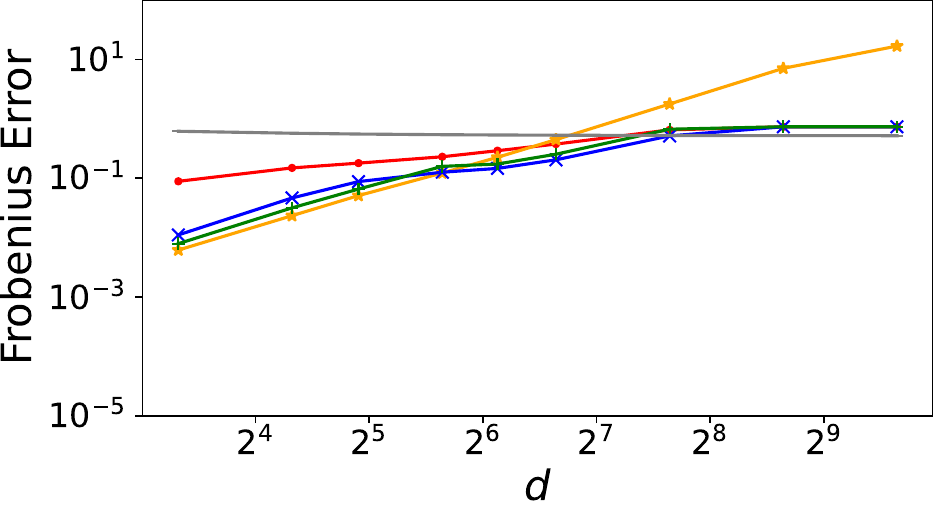}
            \vskip -.08in
            \subcaption{$n=40000$.}
         \end{subfigure}
         \hfill
         %\qquad
         \begin{subfigure}[t]{0.3\textwidth}
            \centering
            \includegraphics[width=\textwidth]{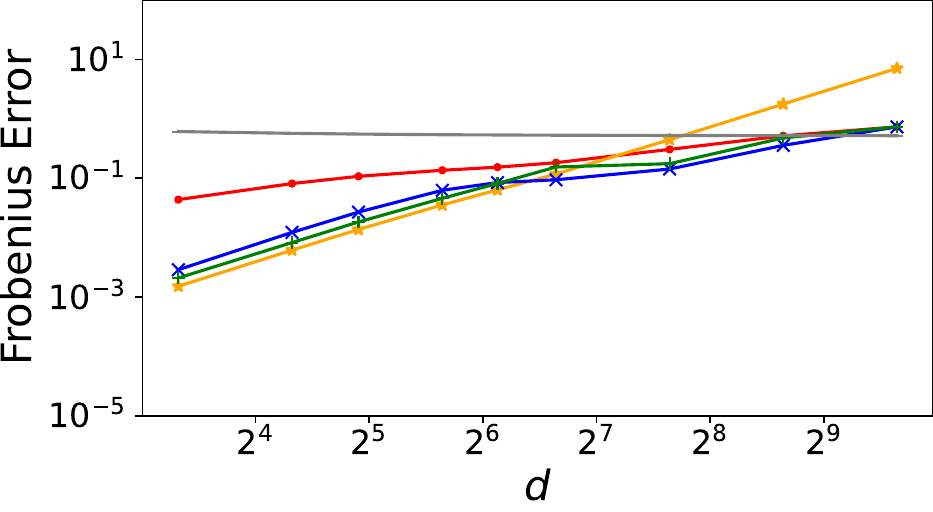}
            \vskip -.08in
            \subcaption{$n=160000$.}
         \end{subfigure}
         \vskip -.01in
         \caption{Results on synthetic datasets fixing $\mathrm{tr}=1$. }
         \label{fig:err_d_N1_pure}
    \vskip -.01in
    \end{figure}

\begin{figure}[htbp]
    %\vskip -.01in
         \centering
         \begin{subfigure}[t]{0.3\textwidth}
            \centering
            \includegraphics[width=\textwidth]{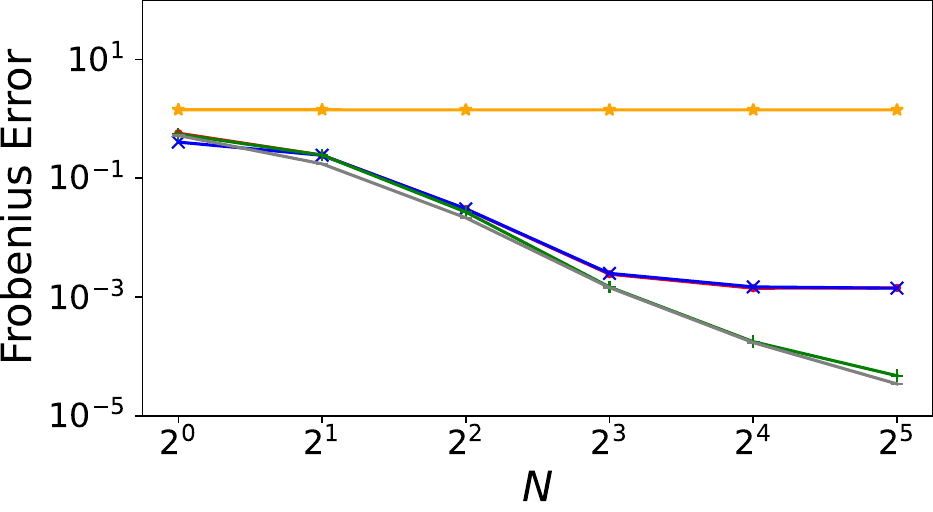}
            \vskip -.08in
            \subcaption{$n=50000$.}
         \end{subfigure}
         \hfill
         \begin{subfigure}[t]{0.3\textwidth}
            \centering
            \includegraphics[width=\textwidth]{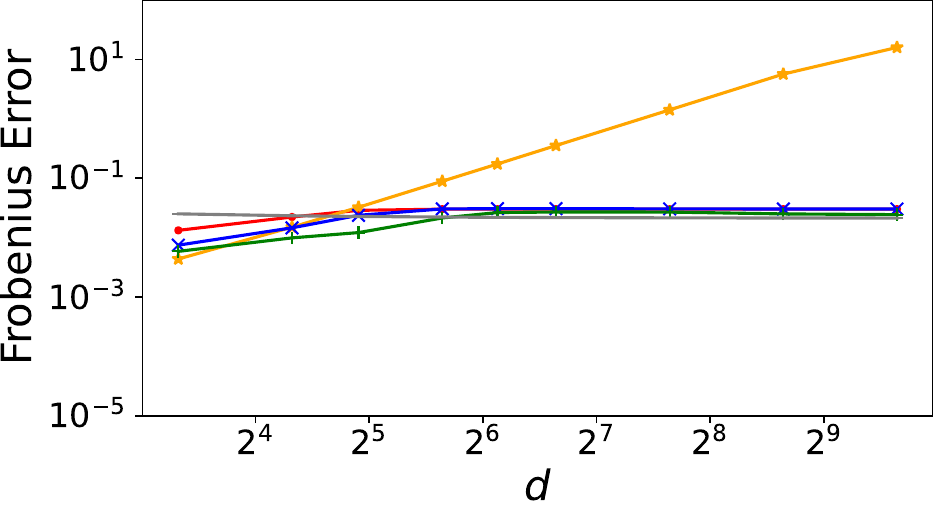}
            \vskip -.08in
            \subcaption{$n=50000$.}
         \end{subfigure}
         \hfill
         %\qquad
         \begin{subfigure}[t]{0.3\textwidth}
            \centering
            \includegraphics[width=\textwidth]{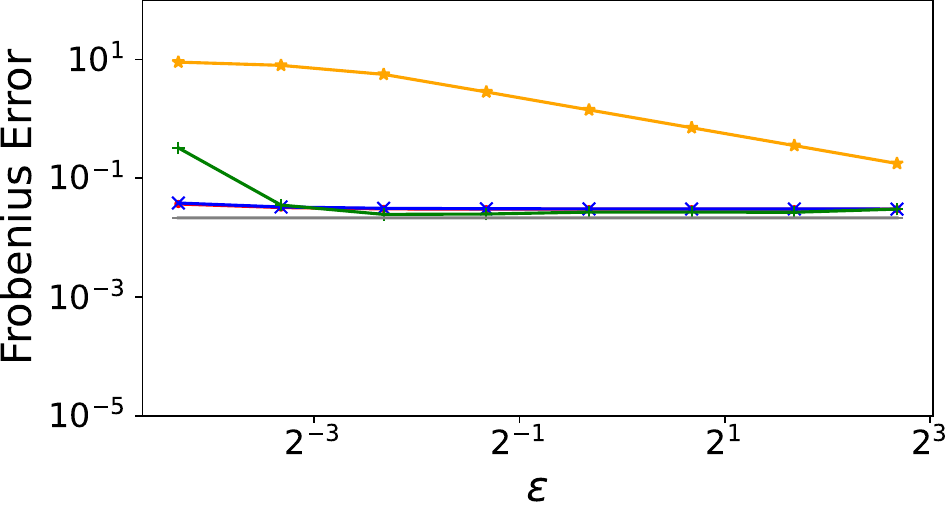}
            \vskip -.08in
            \subcaption{$n=50000$.}
         \end{subfigure}
                 \begin{subfigure}[t]{0.3\textwidth}
            \centering
            \includegraphics[width=\textwidth]{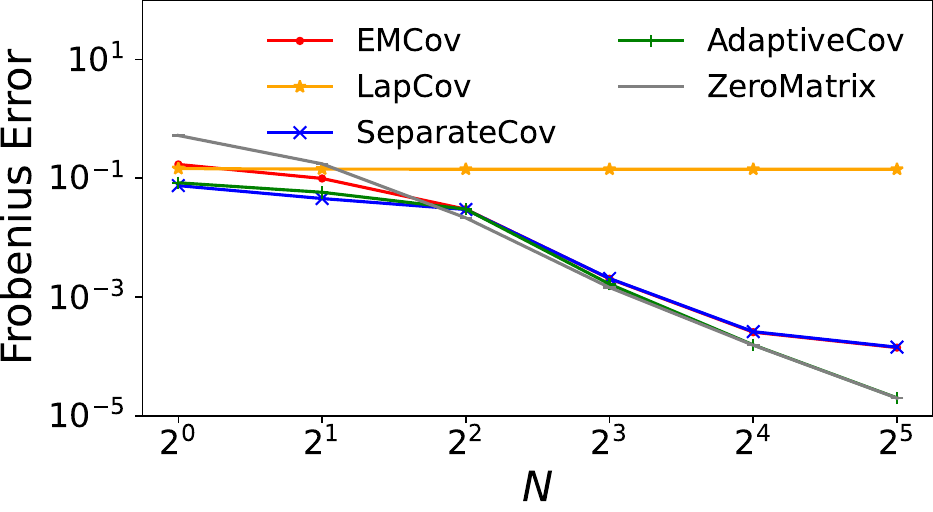}
            \vskip -.08in
            \subcaption{$n=500000$.}
         \end{subfigure}
         \hfill
         \begin{subfigure}[t]{0.3\textwidth}
            \centering
            \includegraphics[width=\textwidth]{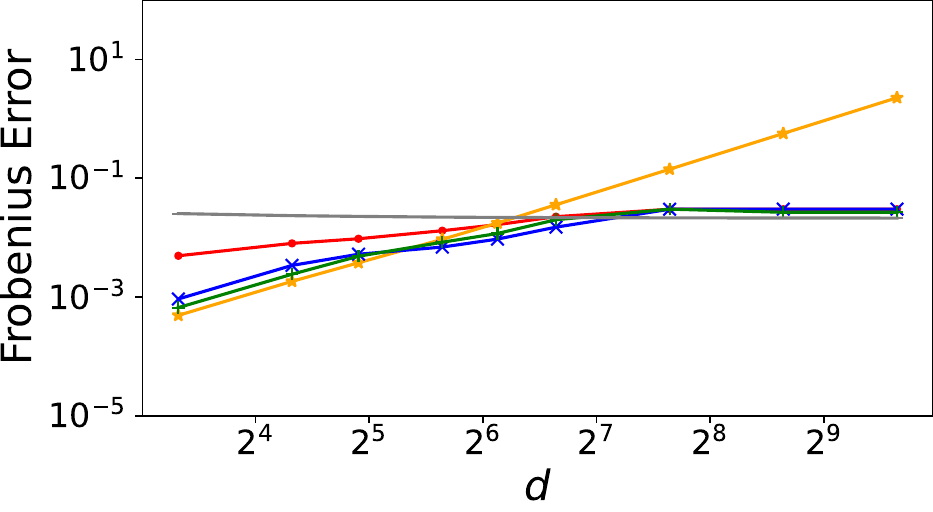}
            \vskip -.08in
            \subcaption{$n=500000$.}
         \end{subfigure}
         \hfill
         %\qquad
         \begin{subfigure}[t]{0.3\textwidth}
            \centering
            \includegraphics[width=\textwidth]{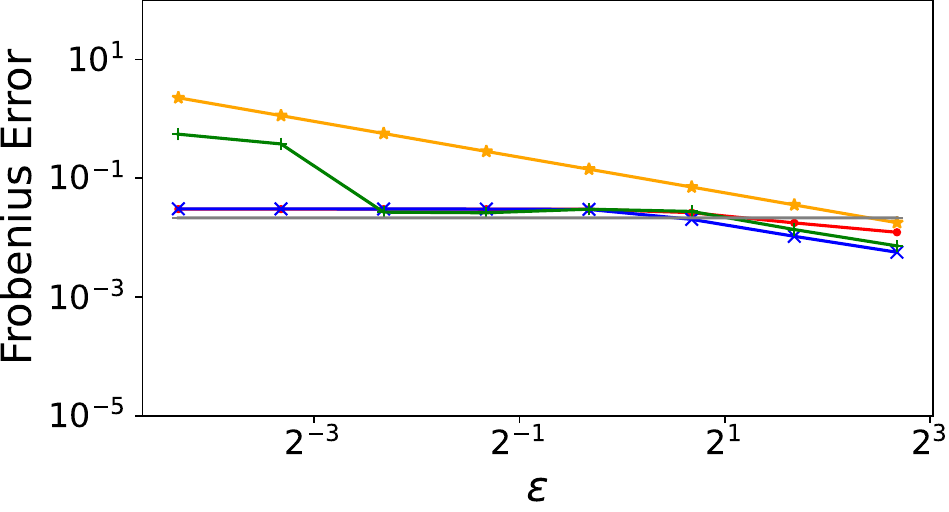}
            \vskip -.08in
            \subcaption{$n=500000$.}
         \end{subfigure}
         \vskip -.01in
         \caption{Results on synthetic datasets as $d,N$ or $\varepsilon$ varies. }
         \label{fig:err_pure}
    \vskip -.01in
    \end{figure}

\section*{Acknowledgements}
This work has been supported by HKRGC under grants 16201819,
16205420, and 16205422.  We would like to thank Aleksandar Nikolov for helpful discussions on the projection mechanism.
%and Gautam Kamath for some comments on the paper.

\bibliographystyle{alpha}
\bibliography{ref}

\appendix

\section{zCDP error bound for $\mathrm{EMCov}$}
\label{sec:approximate_DP_nips_2019}

The state-of-the-art trace-sensitive algorithm \cite{amin2019differentially} has error $\tilde{O}\left(\sqrt{d \sum_{i=1}^d (\lambda_i/\varepsilon_i)} + \sqrt{d})\right)$.  Under pure-DP, setting $\varepsilon_i=\varepsilon/d$ for all $i$ turns the bound into  $\tilde{O}(d\sqrt{\mathrm{tr}} + \sqrt{d})$.  For zCDP, we can set $\varepsilon_i = \tilde{O}(\sqrt{\rho/d})$, which implies $\frac{\rho}{d}$-zCDP. Then the whole process follows $\rho$-zCDP by composition. Alternatively, one could adaptively set $\varepsilon_i$ according to a privatized $\lambda_i$ as mentioned in \cite{amin2019differentially}, but we do not consider this optimization as it does not yield a better trace-sensitive bound.  Also, as observed from their experiments \cite{amin2019differentially}, this optimization does not have a big effect on the performance of their algorithm. 

The algorithm in \cite{amin2019differentially} estimates $\mathbf{\Sigma}= \X\X^T$.  The ${1\over n}$ normalization factor scales down both the error and $\tr$ by a factor of $n$. This yields the stated bounds in Section \ref{sec:traceintro} and \ref{sec:pure_dp}.

\section{The lower bound of mean estimation from statistical setting to empirical setting}
\label{sec:lower_bound_statistical_to_empirical}

\cite{KamathLSU19} considers the statistical setting, where each $\x_i\sim\mathcal{P}$ and $\mathcal{P}$ is a distribution over $\{-1,1\}^d$ and show the lower bound of $(\varepsilon,\delta)$-DP for $\delta\leq \frac{\varepsilon}{64n}$ is $\Omega(\frac{d}{\varepsilon n})$ (see Lemma 6.2 in that paper).  Since zCDP implies approximate-DP and each instance under statistical setting is naturally a case under empirical setting, their lower bound implies the one of zCDP under empirical setting. Besides, under their setting, data have the $\ell_2$ norm equal to $\sqrt{d}$. Scaling the result by $\sqrt{d}$ and we get the lower bound under our setting.

% \cite{kamath2022new} considers the statistical setting where each $\x_i\sim\mathcal{N}\left(0,\Sigma_{\mathbb{D}}\right)$ with $\Omega(1)\cdot\mathbf{I}\leq\Sigma_{\mathbb{D}}\leq O(1)\cdot\mathbf{I}$ independently. In this case, we have $\|\x_i\|_2 = \tilde{\Theta}(\sqrt{d})$ for all $i$ with high probability. Under this setting, they show that , for any error $\alpha = O(1)$, it must hold that $n> \widetilde{\Omega}(\frac{d^2}{\alpha})$. The condition on $\alpha$ translates into $d < O(\sqrt{n})$.  And after normalizing the data so that they have unit $\ell_2$ norm, the lower bound on the error becomes $\widetilde{\Omega}(d/n)$.

\section{Applying the projection mechanism to covariance estimation}
\label{sec:projection}

Similar as $\mathrm{SeparateCov}$, the projection mechanism first constructs $\widetilde{\bSigma}_{\mathrm{Gau}}$ by adding Gaussian noise on $\bSigma$. Let $K$ be the set of covariance matrices for all possible datasets $\X$, i.e.,
\[K = \{\bSigma(\X):\X\in(\mathcal{B}_d)^n\}.\]
The projection mechanism return a $\overline{\bSigma}$ such that, 
\begin{equation}
\label{eq:K}
\overline{\bSigma} = \argmin_{\overline{\bSigma}\in K} \|\overline{\bSigma}-\widetilde{\bSigma}_{\mathrm{Gau}}\|_F.
\end{equation}

For the privacy, projection mechanism does some post-processing on $\widetilde{\bSigma}_{\mathrm{Gau}}$ thus naturally preserves DP. For the utility, by Lemma 1 of~\cite{nikolov2013geometry}, $\|\bSigma-\overline{\bSigma}\|_F$ is bounded by maximum eigenvalue of noise matrix added in $\widetilde{\bSigma}_{\mathrm{Gau}}$, which is further bounded by $\tilde{O}(\sqrt{d})$ by Lemma~\ref{lm:upper_bound_noise_SGW}. For the efficiency, after using obtaining $\widetilde{\bSigma}_{\mathrm{Gau}}$, which requires time $O(n+d^2)$, the projection mechanism uses Frank-Wolfe algorithm to solve the optimal projection of $\widetilde{\bSigma}_{\mathrm{Gau}}$ on $K$, where there are $n$ iterations with each one requires computing the SVD of a PSD matrix and the total running time is $O(nd^3)$. Below, we show that optimal projection can be solved by only computing one SVD thus the cost can be reduced to $O(d^3)$.

\begin{lemma}
\label{lm:projection}
Let $\mathbf{B}$ be a symmetric $d\times d$ matrix with $\mathbf{B} = \mathbf{P} \mathbf{U} \mathbf{P}^T$, and $K$ is defined as (\ref{eq:K}). Then there is $\mathbf{A'} \in \mathrm{arg}\min_{\mathbf{A}\in K} \|\mathbf{B}-\mathbf{A}\|_F$ where $\mathbf{A}'=\mathbf{P} \mathbf{U}'\mathbf{P}^T$ for some diagonal matrix $\mathbf{U}'\succeq 0$.
\end{lemma}

\begin{proof}
It suffices to show for any $\mathbf{A}\in K$, we can find a such $\mathbf{A}'$ and $\|\mathbf{B}-\mathbf{A}'\|_F\leq \|\mathbf{B}-\mathbf{A}\|_F$. 

Let $\mathbf{U}' =\mathrm{diag}(\mathbf{P}^T \mathbf{A} \mathbf{P})$, then, $\mathrm{tr}(\mathbf{U}') = \mathrm{tr}(\mathbf{P}^T \mathbf{A} \mathbf{P}) = \mathrm{tr}(\mathbf{A}) \leq 1$. Also, $U'_{i,i}={P}_i^T\mathbf{A}{P}_i\leq \lambda_1(\mathbf{A})\leq 1$ and $U'_{i,i}={P}_i^T\mathbf{A}{P}_i\geq 0$ since $\mathbf{A}$ is PSD. Overall, $\mathbf{A}'\in K$.

Furthermore,
\begin{align*}
\|\mathbf{B}-\mathbf{A}\|_F^2 =& \|\mathbf{P}\mathbf{U} \mathbf{P}^T-\mathbf{A}\|_F^2 
\\
=& \|\mathbf{U}-\mathbf{P}^T\mathbf{A}\mathbf{P}\|_F^2 
\\
=& 2\sum_{i<j} (\mathbf{P}^T\mathbf{A}\mathbf{P})_{i,j}^2 + \sum_{i}\left(U_{i,i}-(\mathbf{P}^T\mathbf{A}\mathbf{P})_{i,i}\right)^2
\\
\geq & \sum_{i}\left(U_{i,i}-(\mathbf{P}^T\mathbf{A}\mathbf{P})_{i,i}\right)^2
\\
= & \sum_{i}\left(U_{i,i}-U_{i,i}'\right)^2
\\
= & \|\mathbf{B}-\mathbf{A}'\|_F^2
\end{align*}
\end{proof}

Let $\widetilde{\bSigma}_{\mathrm{Gau}} = \widetilde{P} \widetilde{\mathbf{\Lambda}} \widetilde{P}^T$ and $\widetilde{\mathbf{\Lambda}} = \mathrm{diag}(\widetilde{\Lambda}_1,\dots,\widetilde{\Lambda}_d)$.  From Lemma \ref{lm:projection}, to find $\mathrm{arg}\min_{\overline{\bSigma}\in K} \|\overline{\bSigma} - \widetilde{\bSigma}_{\mathrm{Gau}}\|_F$, it suffices to find a diagonal matrix $\mathbf{U}$ such that $0\leq U_{i,i}$, $\sum_i U_i \leq 1$ and $\sum_{i} \left(\widetilde{\Lambda}_{i,i}-U_{i,i}\right)^2$ is minimized.

\end{document}